\documentclass{lmcs}



\usepackage[utf8]{inputenc}

\mathchardef\hyphen=45 

\newcommand{\re}[1]{\xrightarrow{#1}}
\newcommand{\rp}[1]{\overset{#1}{\rightsquigarrow}}

\newcommand{\tand}{\text{ and }}
\newcommand{\tor}{\text{ or }}
\newcommand{\tin}{\text{ in }}
\newcommand{\tif}{\text{ if }}
\newcommand{\tow}{\text{ otherwise}}

\newcommand{\leb}{\overline{<}}

\newcommand{\caprod}{\overset{\leftarrow}{\otimes}}
\newcommand{\caprodPow}[1]{\overset{\leftarrow {\scriptsize #1}}{\otimes}}

\newcommand{\ap}{\overset{\leftarrow}{+}}

\newcommand{\ind}{\mathrm{ind}}
\newcommand{\out}{\mathit{out}}

\newcommand{\emptyword}{\epsilon}

\newcommand{\ASum}{\overleftarrow{\sum}}

\newcommand{\TL}{\mathrm{TL}}
\newcommand{\TW}{\mathrm{TW}}
\newcommand{\Parity}{\mathrm{Parity}}
\newcommand{\MaxParity}{\mathrm{MaxParity}}
\newcommand{\MinParity}{\mathrm{MinParity}}

\newcommand{\minLexProduct}[1]{\prod^{\mathrm{min\hyphen lex}}_{#1}}
\newcommand{\maxLexProduct}[1]{\prod^{\mathrm{max\hyphen lex}}_{#1}}

\newcommand{\eps}{\varepsilon}

\newcommand{\loopC}[1]{\tikz[baseline=(point.base)]{
	\node[] (point) {$\bullet$};
	\draw[->] (-0.07,0.075) to [out=120,in=60,looseness=4] node [anchor = south west,scale=0.7, inner sep=0.5mm] {#1}  (0.07,0.075);}}

\newcommand{\mininf}{\mathrm{mininf}}

\newcommand{\even}{\mathrm{even}}

\newcommand{\omegaBuchi}{\omega\text{-Büchi}}

\newcommand{\coBuchi}{\mathrm{coBuchi}}

\renewcommand{\ind}{\mathrm{ind}}

\newcommand{\boldclass}[3]{\ensuremath{\mathbf{#1}^{#2}_{#3}}}

\newcommand{\bsigma}[1]{\boldclass{\Sigma}{0}{#1}}
\newcommand{\bpi}[1]{\boldclass{\Pi}{0}{#1}}
\newcommand{\bdelta}[1]{\boldclass{\Delta}{0}{#1}}

\newcommand{\preimCB}[1]{\llbracket {#1} \rrbracket}
\newcommand{\gsmall}{g_\mathrm{small}}
\newcommand{\gbig}{g_\mathrm{big}}
\newcommand{\union}{\mathrm{union}}

\usepackage{logic9-thm}
\usepackage{todonotes}
\usepackage{stmaryrd}
\usepackage[shortcuts]{extdash}

\theoremstyle{plain}\newtheorem{claim}[thm]{Claim}
\theoremstyle{plain} 
    
\begin{document}

\title{Infinite lexicographic products of positional objectives} 

\author[A. Casares]{Antonio Casares}[a]
\author[P. Ohlmann]{Pierre Ohlmann}[b]
\author[M. Skrzypczak]{Michał Skrzypczak}[c]
\author[I. Walukiewicz]{Igor Walukiewicz}[d]

\address{University of Kaiserslautern-Landau}
\email{antonio.casares@rptu.de}
\address{CNRS, LIS, Université Aix-Marseille}
\email{pierre.ohlmann@lis-lab.fr}
\address{University of Warsaw}
\email{mskrzypczak@mimuw.edu.pl}
\address{CNRS, LaBRI, Université de Bordeaux}
\email{igw@labri.fr}

\thanks{Antonio Casares is partially supported by Deutsche Forschungsgemeinschaft (grant number 522843867) and European Research Council (grant number 101089343).
Part of this work was done while Casares was at the University of Warsaw, Poland, supported by the Polish National Science Centre (NCN) grant ``Polynomial finite state computation'' (2022/46/A/ST6/00072).\\
Michał Skrzypczak was supported by the National Science Centre, Poland (grant no.\@ 2021/\allowbreak41/\allowbreak B/\allowbreak ST6/\allowbreak03914).}

\begin{abstract}
This paper contributes to the study of positional determinacy of infinite duration games played on potentially infinite graphs with neutral transitions.
Recently, [Ohlmann, TheoretiCS 2023] established that positionality of prefix-independent objectives is preserved by finite lexicographic products.
We propose two different notions of infinite lexicographic products indexed by arbitrary ordinals, and extend Ohlmann's result by proving that they also preserve positionality.
In the context of one-player positionality, this extends positional determinacy results of [Grädel and Walukiewicz, Logical Methods in Computer Science 2006] to edge-labelled games and arbitrarily many priorities for both Max-Parity and Min-Parity.
Moreover, we show that the Max-Parity objectives over countable ordinals are complete for the infinite levels of the difference hierarchy over $\bsigma{2}$ and that Min-Parity is complete for the class $\bsigma{3}$.
We obtain therefore positional languages that are complete for all those levels, as well as new insights about closure under unions and neutral letters.
\end{abstract}

\maketitle

\section{Introduction}\label{sec:intro}

\subsection{Context: Positionality in games on graphs}
We consider infinite duration games played on directed
graphs whose edges are coloured with labels from a set of
colours $C$, with a specified objective $W \subseteq C^\omega$.
Both the game graph and the set of colours may be infinite.
The two players, Eve and Adam, take turns in moving a token along the edges of the graph.
If the sequence of colours appearing on the produced path belongs to $W$, then Eve wins, otherwise Adam wins.
If the objective $W$ is Borel, then the game is determined, meaning one of the two
players has a winning strategy~\cite{Martin75}.

This paper is part of a long line of research aiming at understanding which Borel objectives are \emph{positional}.
A positional strategy depends only on the current vertex of the game and not on
the whole history of the play so far. 
An objective is positional for Eve (just positional\footnote{In some parts of the literature, these are called half-positional or memoryless for Eve.} in the following) if whenever Eve has a winning strategy in a
game with this objective then she has a positional one. 

Recently, Ohlmann~\cite{Ohlmann23} introduced universal graphs to the study of positional
objectives, and proved that an objective\footnote{All objectives in this paper
are prefix-independent and admit a neutral letter, as explained in Section~\ref{sec:prelims}.} $W$ is positional  if and only if it admits monotone well-ordered universal graphs (see Section~\ref{sec:prelims} for formal definitions).
This result has been generalised to characterise the memory of objectives~\cite{CO25}, and universal graphs have already proven key to decide positionality of $\omega$-regular objectives~\cite{BCRV24HalfJournal,CO24Positional} and to compute their memory~\cite{CO25memory}.
Universal graphs are the central object of study in this work.

\paragraph{Closure properties and lexicographic products.}
Some of the questions surrounding positional objectives concern their closure properties, with two major open problems in the area focusing on this aspect:  
\begin{itemize}
    \item Kopczyński's Conjecture~\cite[Conjecture~7.1]{Kop08Thesis}: Are positional objectives closed under finite and countable unions? This question has been answered positively for countable unions of  $\bsigma{2}$ objectives~\cite[Corollary~3]{OS24Sigma2} and for finite unions of their boolean combinations (including all $\omega$-regular objectives)~\cite[Theorem~12]{CO25memory} and negatively for positionality over finite graphs~\cite{Kozachinskiy24EnergyGroups}.
    \item Neutral Letter Conjecture~\cite{Ohlmann23}: Are positional objectives closed under the addition of a neutral letter, that is, a letter
    whose addition or removal from a word $w$ does not change whether $w$ belongs to
    $W$? This conjecture has important consequences for the completeness of the characterisation of positionality via universal graphs (see~\cite{Ohlmann23} or Section~\ref{sec:prelims} for details).
\end{itemize}

One of the few known closure properties of positional objectives is given by \emph{finite lexicographic products}, obtained as a corollary of the characterisation based on universal graphs~\cite{Ohlmann23}.
The lexicographic product of a sequence of objectives $(W_i\subseteq C_i^\omega)_{i<k}$ is their hierarchical combination: a word $w$ belongs to the product if $\pi_i(w)\in W_i$, where $i$ is the largest index such that $w$ contains infinitely many colours from $C_i$, and $\pi_i(w)$ is the subword obtained by restricting $w$ to these colours.
This hierarchical combination of objectives naturally appears due to the alternation of quantifiers of some logics, such as the fixpoint operators in modal $\mu$-calculus.

A paradigmatic example of such hierarchical construction is given by the parity objective
\[
    \Parity_d=\{w \in \{0,1,\dots, d\}^\omega \mid \limsup w \text{ is even}\},
\]
which enjoys a special status: it is one of the first objectives shown to be positional over arbitrary game graphs~\cite{EJ91,Mostowski91Forbidden}, a result which is central in modern proofs of Rabin's Theorem on the decidability of the logic S2S~\cite{Rabin69,GTW2002}, as well as in the algorithmic study of infinite duration games~\cite{NathBook}.
It holds that the parity objective can be obtained as a finite lexicographic product of trivial objectives, giving an alternative positionality proof and highlighting the fundamental role of lexicographic products in the theory of positionality.

\paragraph{From finite to infinite products.}
A natural goal is to extend the previous ideas to infinite sequences of objectives.
The simplest example of such a construction is the Min-Parity objective over $\omega$, defined by
\[
    \MinParity_\omega = \{w \in \omega^\omega \mid \liminf w \text{ is finite and even}\}.
\]
$\MinParity_\omega$ was first studied by Grädel and Walukiewicz~\cite{GW06}, who established its bi\=/positionality, that is, positionality for both the objective and its complement. This result was proved for vertex-labelled game graphs.
Here, the distinction between vertex-labels and edge-labels is crucial; in fact,
it is easy to see (see Figure~\ref{fig:counter_example_intro}) that
$\MinParity_\omega$ is not positional for the opponent when edge-labels are
considered.\footnote{It is easy to encode vertex-labels into edge-labels, and
therefore if an objective is edge-labelled positional then it is vertex-labelled
positional; but the converse is not true.}
Grädel and Walukiewicz~\cite{GW06} also observed that bi-positionality does not hold when considering $\MaxParity_\omega$, or when considering $\MinParity_{\alpha}$ for $\alpha>\omega$.
However, failure of bi-positionality in these cases is due to phenomena akin to Figure~\ref{fig:counter_example_intro}: playing an increasing sequence of priorities requires memory, and therefore Adam requires memory.
Positionality over edge-labelled graphs of all these objectives is neither proved nor disproved in their work.

\begin{figure}[h]
\begin{center}
\includegraphics[width=0.19\linewidth]{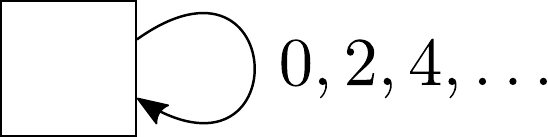}
\end{center}
\caption{An edge-labelled game controlled by Adam where he requires non-positional strategies to ensure that $\MinParity_\omega$ is not met.}\label{fig:counter_example_intro}
\end{figure}

\paragraph{Topological complexity.} In this paper, we are also interested in the topological complexity of the new objectives that we obtain.
One important property of the parity objectives over finitely many colours is that they are Wadge-complete for the finite levels of the difference hierarchy over $\bsigma{2}$~\cite{mskrzypczak_colorings}.
The difference hierarchy consists of $\omega_1$ many levels of classes of sets, which all lie below $\bdelta{3}$. The main interest of this hierarchy is that it spans the whole class $\bdelta{3}$ (see for instance~\cite[Theorem~22.27]{Kechris1995}).
However, to the best of our knowledge, no natural and positional languages were known for the infinite levels of this hierarchy (those between $\omega$ and $\omega_1$).

Another motivation for the study of infinite lexicographic products is the  development of tools to propose new, complex, positional objectives.
Since all $\omega$-regular objectives lie below $\bdelta{3}$, no positional objective was known above this class.\footnote{During the preparation of this manuscript, a positional $\bpi{3}$-complete objective has also been proposed~\cite{COV24Pi3}.\label{footnote:COV24Pi3}}
Indeed, an important obstacle to advance in Kopczyński's and the Neutral Letter conjectures is the lack of such tools. 
In fact, as we will see, some candidate objectives to disprove the Neutral Letter Conjecture can be described in this framework, namely $\MinParity_\omega$ and  the $\omega$-Büchi objective, defined by 
\[
    \omegaBuchi = \{w \in \omega^\omega \mid |w|_i \text{ is infinite for some } i\}.
\]
Both these objectives are $\bsigma{3}$-complete. 
Their positionality was not known prior to our work, and they are drastically altered when adding a neutral letter.

\subsection{Contributions}
We provide two ways of defining lexicographic products of families of objectives
indexed by ordinals, namely, the max\=/lexicographic product and the
min\=/lexicographic product.
We show that these operations preserve positionality (Theorems~\ref{thm:main_max_lex} and~\ref{th:main-pos-min-lex}).
The proofs rely on providing adequate constructions of well-ordered monotone
universal graphs for the two lexicographic products. 
We now discuss some further results and consequences. 


\paragraph{Topological completeness results.}
We study the objective
\[
    \MaxParity_\alpha = \{w \in \alpha^\omega \mid \limsup w \text{ is odd\footnotemark }\},
\]
\footnotetext{An ordinal is odd if it rewrites as $\beta + n$, with $\beta$ either $0$ or a~limit ordinal and $n<\omega$ odd. The use of odd ordinals is crucial in this
definition, the reason being that limit ordinals are even, and should be rejected for positionality---see Remark~\ref{rem:why-odd}.}%
for countable ordinals $\alpha$.
Extending the results of~\cite{mskrzypczak_colorings}, we prove completeness of $\MaxParity_\alpha$ for the corresponding level in the
difference hierarchy over~$\bsigma{2}$ (Theorem~\ref{thm:difference-complete}).
So, for infinitely many levels of the difference hierarchy spanning the whole $\bdelta{3}$, we obtain natural positional objectives complete for these levels. 
We believe that defining such a~class of complete objectives which are positional and admit a simple universal graph (see Section~\ref{sec:max-lex}) is key to achieving a complete understanding of positionality within $\bdelta{3}$, which is still elusive.

On the other hand, min-lexicographic products of trivial objectives can go beyond $\bdelta{3}$.
This is the case of $\omega$-Büchi and the $\MinParity_\alpha$ objectives, which are Wadge-complete for $\bsigma{3}$ for infinite $\alpha$ (Theorem~\ref{thm:sigma3-complete}). As far as we are aware, these are the first known positional objectives in this class.
This gives a first step into the possibility of exploring positionality beyond $\bdelta{3}$.

\paragraph{Closure under addition of neutral letters for some objectives.}
If an objective admits a well-ordered monotone universal graph, then it is not
only positional, but its extension with a neutral letter is positional
too~\cite{Ohlmann23} (conversely, the restriction of a positional objective to a
subset of colours always remains positional). 
Therefore, all positionality results presented in this paper hold for both the objectives and their extensions with neutral letters. 
This is in particular the case for $\omega$-Büchi and $\MinParity_\alpha$ (for
any ordinal $\alpha$), but these two conditions were up to this date the best
potential candidates to disprove the Neutral Letter Conjecture.
Thus, our results suggest that adding neutral letters may preserve positionality in general.



\paragraph{Locally finite memory.} Casares and Ohlmann~\cite{CO25} recently proposed to study objectives~$W$ with locally finite memory, meaning that in any game with objective $W$, if Eve has a~winning strategy then she has a winning strategy that only uses finitely many memory states for each game vertex.
They proved that objectives admitting well-monotone universal graphs which are well-partial-orders (wpo) have locally finite memory, and that this class (which broadly generalises positional objectives or finite memory objectives) is closed under finite intersections~\cite[Corollary~6.11]{CO25}.
Our construction can also be applied to well-monotone graphs which are wpo's,
which proves that this class of objectives is also closed under infinite (min
and max) lexicographic products.

\paragraph{Structure of the paper.}
We first recall the necessary definitions, including finite lexicographic products, in Section~\ref{sec:prelims}.
Then we present max-lexicographic products in Section~\ref{sec:max-lex} and (the more complex) min-lexicographic products in Section~\ref{sec:min-lex}.

\section{Preliminaries and finite lexicographic products}\label{sec:prelims}

\paragraph{Graphs.} In this paper, graphs are directed, edge-coloured, typically infinite, and may have sinks (vertices with no outgoing edges).
Formally, a \emph{$C$-graph} $G$, where $C$ is an arbitrary set of colours, is given by
a set of vertices $V(G)$ and a set of edges $E(G) \subseteq V(G) \times C \times
V(G)$. 
We will usually denote edges as $v \re c v'$.
A \emph{path} in a graph $G$ is a sequence of edges in $E(G)$ with matching endpoints, 
\[
    v_0 \re {c_0} v_1 \re{c_1} v_2 \re{c_2} \dots
\]
A path can be finite (even empty) or infinite.
We say that it is a path \emph{from} $v_0$, and, if it is finite and contains $i$ edges, \emph{towards} $v_{i}$.
The finite or infinite word $c_0c_1 \dots$ is called the \emph{label} of the path.
When there is a path from $v$ towards $v'$, we say that $v'$ is \emph{reachable} from $v$ in $G$.

A \emph{morphism} between two $C$-graphs $G$ and $H$ is a map $\phi:V(G) \to V(H)$ such that for every edge $v \re c v' \in E(G)$, it holds that $\phi(v) \re c \phi(v')$ is an edge in $E(H)$.
We write $G \to H$ if we just want to state the existence of such a morphism.
Note that $\phi$ need not be injective or surjective.
Morphisms compose into morphisms.
A \emph{subgraph} $G'$ of $G$ is obtained from $G$ by removing vertices and edges of $G$; note that in that case $G' \to G$.
If $R \subseteq V(G)$, the subgraph of $G$ obtained by removing all vertices in $V(G) \setminus R$ and keeping all edges between vertices in $R$ is called the \emph{restriction of $G$ to $R$}.
Given a vertex $v \in V(G)$, we let $G[v]$ denote the restriction of $G$ to vertices reachable from $v$.
The size of a graph $G$ is the cardinal $|V(G)|$.
If $v \re c v' \in E(G)$ then we say that $v$ is a~\emph{$c$-predecessor} of $v'$ and $v'$ is a~\emph{$c$-successor} of $v$.
An edge $v \re c v$ is called a~\emph{loop} around $v$.

\paragraph{Ordered graphs, monotonicity, and directed sums.} We will often consider \emph{ordered graphs}, which are pairs $(G,\geq)$ where $\geq$ is a (partial) order over $V(G)$.
By a slight abuse of notation, we sometimes omit $\geq$ from the notation of an ordered graph.
We will pay special attention to graphs in which the order satisfies some of the following properties:
\begin{itemize}
	\item is total,
	\item is well-founded (any non-empty subset has a minimal
	element),
	\item is a well-order (total and well-founded),
	\item is a well-partial order (is well-founded and contains no infinite antichain).
\end{itemize}

An ordered $C$-graph $(G,\geq)$ is said to be \emph{monotone} if for all $u,v,u',v' \in V(G)$ and $c \in C$ we have
\[
    u \geq v \re c v' \geq u' \tin G \quad \implies \quad u \re c u'.
\]
In proofs, it is sometimes convenient to break monotonicity into left-monotonicity ($u \geq v \re c v' \implies u \re c v'$) and right-monotonicity ($v \re c v' \geq u' \implies v \re c u'$); it is a direct check that monotonicity is equivalent to their conjunction.

Given a family of (ordered) $C$-graphs $(G_\mu)_{\mu < \alpha}$, where $\alpha$ is an
arbitrary ordinal, we define their \emph{directed sum} $\ASum_{\m<\alpha}G_\m$ to be 
the disjoint union of the $G_\mu$'s with added edges from each $G_\m$ to all the
graphs before it in the sequence.
Formally, $G=\ASum_{\m<\alpha}G_\m$  is a graph with vertices $V(G) = \bigsqcup_{\mu <
\alpha} V(G_\mu) \times \{\mu\}$, and edges
\begin{equation*}
  (v,\mu) \re c (v',\mu')\in E(G)\quad \tif \m>\m', \tor \; [\m=\m' \tand v \re c v' \in E(G_\mu)].
\end{equation*}
Note that for all $\mu$, it holds that $G_\mu \to G$.
If the $G_\mu$'s are ordered, then so is their sum, by the order
\[
    (v,\mu) \geq (v',\mu') \quad \tif \text{$\mu > \mu'$ or [$\mu=\mu'$ and $v \geq v' \tin G_\mu$]}.
\]
Observe that for any property $X$ among being totally ordered, well-founded, or monotone, if the $G_\mu$'s have property $X$ then so does their directed sum.
By a slight abuse, when the $V(G_\mu)$'s are disjoint sets, we define for convenience the sum over $\bigsqcup_{\mu < \alpha} V(G_\mu)$ instead of $\bigsqcup_{\mu < \alpha} V(G_\mu) \times \{\mu\}$.
In the case where the $G_{\mu}$'s are all equal to some (ordered) graph $G$, we denote their directed sum by $G \caprod \alpha$.

\paragraph{Objectives and universality.}
A \emph{$C$-objective} is a language\footnote{Formally, an objective is a pair $(C,W)$, where $C$ is non-empty and $W \subseteq C^\omega$. For simplicity, we just write objectives as $W$, as this does not create confusion or ambiguity.} of infinite words  $W\incl C^\w$.
In this paper, we will always only consider prefix-independent objectives,
meaning those such that $cW = W$ for all $c\in C$ (equivalently, membership of a word in $W$ is not affected by addition or removal of a finite prefix).

We say that a $C$-graph \emph{$G$ satisfies an objective $W$} if the label of any of its infinite paths belongs to $W$.
In particular, a graph without infinite paths satisfies any objective. 

Given a cardinal $\kappa$, we say that a $C$-graph $U$ is \emph{$\kappa$-universal for $W$} if
\begin{itemize}
\item $U$ satisfies $W$; and
\item every graph $G$ of size $<\k$ and satisfying $W$ admits a morphism to $U$, i.e.~$G \to U$.
\end{itemize}
We also sometimes say that such a graph is $(\kappa,W)$-universal.


Next lemma indicates that, for prefix-independent objectives, it is in fact sufficient to find (monotone, well-ordered) universal graphs with weaker requirements.

\begin{lem}[{\cite[Lemma~4.5]{Ohlmann23}}]\label{lemma:almost-universal}
	Let $W$ be a prefix-independent $C$-objective, $\kappa$ a cardinal, and $U$ be a $C$-graph such that:
	\begin{itemize}
		\item $U$ satisfies $W$; and
		\item for all graphs $G$ which satisfy $W$ and have size $<\kappa$, there is a vertex $v \in V(G)$ such that $G[v] \to U$.
	\end{itemize}
	Then $U \caprod \kappa$ is $\kappa$-universal for $W$.
\end{lem}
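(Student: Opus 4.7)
The plan is to verify the two conditions of $\kappa$-universality for $U \caprod \kappa$ separately. For the first, that $U \caprod \kappa$ itself satisfies $W$, I would observe that along any infinite path $(v_0,\mu_0) \re{c_0} (v_1,\mu_1) \re{c_1} \dots$ in the directed sum, the sequence of ordinal indices $\mu_0 \geq \mu_1 \geq \dots$ must eventually stabilise by well-foundedness of $\kappa$. The tail of the path therefore lies inside a single copy of $U$ and its label belongs to $W$; prefix-independence of $W$ upgrades this to the label of the full path.

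For the second condition, given a $C$-graph $G$ of size $<\kappa$ satisfying $W$, I would construct a morphism $\phi : V(G) \to V(U \caprod \kappa)$ by transfinite recursion, filling one ``layer'' of the sum at each stage. At stage $\mu$, let $G_\mu = G \setminus \bigcup_{\nu < \mu} V(G_\nu[v_\nu])$ (so $G_0 = G$). If $G_\mu$ is empty, stop; otherwise $G_\mu$ is a subgraph of $G$, hence still satisfies $W$ and has size $<\kappa$, so the hypothesis yields a vertex $v_\mu \in V(G_\mu)$ together with a morphism $\phi_\mu : G_\mu[v_\mu] \to U$. Extend $\phi$ by setting $\phi(u) = (\phi_\mu(u), \mu)$ for every $u \in V(G_\mu[v_\mu])$. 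Since at least one vertex is processed at each stage and $|V(G)| < \kappa$, the recursion halts before stage $\kappa$, and every vertex of $G$ ends up assigned to a unique layer.

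The crucial remaining task is to check that $\phi$ is a graph morphism. Consider an edge $u \re c u'$ of $G$, and let $\mu, \mu'$ denote the stages at which $u$ and $u'$ were processed. I would argue that $\mu' \leq \mu$, so that the edge is sent to an edge of $U \caprod \kappa$: if $u'$ had already been removed before stage $\mu$, then $\mu' < \mu$ and the directed-sum construction provides a free $c$-labelled edge from copy $\mu$ down to copy $\mu'$; otherwise $u' \in V(G_\mu)$, and since $u$ is reachable from $v_\mu$ in $G_\mu$ and $u \re c u'$ is an edge of $G_\mu$, so is $u'$, placing $u'$ in the same layer $\mu$, where $\phi_\mu$ being a morphism into $U$ preserves the edge.

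I expect the main obstacle to be precisely this monotonicity of layer indices along edges. It hinges on the fact that $G_\mu[v_\mu]$ is defined by forward-reachability in $G_\mu$: once a vertex is claimed by stage $\mu$, any of its successors that survived to stage $\mu$ is automatically claimed at the same stage, and those that did not survive belong to strictly earlier layers. Once this observation is in hand, the morphism property follows and the lemma is established.
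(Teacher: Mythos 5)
Your proof is correct and follows essentially the same route as the paper's: the same transfinite peeling of $G$ into layers $G_\mu[v_\mu]$ mapped into successive copies of $U$, with the same reachability argument showing that edges never go to a strictly later layer. The only difference is cosmetic — you also spell out why $U \caprod \kappa$ satisfies $W$ (stabilisation of the non-increasing layer indices plus prefix-independence), a step the paper leaves implicit.
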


Following~\cite{Ohlmann23}, we sometimes say that a graph $U$ as above is almost $(\kappa,W)$-universal.

\begin{proof}
Let $U$ be such a graph and let $G$ be a graph $<\kappa$ satisfying $W$; we should prove that $G \to U \caprod \kappa$.
By hypothesis, there is a vertex $v_0$ such that $G[v_0] \to U$.

Now let $\lambda$ be any ordinal and assume constructed vertices $v_\mu$ for $\mu<\lambda$.
Then we let $G_\lambda = G \setminus \bigcup_{\mu<\lambda} G[v_\mu]$ be the restriction of $G$ to vertices which are not reachable from any of the $v_\mu$'s.
Since $|G_\lambda| < \kappa$, there is $v_\lambda$ such that $G_\lambda[v_\lambda] \to U$.

Now note that $G$ is the disjoint union of the $G_\lambda[v_\lambda]$'s, and moreover $G_\lambda$ is empty if $\lambda \geq \kappa$.
Moreover, any edge in $G$ is either part of some $G_\lambda[v_\lambda]$, or goes from $G_\lambda[v_\lambda]$ to $G_{\lambda'}[v_{\lambda'}]$ for some $\lambda>\lambda'$.
We conclude that $G \to U \caprod \kappa$ by mapping $G_\lambda[v_\lambda]$ in the $\lambda$-th copy of $U$ for each $\lambda$.
\end{proof}

\paragraph{Universal graphs for the study of positionality and memory in games.}
We introduce definitions of games and positionality for completeness. However, in all the paper we will study positionality through the lenses of universal graphs (by using Theorem~\ref{th:positional=UnivGraphs} below), and will not directly use the game-based definition of positionality.

A \emph{$W$-game} is given by a sinkless $C\cup\{\eps\}$-graph, together with a
(prefix-independent) $C$-objective $W$ and a partition of the 
vertices into those controlled by one player, called Eve, and her adversary,
called Adam. 
Players play by moving a token in the graph for an~infinite amount of time; the
player controlling the current vertex choses which edge to take. 
The result of a play is an infinite path in the game graph.
Who wins the play is determined by the projection of the labels on
$C$: Eve wins if this projection is finite or belongs to $W$, otherwise Adam is the winner. 
This definition makes $\e$ a neutral letter.
A strategy (for Eve) is a function assigning to each finite path ending in a vertex controlled by Eve the next edge she should take.
Such a strategy is winning from a vertex $v$ if all infinite paths from $v$ following the strategy are winning. 

A strategy is \emph{positional} if it can be described by a function from the set of Eve's vertices to edges; the strategy always points to the same outgoing edge, independently of the past of the play.
An objective $W$ is \emph{positional}\footnote{It is not known whether the
presence of a neutral colour $\eps$ affects positionality.
Theorem~\ref{th:positional=UnivGraphs} concerns positionality in the presence of
a neutral colour due to the way we have defined games here.} if for every $W$-game, Eve has a positional strategy $\sigma$
such that if she has a winning strategy from a vertex $v$, she wins from $v$ using strategy
$\sigma$. 

\begin{thm}[{\cite[Theorem~3.1]{Ohlmann23}}]\label{th:positional=UnivGraphs}
	A prefix-independent objective $W$ is positional if and only if for every cardinal $\kappa$ there exists a well-ordered monotone $\kappa$-universal graph for~$W$.
\end{thm}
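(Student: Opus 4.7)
The plan is to prove the two directions separately, using Lemma~\ref{lemma:almost-universal} to ease the universality requirement.

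For the direction ``universal $\Rightarrow$ positional'', take a $W$-game $G$ and fix a $\kappa$-universal well-ordered monotone graph $U$ for $W$, where $\kappa$ is chosen strictly larger than any unfolding of Eve's winning strategies (e.g., larger than $2^{|V(G)|}$). For each vertex $v$ in Eve's winning region, the unfolding of any winning strategy of Eve from $v$ is a $C$-graph of size below $\kappa$ that satisfies $W$, hence admits a morphism to $U$ mapping its root (a copy of $v$) to some element of $V(U)$. Define $\phi(v) \in V(U)$ as the minimum, along the well-order on $U$, of all such images achievable from $v$. I would then define Eve's positional strategy on $G$ by picking, from each Eve-controlled vertex $v$, the outgoing edge $v \re c v'$ minimizing $\phi(v')$. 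Right-monotonicity of $U$ then implies that for any play $v_0 \re{c_0} v_1 \re{c_1} v_2 \cdots$ consistent with this strategy, the sequence $\phi(v_0), \phi(v_1), \phi(v_2), \ldots$ is a path in $U$ with the same color labels; since $U$ satisfies $W$, the play's label lies in $W$, so the strategy is winning.

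For the direction ``positional $\Rightarrow$ universal'', fix a cardinal $\kappa$ and consider the directed sum $D = \ASum_{\mu < \lambda} H_\mu$, where $(H_\mu)_{\mu < \lambda}$ enumerates representatives of all $C$-graphs of size below $\kappa$ that satisfy $W$. The sum $D$ is well-ordered and satisfies $W$: any infinite path has a non-increasing sequence of $\mu$-components in the well-order $\lambda$, so its tail lies in a single $H_\mu$, and prefix-independence concludes. Furthermore, $D$ is $\kappa$-universal by construction. The missing property is monotonicity. I would construct the universal graph $U$ as a careful monotone closure of $D$, adding edges in a transfinite process along the well-order of $V(D)$ so as to achieve both left- and right-monotonicity. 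The critical step is to argue that this closure does not introduce infinite paths outside $W$, which is where positionality intervenes: by considering an auxiliary $W$-game played over the intermediate graphs and extracting a positional winning strategy for Eve, one can guide the addition of monotonicity-closing edges so that the new infinite paths remain safe, possibly invoking Lemma~\ref{lemma:almost-universal} to reduce the goal to producing an almost-universal graph rather than a fully universal one.

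The main obstacle I anticipate is in the forward direction: controlling the monotone closure so as not to break the $W$-satisfaction. A priori, adding edges to achieve monotonicity could shortcut across the graph and combine with existing edges into infinite walks whose label lies outside $W$; translating the abstract positionality hypothesis into a concrete recipe for safely adding edges is the technical crux of the proof, and is the reason why the implication truly requires positionality rather than mere determinacy of some related game. In the reverse direction, the subtler point is verifying that the ``argmin'' positional strategy is winning, but this follows routinely from well-foundedness (for the minimum to exist) and from right-monotonicity (propagating good values step by step along the play), as sketched above.
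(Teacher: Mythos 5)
This statement is quoted from \cite{Ohlmann23} and is not proved in the present paper, so I can only measure your sketch against the known proof. Your ``universal $\Rightarrow$ positional'' direction follows the standard route (map a winning strategy's unfolding into $U$, take minima along the well-order, read off a positional strategy), but the strategy you extract is wrong as stated: choosing at an Eve vertex $v$ the outgoing edge $v \re{c} v'$ that \emph{minimizes} $\phi(v')$ ignores the colour $c$. What the argument actually guarantees is that the edge $v \re{c^*} v^*$ prescribed by a strategy witnessing $\phi(v)$ satisfies $\phi(v) \re{c^*} \phi(v^*) \in E(U)$ (via right-monotonicity, since the morphism sends the root's successor to some $u^* \geq \phi(v^*)$); an unrelated edge $v \re{c} v'$ with smaller $\phi(v')$ need not admit any $c$-edge from $\phi(v)$ in $U$, and $v'$ need not even lie in the winning region, so $\phi(v')$ may be undefined. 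The fix is local --- select an edge for which the corresponding $U$-edge is certified to exist --- but the argmin recipe does not yield a winning strategy. You also silently assume a uniform winning strategy on the whole winning region and do not treat the neutral letter $\eps$ present in the paper's definition of games; both points need a word.

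The ``positional $\Rightarrow$ universal'' direction is the genuinely hard implication, and your proposal does not contain a proof of it: ``a careful monotone closure of $D$ guided by a positional strategy in an auxiliary game'' names the difficulty rather than resolving it. You give neither the auxiliary game nor the closure procedure, and the one concrete thing you do assert --- that one can monotonically close the directed sum of all small graphs satisfying $W$ while preserving satisfaction of $W$ --- is precisely what fails in general: monotone closure is taken with respect to an order, and the whole content of the theorem is that positionality lets one \emph{construct} a suitable well-order. Ohlmann's argument does not close $D$ under monotonicity; it builds a linear (pre)order on the vertices of a graph assembled from all small graphs satisfying $W$ by analysing a positional winning strategy in a game in which Adam may use the neutral letter to jump between components, and then \emph{defines} the edges of the universal graph from that order, finally invoking Lemma~\ref{lemma:almost-universal} to upgrade almost-universality. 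Your sketch would need, at a minimum, a precise definition of the auxiliary game, an explanation of why Eve wins it (this is where $G$ satisfying $W$ and prefix-independence enter), and a proof that the order induced by a positional strategy makes the resulting graph monotone, well-ordered, and still satisfying $W$. As it stands, this direction is a gap, not a proof.
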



In the following, we will use the term ``positional objective'' as a synonym of an objective admitting well-ordered monotone $\kappa$-universal graphs for all $\kappa$.
More generally, we say that a~prefix-independent objective $W$ has wpo-monotone
graphs if for every cardinal $\kappa$, there exists a well-partially ordered monotone
$\kappa$-universal graph for~$W$.
Such objectives are interesting because they have locally finite memory, are closed under finite intersections, and generalise $\omega$-regular objectives, as shown in~\cite{CO25}.

\paragraph{Trivial objectives.} For a non-empty set of colours $C$, we call
$\TW_C = C^\omega$ the trivially winning objective, and $\TL_C = \emptyset
\subseteq C^\omega$ the trivially losing objective over $C$.
We will write $\TW_c$ and $\TL_c$ if $C$ is the singleton $\{c\}$.
These objectives are positional: it is easy to see that the single vertex
$C$-graph $\loopC C$ with all possible loops is $\kappa$-universal for $\TW_C$
for all $\kappa$.
For $\TL_C$, the graph of the order relation for cardinal
$\k$ is $\k$-universal.
This graph, that we denote $\bullet \caprodPow{C}  \kappa$, has as set of nodes all ordinals $<\k$ and contains an edge $\lambda \re c \lambda'$ for every $c\in C$ and ordinals $\lambda>\lambda'$.

\paragraph{Finite lexicographic products of objectives.} Let $C_0$ and $C_1$ be two disjoint sets of colours, and let $C=C_0 \cup C_1$.
Given an infinite word $w \in C^\omega$ and $i \in \{0,1\}$, we let $\pi_i(w)$ denote the (finite or infinite) word obtained by restricting $w$ to letters in $C_i$.

We then define the \emph{max-lexicographic product} of two prefix-independent
objectives $W_0 \subseteq C_0^\omega$ and $W_1  \subseteq C_1^\omega$ by 
\[
    \begin{aligned}
    W_0 \rtimes W_1 = \{w \in C^\omega \mid [\pi_1(w) \text{ is infinite and belongs to } W_1] \qquad \qquad \\ 
    \tor  [\pi_1(w) \text{ is finite and } \pi_0(w) \in W_0]\}.
    \end{aligned}
\]
Note that $W_0 \rtimes W_1$ is prefix-independent.
This operation is associative, and the min\=/lexicographic product of $p$
conditions is
\[
    W_0 \rtimes \dots \rtimes W_p = \{w \in C^\omega \mid \pi_\ell(w) \in W_\ell, \text{ where $\ell$ is maximal such that } \pi_\ell(w) \text{ is infinite}\}.
\]
Clearly, this operation is not commutative. 
We write $W_0 \ltimes W_1$ to denote $W_1\rtimes W_0$; we call it the
\emph{min-lexicographic product} of the objectives, for which more importance is given to $W_0$.
The difference will be important once we study infinite products.
We define infinite max\=/lexicographic products in the next section, and later consider (infinite)
min-lexicographic products in Section~\ref{sec:min-lex}.

In the rest of this section we discuss an associated operation of 
\emph{max-lexicographic product of two ordered graphs}
over disjoint sets colours.
Given an~ordered $C_0$-graph $(G_0,\geq_0)$ and an~ordered $C_1$-graph $(G_1,\geq_1)$,
where $C_0 \cap C_1 = \emptyset$, we define $(G_0 \rtimes
G_1,\geq)$ to be the ordered $C_0\cup C_1$-graph with vertices  $V(G_0
\ltimes G_1) = V(G_0) \times V(G_1)$ ordered by
\[
    (v_0,v_1) \geq (v_0',v_1') \quad \iff \quad v_1 >_1 v_1' \tor [v_1 = v_1' \tand (v_0 \geq v_0')]. 
\]
and whose edges are
\[
    \begin{aligned}
    E(G_0 \ltimes G_1) = 
    &\, \{(v_0,v_1) \re {c_1} (v_0, v_1') \mid c_1 \in C_1 \tand&&\!\!\!\!\!\! v_1 \re {c_1} v_1' \in E(G_1)\} \ \cup \\
    &\, \{(v_0,v_1) \re {c_0} (v_0',v_1') \mid c_0 \in C_0 \tand&&\!\!\!\!\!\! [v_1 >_1 v_1' \tor\\
    &&& (v_1 = v_1' \tand v_0\re {c_0} v_0' \in E(G_0))]\}.
    \end{aligned}
\]
Once again, it is immediate to check that, if $G_0$ and $G_1$ are well-ordered, monotone, or well-partially ordered, then so is their lexicographic product. 

Ohlmann\footnote{Formally, it was only proved for totally ordered graphs in~\cite{Ohlmann23}, but the proof for non-totally ordered graphs, presented in~\cite{CO25} for completeness, is the same.} related finite lexicographic products of  positional objectives with lexicographic products of their universal graphs as follows.

\begin{thm}[{\cite[Theorem 5.2]{Ohlmann23}}]\label{thm:universality_finite_lexico}
    Let $W_0 \subseteq C_0^\omega$, $W_1 \subseteq C_1^\omega$ be
    prefix-independent objectives with $C_0 \cap C_1 = \emptyset$. Let
    $\kappa$ be a cardinal, and assume that the graphs $U_0$ and $U_1$ are
    $\kappa$-universal for $W_0$ and $W_1$, respectively. 
    Then $U_0 \rtimes U_1$ is $\kappa$-universal for $W_0 \rtimes
    W_1$.
\end{thm}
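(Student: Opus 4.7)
The plan is to verify both defining clauses of $\kappa$-universality for the product graph $U_0 \rtimes U_1$: that it satisfies $W_0 \rtimes W_1$ (soundness), and that every $(C_0 \cup C_1)$-graph $G$ of size less than $\kappa$ satisfying $W_0 \rtimes W_1$ admits a morphism to it (universality). For soundness I would analyse an arbitrary infinite path in $U_0 \rtimes U_1$ with label $w$ and split on whether $\pi_1(w)$ is infinite or finite. If infinite, the second coordinates take genuine $c_1$-steps in $U_1$ at each $C_1$-transition and weakly decrease in the well-order $\geq_1$ at each $C_0$-transition; iterated right-monotonicity of $U_1$ collapses the weakly-decreasing stretches into a single $U_1$-path labelled $\pi_1(w)$, which lies in $W_1$ since $U_1$ satisfies $W_1$. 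If finite, then eventually all edges are $C_0$-edges and the second coordinates form a weakly decreasing sequence in a well-ordered set, hence stabilise from some index onwards; beyond that index the first coordinates trace a $U_0$-path whose label is a suffix of $\pi_0(w)$, and prefix-independence promotes its membership in $W_0$ to $\pi_0(w) \in W_0$.

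For the universality direction, given $G$ as above, I would define a morphism $\phi = (\phi_0, \phi_1) : V(G) \to V(U_0) \times V(U_1)$ by building $\phi_1$ first and then $\phi_0$ level by level. To obtain $\phi_1$, I would introduce the auxiliary $C_1$-graph $\hat G_1$ on vertex set $V(G)$ with edges $v \re{c_1} v'$ whenever some $G$-path from $v$ to $v'$ is labelled by a word in $C_0^{*} c_1$. Any infinite $\hat G_1$-path lifts to an infinite $G$-path with infinitely many $C_1$-letters whose $\pi_1$-projection coincides with the $\hat G_1$-label; since $G$ satisfies $W_0 \rtimes W_1$ this projection lies in $W_1$, so $\hat G_1$ satisfies $W_1$ and $\kappa$-universality of $U_1$ supplies a morphism $\phi_1 : \hat G_1 \to U_1$. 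Once $\phi_1$ is fixed, I would set $V_y = \phi_1^{-1}(y)$ for each $y \in V(U_1)$ and consider the $C_0$-subgraph $G_y^0$ obtained by restricting $G$ to $V_y$ and keeping only $C_0$-edges whose endpoints remain in $V_y$; infinite paths in $G_y^0$ use only $C_0$-letters, so have finite $\pi_1$ and labels in $W_0$, and $\kappa$-universality of $U_0$ yields a component $\phi_0|_{V_y} : V_y \to V(U_0)$ which I glue into a global $\phi_0$.

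The main obstacle I anticipate is the weak-decrease condition $\phi_1(v) \geq_1 \phi_1(v')$ along $C_0$-edges $v \re{c_0} v' \in E(G)$, which is essential to satisfy the $C_0$-edge clause of the product (either strict decrease of the second coordinate, or equality together with a $U_0$-edge) and which does not follow automatically from $\phi_1$ being just a morphism of $\hat G_1$. The construction of $\hat G_1$ does deliver a structural containment, since prepending the $c_0$-step turns any outgoing edge of $v'$ in $\hat G_1$ into an outgoing edge of $v$ in $\hat G_1$, so $v$ dominates $v'$ in local structure; transferring this containment into an ordering on $U_1$ is the delicate point. It should be handled by exploiting left-monotonicity of $U_1$ (larger vertices in the well-order have a superset of outgoing edges) together with well-foundedness to pick $\phi_1$ canonically, for instance as the pointwise $<_1$-minimum vertex of $U_1$ compatible with a morphism on the reachable subgraph $\hat G_1[v]$. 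This careful interplay between universality, monotonicity, and well-foundedness of $U_1$ is the technical heart of Ohlmann's construction and the bridge between the structural analysis of $G$ and the ordered structure of the universal graph.
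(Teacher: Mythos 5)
The paper does not actually prove this statement itself --- it is imported from \cite{Ohlmann23} --- so there is no in-paper proof to compare against; your argument is essentially Ohlmann's original one, and it is correct. Both halves check out: the soundness case split on whether $\pi_1(w)$ is infinite or finite is the standard one (right-monotonicity of $U_1$ to collapse the weakly decreasing stretches between $C_1$-letters, then well-foundedness and prefix-independence in the finite case), and the universality construction --- saturate paths labelled in $C_0^*c_1$ into an auxiliary $C_1$-graph $\hat G_1$, map it to $U_1$, then map each fibre $\phi_1^{-1}(y)$ with its internal $C_0$-edges into $U_0$ --- is exactly the right decomposition. The one point you flag but do not fully execute, namely forcing $\phi_1(v)\geq_1\phi_1(v')$ along $C_0$-edges of $G$, is resolved precisely as you suggest, and indeed by the same device this paper uses later in the proof of Claim~\ref{cl:10}: take a pointwise minimal morphism $\phi_1\colon\hat G_1\to U_1$ (for a well-order the pointwise infimum of all morphisms is again a morphism, by right-monotonicity), and note that since every $c$-successor of $v'$ in $\hat G_1$ is also a $c$-successor of $v$, remapping $v'$ to $\min(\phi_1(v),\phi_1(v'))$ would still yield a morphism, so minimality forces $\phi_1(v)\geq_1\phi_1(v')$. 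Two minor remarks: the hypotheses should be read as ``well-ordered (or wpo) monotone $\kappa$-universal'', since your proof genuinely needs monotonicity and well-foundedness of $U_1$; and your morphism does not keep the $U_0$-coordinate fixed along $C_1$-edges, which conflicts with the paper's literal edge definition of $G_0\rtimes G_1$ but agrees with the intended (and actually monotone) one, in which a $C_1$-edge leaves the first coordinate unconstrained.
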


As a direct consequence, we get the following closure properties.

\begin{cor}
Prefix-independent  positional objectives, as well as prefix-independent objectives having wpo-monotone graphs, are closed under finite lexicographic products.
\end{cor}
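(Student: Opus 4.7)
The plan is to combine Theorem~\ref{thm:universality_finite_lexico} with the characterisation of positionality via universal graphs (Theorem~\ref{th:positional=UnivGraphs}), and then iterate. It suffices to handle the binary max-lexicographic product $W_0 \rtimes W_1$: the min-lexicographic product is defined as $W_0 \ltimes W_1 = W_1 \rtimes W_0$, and the associativity of $\rtimes$ noted in the text lets us extend any binary statement to finite products. One may also assume without loss of generality that the colour alphabets $C_0$ and $C_1$ are disjoint, by renaming, since positionality is evidently invariant under relabelling.

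For the positional case, I would proceed as follows. Fix prefix-independent positional objectives $W_0 \subseteq C_0^\omega$ and $W_1 \subseteq C_1^\omega$ with $C_0 \cap C_1 = \emptyset$, and let $\kappa$ be an arbitrary cardinal. By Theorem~\ref{th:positional=UnivGraphs}, there exist well-ordered monotone $\kappa$-universal graphs $U_0$ for $W_0$ and $U_1$ for $W_1$. Consider their max-lexicographic product $U_0 \rtimes U_1$ of ordered graphs. As observed right after the definition of $\rtimes$ on graphs, this construction preserves both the well-order and the monotonicity properties, so $U_0 \rtimes U_1$ is again well-ordered and monotone. By Theorem~\ref{thm:universality_finite_lexico}, it is $\kappa$\=/universal for $W_0 \rtimes W_1$. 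Since $W_0 \rtimes W_1$ is prefix-independent (as already noted in the text) and $\kappa$ was arbitrary, applying Theorem~\ref{th:positional=UnivGraphs} in the reverse direction yields that $W_0 \rtimes W_1$ is positional.

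For the wpo-monotone case, the exact same argument goes through: the paper explicitly states that if $G_0$ and $G_1$ are well-partially ordered then so is $G_0 \rtimes G_1$ (the routine check is that in any infinite sequence of pairs, one extracts an infinite non-decreasing subsequence on the second coordinate using the wpo property of $V(G_1)$, then a further non-decreasing subsequence on the first coordinate using the wpo property of $V(G_0)$). Replacing ``well-ordered'' by ``well-partially ordered'' throughout the previous paragraph gives the wpo-monotone version.

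Honestly, I do not anticipate any real obstacle: the heavy lifting has been done in Theorems~\ref{th:positional=UnivGraphs} and~\ref{thm:universality_finite_lexico}, and the corollary is a bookkeeping step that only requires checking that well-orderedness (respectively wpo-ness) and monotonicity survive the lexicographic product on graphs, a fact already recorded in the preliminaries. The only two points worth being explicit about in the write-up are the reduction from finite to binary products via associativity, and the reduction from the min-lexicographic to the max-lexicographic product via the identity $\ltimes = \rtimes$ with arguments swapped.
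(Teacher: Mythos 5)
Your argument is exactly the one the paper intends: the corollary is stated as a direct consequence of Theorem~\ref{thm:universality_finite_lexico} together with the observation (recorded right after the definition of the graph product) that well-orderedness, wpo-ness and monotonicity are preserved, combined with the universal-graph characterisation of positionality. The proposal is correct and matches the paper's approach.
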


As an important example, the parity condition can be defined as the lexicographic product
\[
    \Parity_{d} = \TW_0 \rtimes \TL_1 \rtimes \TW_2 \rtimes \dots \rtimes \TL_{d-1} \rtimes \TW_d,
\]
where $d$ is an even integer.
Then, by Theorem~\ref{thm:universality_finite_lexico} and $\kappa$-universality of $\loopC{c}$ and $\bullet\caprodPow{c} \kappa$  for $\TW_c$ and $\TL_c$, respectively we get that the graph
\[
    \loopC 0 \rtimes (\bullet \caprodPow{1}  \kappa) \rtimes  \loopC 2 \rtimes  \dots \rtimes  (\bullet \caprod^{d-1} \!\!\!\!\!\!\! \kappa) \rtimes  \loopC d 
\]
is $\kappa$-universal for $\Parity_d$.
A closer examination reveals that this graph corresponds to Walukiewicz's
signatures~\cite{Walukiewicz96}, or to Emerson and Jutla's positionality
proof~\cite{EJ91} (we also refer the reader to~\cite[Chapter~5]{Ohlmann21PhD} for discussions around this construction).\\

The purpose of this paper is to introduce extensions of finite lexicographic
products to infinite families of objectives, indexed by ordinals, and then to
give corresponding constructions over universal graphs in order to generalize
Theorem~\ref{thm:universality_finite_lexico} and obtain closure properties.
As we will see, in the infinite case max-lexicographic products and
min-lexicographic products behave quite differently. 
We treat them separately in Sections~\ref{sec:max-lex} and~\ref{sec:min-lex}.

\section{Infinite max-lexicographic products and topological completeness on the difference hierarchy}\label{sec:max-lex}


\subsection{Definitions and statement of the result.}

Fix a countable ordinal $\alpha$.
We fix a~family of pairwise disjoint sets of colours $(C_\lambda)_{\lambda <
\alpha}$ and a~family of prefix-independent objectives $(W_\lambda)_{\lambda <
\alpha}$ with $W_\lambda \subseteq C_\lambda^\omega$. 
We define $C=\bigcup_{\lambda < \alpha}C_\lambda$ and $C_{<\lambda},C_{\leq \lambda},C_{> \lambda},C_{\geq \lambda}$ as expected.

For a word $w \in C^\omega$, and an ordinal $\lambda< \alpha$, we let $\pi_\lambda(w) \in C_\lambda^* \cup C_\lambda^\omega$ denote the (finite or infinite) restriction of $w$ to colours in $C_\lambda$.
For a (finite or infinite) word $w=c_0c_1 \dots \in C^* \cup C^\omega$, we also
let $\ind(w)=\lambda_0 \lambda_1 \dots \in \alpha^* \cup
\alpha^\omega$ denote the (finite or infinite) word of ordinals
such that for all $i$ we have $w_i \in C_{\lambda_i}$.
Given $\Lambda=\lambda_0 \lambda_1 \dots \in \alpha^\omega$, recall that
\[
    \limsup \Lambda = \min_{i < \omega} \sup \{\lambda_i, \lambda_{i+1}, \dots \}.
\]
We note that $\limsup \Lambda$ is always defined and $\leq \a$, as it is a min of a set of ordinals $\leq \a$.


We define the max-lexicographic product of the family $(W_{\lambda})_{\lambda <  \alpha}$ to be
\[
    \maxLexProduct{\lambda<\a} W_\lambda = \{w \in C^\omega \mid \pi_{\lambda}(w) \in W_\lambda \text{ where }\lambda = \limsup \ind(w)\}.
\]
Note that for $w$ to be in the product, it should be that in particular $\pi_\lambda(w)$ is an infinite word, where $\lambda = \limsup \ind(w)$, which means that the limsup of the indices is seen infinitely often.



Our main result in this section is the following.
\begin{thm}\label{thm:main_max_lex}
    Prefix-independent positional objectives, as well as prefix-independent objectives having wpo-monotone graphs, are closed under countable max-lexicographic products.
\end{thm}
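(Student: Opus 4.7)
My plan is to proceed by transfinite induction on the countable ordinal $\alpha$, producing for every cardinal $\kappa$ a well-ordered (resp.\ wpo) monotone $\kappa$-universal graph for $\maxLexProduct{\lambda < \alpha} W_\lambda$; by Theorem~\ref{th:positional=UnivGraphs} this yields positionality (resp.\ the wpo-monotone property). The base case is trivial, and the successor step $\alpha = \beta + 1$ is immediate from the identity
\[
    \maxLexProduct{\lambda < \beta + 1} W_\lambda \;=\; \bigl(\maxLexProduct{\lambda < \beta} W_\lambda\bigr) \rtimes W_\beta,
\]
combined with the inductive hypothesis and Theorem~\ref{thm:universality_finite_lexico}. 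The real content of the argument is therefore concentrated in the limit case, on which I focus below.

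Assume that $\alpha$ is a countable limit ordinal and fix a cardinal $\kappa$. For every $\mu < \alpha$ we have $\mu + 1 < \alpha$, so the inductive hypothesis provides a well-ordered (resp.\ wpo) monotone $\kappa$-universal graph $U^{\leq \mu}$ for $\maxLexProduct{\lambda \leq \mu} W_\lambda$ over the colour set $C_{\leq \mu}$. Viewing each $U^{\leq \mu}$ as a $C$-graph (no additional edges are added), I form the directed sum
\[
    U \;=\; \ASum_{\mu < \alpha} U^{\leq \mu}.
\]
Because directed sums preserve well-orders, wpos and monotonicity, $U$ has the desired order-theoretic shape. The target is to show that $U$ is almost $(\kappa, \maxLexProduct{\lambda < \alpha} W_\lambda)$-universal, so that Lemma~\ref{lemma:almost-universal} yields that $U \caprod \kappa$ is a $\kappa$-universal graph of the required form.

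The main obstacle is a combinatorial lemma, which crucially exploits the countability of~$\alpha$: for every $C$-graph $G$ of size $<\kappa$ satisfying $\maxLexProduct{\lambda < \alpha} W_\lambda$, there exist a vertex $v \in V(G)$ and an ordinal $\mu_v < \alpha$ such that $G[v]$ uses only colours in $C_{\leq \mu_v}$. I would prove it by contradiction: if every vertex $w$ admitted, for each $\mu < \alpha$, a path in $G[w]$ traversing an edge whose colour has index $>\mu$, then, fixing an $\omega$-sequence $\mu_0 < \mu_1 < \dots$ cofinal in $\alpha$, one inductively builds an infinite path $v_0 \to v_1 \to \dots$ in $G$ whose segment from $v_i$ to $v_{i+1}$ contains an edge of colour-index $>\mu_i$. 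Such a path satisfies $\limsup \ind \geq \sup_i \mu_i = \alpha$, which is incompatible with $G$ being inside the product objective.

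Given such $v$ and $\mu_v$, the subgraph $G[v]$ is a $C_{\leq \mu_v}$-graph of size $<\kappa$ satisfying $\maxLexProduct{\lambda \leq \mu_v} W_\lambda$; universality of $U^{\leq \mu_v}$ then furnishes a morphism $G[v] \to U^{\leq \mu_v}$, which composed with the inclusion into the $\mu_v$-th summand of $U$ yields $G[v] \to U$, as required by Lemma~\ref{lemma:almost-universal}. That $U$ itself satisfies $\maxLexProduct{\lambda < \alpha} W_\lambda$ comes essentially for free: along any infinite path in the directed sum the level component is non-increasing in a well-order and therefore eventually constant at some $\mu<\alpha$; the tail of the path thus lives inside a copy of $U^{\leq \mu}$ and inherits $\maxLexProduct{\lambda \leq \mu} W_\lambda$-acceptance, which transfers to the full path by prefix-independence and to $\maxLexProduct{\lambda < \alpha} W_\lambda$ since the two objectives agree on words whose colours lie in $C_{\leq \mu}$.
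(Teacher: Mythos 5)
Your proof is correct and follows essentially the same route as the paper: induction on $\alpha$, with the successor case handled by Lemma~\ref{lem:max_prod_non_limit} together with Theorem~\ref{thm:universality_finite_lexico}, and the limit case handled by taking a directed sum of the universal graphs for the partial products and verifying almost-universality via the cofinal-path contradiction and Lemma~\ref{lemma:almost-universal}. The only difference is organizational: the paper factors your limit-case argument out as a standalone closure result for colour-increasing unions (Theorem~\ref{thm:weak_kopczynski}), but the underlying construction and combinatorial lemma are the same.
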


\subsection{Universal graph for max-lexicographic products.}
This subsection is devoted to the proof of Theorem~\ref{thm:main_max_lex}

\paragraph{Colour-increasing unions}
We start by establishing the following weakening of Kopczyński's conjecture, which will be the key lemma in the proof of Theorem~\ref{thm:main_max_lex} and may be of independent interest.

\begin{thm}\label{thm:weak_kopczynski}
    Let $(C_\lambda)_{\lambda < \alpha}$ be a family of sets colours satisfying $C_\lambda \subseteq C_{\lambda'}$ for $\lambda<\lambda'<\alpha$, where $\alpha$ is countable.
    Let $(W_\lambda)_{\lambda < \alpha}$ be a family of prefix-independent positional objectives (resp. prefix-independent objectives having wpo-monotone graphs) over the respective sets of colours such that for each $\lambda<\lambda'$ it holds that $C_\lambda^\omega \cap W_{\lambda'}= W_\lambda$.
    Then the union of the $W_\lambda$'s is positional (resp. has wpo-monotone graphs).
\end{thm}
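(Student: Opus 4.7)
Fix a cardinal $\kappa$; the plan is to construct a well-ordered (resp.\ wpo) monotone $\kappa$-universal graph for $W = \bigcup_{\lambda < \alpha} W_\lambda$, after which Theorem~\ref{th:positional=UnivGraphs} concludes. For each $\lambda < \alpha$, let $U_\lambda$ be a well-ordered (resp.\ wpo) monotone $\kappa$-universal $C_\lambda$-graph for $W_\lambda$, and define
\[
    U = \ASum_{\lambda < \alpha} U_\lambda,
\]
viewed as a $C$-graph. Directed sums indexed by an ordinal preserve monotonicity together with being well-ordered or being a wpo, so $U$ has the desired structural properties. Verifying that $U$ satisfies $W$ is routine: on an infinite path in $U$ the level component is weakly decreasing and hence eventually constant at some $\lambda^* < \alpha$; after that point the path stays inside the copy of $U_{\lambda^*}$ and uses only $C_{\lambda^*}$-colours, so its tail lies in $W_{\lambda^*}$. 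The finite prefix uses finitely many colours $c$, each of index $n(c) = \min\{\lambda : c \in C_\lambda\} < \alpha$; picking $\lambda' < \alpha$ above $\lambda^*$ and all prefix indices, the consistency hypothesis yields $W_{\lambda^*} \subseteq W_{\lambda'}$, and prefix-independence of $W_{\lambda'}$ places the whole label in $W_{\lambda'} \subseteq W$.

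The substantive step is to establish that $U$ is almost $(\kappa, W)$-universal, so that Lemma~\ref{lemma:almost-universal} delivers a true $\kappa$-universal graph $U \caprod \kappa$. Let $G$ satisfy $W$ with $|G| < \kappa$. For each $v \in V(G)$ define
\[
    n(v) = \sup\{n(c) : c \text{ labels some edge of } G[v]\} \,\in\, \alpha + 1,
\]
pick $\hat v \in V(G)$ minimising $n$, and call the resulting value $\hat \lambda$. If $\hat \lambda < \alpha$, then every edge of $G[\hat v]$ carries a colour in $C_{\hat \lambda}$, so $G[\hat v]$ is a $C_{\hat \lambda}$-graph; moreover, every infinite path in $G[\hat v]$ has label in $\bigcup_\lambda W_\lambda \cap C_{\hat \lambda}^\omega = W_{\hat \lambda}$ by consistency, so $G[\hat v]$ satisfies $W_{\hat \lambda}$. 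By $\kappa$-universality of $U_{\hat \lambda}$ there is a morphism $G[\hat v] \to U_{\hat \lambda}$, which composes with the inclusion $U_{\hat \lambda} \hookrightarrow U$ to give $G[\hat v] \to U$, as required.

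The main obstacle is to rule out $\hat \lambda = \alpha$. Suppose towards contradiction that this occurs; by the minimality of $\hat v$, every $v \in V(G)$ satisfies $n(v) = \alpha$. We may assume $\alpha$ is a limit, since otherwise $W = W_{\alpha - 1}$ is already positional. Since $\alpha$ is countable, fix a cofinal $\omega$-sequence $\alpha_0 < \alpha_1 < \cdots$ in $\alpha$. Starting from any $v_0 \in V(G)$, I would inductively construct an infinite path as follows: at step $k$, since $n(v_k) = \alpha > \alpha_k$, some edge $u \re{c_k} u'$ of $G[v_k]$ satisfies $n(c_k) > \alpha_k$; walk along a finite path from $v_k$ to $u$ (available because $u$ is reachable from $v_k$), traverse this edge, and set $v_{k+1} = u'$. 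Concatenating these finite walks yields an infinite path in $G$ whose label contains colours $c_k$ with $n(c_k)$ cofinal in $\alpha$. Such a label belongs to no $C_\lambda^\omega$ for $\lambda < \alpha$, hence to no $W_\lambda$, contradicting that $G$ satisfies $W$. This forces $\hat \lambda < \alpha$ and concludes the positional case; the wpo version is identical, using that directed sums over ordinals also preserve the wpo property.
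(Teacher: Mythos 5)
Your proposal is correct and follows essentially the same route as the paper: take the directed sum $\ASum_{\lambda<\alpha} U_\lambda$, show it is almost $(\kappa,W)$-universal via the observation that some vertex $v$ must have all colours of $G[v]$ confined to a single $C_\lambda$ (otherwise a cofinal sequence yields an unsupported path violating $W$), and conclude with Lemma~\ref{lemma:almost-universal}. Your write-up is somewhat more explicit than the paper's (e.g.\ the verification that $U$ satisfies $W$ and the minimisation of $n(v)$), but the underlying argument is identical.
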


We will say that a family of objectives as above is colour-increasing.
The proof is a~simple application of Lemma~\ref{lemma:almost-universal}.

\begin{proof}
    Let $\kappa$ be a cardinal and let $U_0,U_1,\dots$ be well-ordered (resp.~wpo) monotone $\kappa$\=/universal graphs for the respective objectives. Let $W$ be the union of all the $W_\lambda$'s.
    Let $U = \overset{\leftarrow}{\sum}_{\lambda<\alpha} U_{\lambda}$; we claim that $U$ is almost $(\kappa,W)$\=/universal and therefore $U\caprod \kappa$ is $(\kappa,W)$\=/universal.
    First, observe that $U$ indeed satisfies $W$: this follows from prefix\=/independence and the fact that each $U_\lambda$ satisfies $W_\lambda \subseteq W$.

    Now, consider a graph $G$ of size $<\kappa$ satisfying $W$.
    We should prove that for some $v$, $G[v] \to U$.
    We claim that there exists $v \in V(G)$ such that all colours appearing on paths from $v$ belong to $C_\lambda$ for some $\lambda$.
    Assume by contradiction that this fails. 
    Then, by an easy induction we obtain a path visiting edges with colours in $C_{\lambda_0}$, $C_{\lambda_1},\dots$ where we choose $\lambda_0,\lambda_1,\dots$, to be a cofinal sequence
    of $\alpha$; such a path cannot satisfy any $W_\lambda$ and therefore it does not satisfy~$W$.

    We conclude that for some $v$ and some $\lambda$, it holds that $G[v]$ satisfies $W \cap C_\lambda^\omega = W_i$.
    Therefore $G[v] \to U_i$ which concludes since $U_i\to U$.
\end{proof}

\paragraph{Proof of Theorem~\ref{thm:main_max_lex}}

For $\alpha'\leq\alpha$, we let $W_{<\alpha'}$ denote the max-lexicographic product of the family $(W_{\lambda})_{\lambda <  \alpha'}$.
To prove the Theorem~\ref{thm:main_max_lex}, we proceed by induction over $\alpha'$.
There are two cases, corresponding to $\alpha'$ being a successor or a limit.
First, we prove that for successor ordinals, our definition behaves 
just like finite lexicographic products.

\begin{lem}\label{lem:max_prod_non_limit}
For any $\alpha' < \alpha$, we have
\[
    W_{<\alpha'+1} = W_{<\alpha'} \rtimes W_{\alpha'}.
\]
\end{lem}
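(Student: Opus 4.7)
The plan is to prove the set equality directly, by considering an arbitrary $w \in (C_{\leq \alpha'})^\omega$ and splitting on whether the ordinal $\alpha'$ appears infinitely often in $\ind(w)$, equivalently on whether $\pi_{\alpha'}(w)$ is infinite or finite. In each case I will unfold both definitions and check that membership on the left matches membership on the right. Since $\alpha' + 1$ is a successor ordinal, the index $\limsup \ind(w)$ behaves essentially as in the finite case, so one should expect the equality to hold; the subtlety lies in the limit-ordinal edge case of $\alpha'$.

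When $\pi_{\alpha'}(w)$ is infinite, $\limsup \ind(w) = \alpha'$ automatically, so the left-hand side reduces to $\pi_{\alpha'}(w) \in W_{\alpha'}$ directly from the definition of the max-lexicographic product $W_{<\alpha'+1}$, while the right-hand side reduces to the same condition via the first disjunct in the definition of the binary max-lexicographic product $\rtimes$. This case is therefore immediate.

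When $\pi_{\alpha'}(w)$ is finite, I would set $w' := \pi_{<\alpha'}(w)$. Since $w$ is infinite, so is $w'$; deleting only finitely many positions from an infinite sequence does not change its limsup, so $\lambda := \limsup \ind(w') = \limsup \ind(w)$, and moreover $\pi_\mu(w) = \pi_\mu(w')$ for every $\mu < \alpha'$. If $\lambda < \alpha'$, then both sides reduce to $\pi_\lambda(w) \in W_\lambda$: on the left directly from the definition of $W_{<\alpha'+1}$, and on the right via the second disjunct of $\rtimes$, which demands $w' \in W_{<\alpha'}$, itself unfolded to $\pi_\lambda(w') \in W_\lambda$ by the definition of $W_{<\alpha'}$.

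The delicate subcase, which I expect to be the main obstacle, is when $\pi_{\alpha'}(w)$ is finite but $\lambda = \alpha'$; this arises precisely when $\alpha'$ is a limit ordinal and $\ind(w)$ contains a cofinal sequence of strictly smaller indices. On the left, $\pi_{\alpha'}(w)$ is finite, hence cannot lie in $W_{\alpha'} \subseteq C_{\alpha'}^\omega$, so $w \notin W_{<\alpha'+1}$. On the right, the second disjunct requires $w' \in W_{<\alpha'}$; but $\limsup \ind(w') = \alpha'$ lies outside the index range $[0, \alpha')$ of the product $W_{<\alpha'}$, and since $w'$ contains no letters from $C_{\alpha'}$ the projection $\pi_{\alpha'}(w')$ is the empty word, so no $W_\mu$ in the family of $W_{<\alpha'}$ can witness membership, giving $w' \notin W_{<\alpha'}$. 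Both memberships fail, and the equality holds.
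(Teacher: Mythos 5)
Your proof is correct and follows essentially the same route as the paper's: the same case split on whether $\pi_{\alpha'}(w)$ is infinite, followed by a direct unfolding of both definitions. The only difference is that in the finite case the paper passes to a suffix of $w$ and invokes prefix-independence, whereas you work with the projection $\pi_{<\alpha'}(w)$ and explicitly dispose of the limit-ordinal edge case $\limsup \ind(w)=\alpha'$, which the paper's shorter chain of equivalences treats implicitly.
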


\begin{proof}
Let $w \in C_{< \alpha'+1}^\omega$.
\begin{itemize}
\item First assume that $\pi_{\alpha'}(w)$ is infinite.
Then $\limsup \ind(w)=\alpha'$ and we have
\[
    w \in W_{<\alpha'+1} \iff \pi_{\alpha'}(w) \in W_{\alpha'} \iff w \in W_{<\alpha'} \rtimes W_{\alpha'}\ .
\]
\item Otherwise, $\pi_{\alpha'}(w)$ is finite, and we let $w'$ denote a suffix of $w$ with $\pi_{\alpha'}(w')=\emptyword$.
Then we have
\[
    w \in W_{<\alpha'+1} \iff w' \in W_{<\alpha'+1} \iff w' \in W_{<\alpha'} \iff w \in W_{<\alpha'} \rtimes W_{\alpha'}. \qedhere
\]
\end{itemize}
\end{proof}

On the other hand, for limit ordinals, our definition resembles a union.

\begin{lem}\label{lem:max_prod_limit}
For any limit ordinal $\alpha'\leq\alpha$, we have
\[
    W_{<\alpha'} = \bigcup_{\lambda<\alpha'} W_{<\lambda}.
\]
\end{lem}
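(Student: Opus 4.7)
The plan is to verify both inclusions of the equality directly from the definition of the max-lexicographic product, with the only subtlety being the bookkeeping in the forward direction.

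For the easy inclusion $\supseteq$, I would fix $\lambda < \alpha'$ and pick $w \in W_{<\lambda}$. By the definition of the max-lex product over $\lambda$, we have $w \in C_{<\lambda}^\omega \subseteq C_{<\alpha'}^\omega$, the ordinal $\mu := \limsup \ind(w)$ satisfies $\mu < \lambda$, and $\pi_\mu(w) \in W_\mu$. Since $\mu < \lambda < \alpha'$, the objective $W_\mu$ also appears in the product defining $W_{<\alpha'}$, so $w \in W_{<\alpha'}$.

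For the converse inclusion $\subseteq$, I would take $w \in W_{<\alpha'}$ and set $\mu := \limsup \ind(w)$. The definition of $W_{<\alpha'}$ forces $\mu < \alpha'$ (since $W_\mu$ must appear in the product) and $\pi_\mu(w) \in W_\mu$, and moreover $\pi_\mu(w)$ is infinite, so the value $\mu$ is attained infinitely often in $\ind(w)$. In particular, only finitely many positions of $w$ carry an index strictly greater than $\mu$. Letting $\nu$ be the maximum of $\mu$ together with these finitely many exceptional indices, we get an ordinal $\nu < \alpha'$. This is the only step requiring the hypothesis that $\alpha'$ is a limit ordinal: it guarantees that $\lambda := \nu + 1$ is still strictly less than $\alpha'$. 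Then $w \in C_{\leq \nu}^\omega \subseteq C_{<\lambda}^\omega$ and $\limsup \ind(w) = \mu < \lambda$ with $\pi_\mu(w) \in W_\mu$, witnessing $w \in W_{<\lambda}$ and hence $w \in \bigcup_{\lambda < \alpha'} W_{<\lambda}$.

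The main (minor) obstacle is precisely the step of enlarging $\mu$ to an ordinal $\nu$ bounding \emph{all} indices occurring in $w$, not just the tail of $\ind(w)$. The limit-ordinal assumption is used exactly once, to ensure $\nu + 1 < \alpha'$; otherwise the argument is a direct unfolding of the definitions and does not rely on positionality or any universal-graph machinery.
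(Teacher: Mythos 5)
Your proof is correct and follows essentially the same route as the paper: the backward inclusion via $W_{<\alpha'}\cap C_{<\lambda}^\omega = W_{<\lambda}$, and the forward inclusion by noting that $\mu=\limsup\ind(w)$ is attained, hence $<\alpha'$. You are in fact slightly more careful than the paper's one\-/line conclusion ``$w\in W_\mu\subseteq W_{<\mu+1}$'': your step of enlarging $\mu$ to $\nu$ to absorb the finitely many initial letters with indices above $\mu$ (so that $w$ genuinely lies in $C_{<\nu+1}^\omega$) is exactly the detail needed to make that conclusion literally true, and your use of the limit hypothesis to get $\nu+1<\alpha'$ is the right justification.
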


\begin{proof}
It is a direct check that for $\lambda<\alpha'$ we have $W_{<\alpha'} \cap C_{<\lambda}^\omega = W_{<\lambda}$, and thus the right-to-left inclusion holds.
Conversely, let $w \in W_{<\alpha'}$.
Then $\lambda=\limsup \ind(w)$ is $\leq \alpha'$ and $\pi_\lambda(w)$ is infinite, so $\lambda<\alpha'$.
Thus $w \in W_\lambda \subseteq W_{<\lambda+1}$.
\end{proof}

Together, Lemmas~\ref{lem:max_prod_non_limit} and~\ref{lem:max_prod_limit} give an alternative inductive definition of the max-lexicographic product.
Now note that for any $\alpha'<\alpha$, the above union is colour-increasing: $(C_{<\lambda})_{\lambda<\alpha'}$ is an increasing sequence of sets of colours, $W_{<\lambda} \subseteq C_{<\lambda}^\omega$ and for any $\lambda<\lambda'$ we have $C_{<\lambda} \cap W_{<\lambda'} = W_{\lambda}$.
Thus Theorem~\ref{thm:main_max_lex} holds by induction on $\alpha$: the successor case follows from Lemma~\ref{lem:max_prod_non_limit} and Theorem~\ref{thm:universality_finite_lexico} and the limit case follows from Lemma~\ref{lem:max_prod_limit} and Theorem~\ref{thm:weak_kopczynski}.

\subsection{Max-Parity: Positionality and topological completeness.}\label{subsec:maxparity}
We now discuss the important case of the Max-Parity languages.
Let $C_\lambda=\{\lambda\}$ for $\lambda <  \alpha$ and
\[
    W_\lambda=\begin{cases} \TL_\lambda \tif \lambda \text{ is even,\footnotemark} \\ \TW_\lambda \tow. \end{cases}
\]\footnotetext{We recall that the parity of an ordinal $\alpha$ is the parity of the unique $n<\omega$ such that $\alpha$ rewrites as $\alpha' + n$ for $\alpha'$ either $0$ or a~limit ordinal.} 

We define the Max-Parity objective $\MaxParity_\alpha$ as the lexicographic product of the $W_\lambda$'s for $\lambda<\alpha$.
Equivalently, it can be written as:
\[
    \MaxParity_\alpha = \{w \in \alpha ^\omega \mid \limsup w \text{ is odd}\}.
\]
The following remark justifies our choice of odd priorities to be winning, rather than the more standard even ones.

\begin{rem}
\label{rem:why-odd}
Note that if $\lambda = \limsup w$ is an~odd ordinal, it is necessarily non-limit, and therefore $\pi_{\lambda}(w)$ is infinite.
\end{rem}

\begin{cor}
    For every countable ordinal $\alpha$, $\MaxParity_\alpha$ is positional.
\end{cor}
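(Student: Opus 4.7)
The plan is to invoke Theorem~\ref{thm:main_max_lex} directly on the family $(W_\lambda)_{\lambda < \alpha}$ introduced just before the statement. First I would check the hypotheses of the theorem. The colour sets $C_\lambda = \{\lambda\}$ are pairwise disjoint, and $\alpha$ is countable by assumption. Each component $W_\lambda$ is one of the trivial objectives $\TL_\lambda$ or $\TW_\lambda$ according to the parity of $\lambda$, and these are positional as recalled in the preliminaries: the single-vertex graph $\loopC \lambda$ is $\kappa$-universal for $\TW_\lambda$ and the well-order $\bullet \caprodPow{\lambda} \kappa$ is $\kappa$-universal for $\TL_\lambda$, for every cardinal $\kappa$. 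Both are prefix-independent. So Theorem~\ref{thm:main_max_lex} yields positionality of $\maxLexProduct{\lambda<\alpha} W_\lambda$.

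It remains to argue that this max-lexicographic product coincides with the $\limsup$-formulation of $\MaxParity_\alpha$, as claimed in the paragraph introducing the objective. For a word $w \in \alpha^\omega$, set $\lambda = \limsup \ind(w) = \limsup w$, and split on the nature of $\lambda$. If $\lambda$ is odd, it is necessarily a successor, so $\lambda$ must appear infinitely often in $w$; hence $\pi_\lambda(w)$ is infinite and lies in $W_\lambda = \TW_\lambda$, placing $w$ in the product. If $\lambda$ is an even successor, then again $\pi_\lambda(w)$ is infinite, but $W_\lambda = \TL_\lambda = \emptyset$, so $w$ is excluded. Finally, if $\lambda$ is a limit (and so even), then either $\pi_\lambda(w)$ is finite, in which case $w$ is excluded by the implicit infiniteness requirement of the max-lex product, or $\pi_\lambda(w)$ is infinite and again $\TL_\lambda = \emptyset$ excludes it. In all cases, $w$ lies in the product iff $\limsup w$ is odd.

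There is no real obstacle here: the entire content of the corollary is bundled into Theorem~\ref{thm:main_max_lex}, so the proof should be short. The one point worth flagging is precisely the role played by the convention of taking odd (rather than even) ordinals as winning: it is exactly this choice that makes the $\limsup$-formulation agree with the max-lex product, since limit ordinals are even and a limsup equal to a limit ordinal need not be witnessed by infinitely many occurrences of the limit itself, so declaring such words losing is consistent both with the product definition and with positionality requirements.
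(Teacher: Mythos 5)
Your proposal is correct and follows exactly the paper's route: the corollary is a direct application of Theorem~\ref{thm:main_max_lex} to the family of trivial objectives $W_\lambda$, and the paper likewise leaves the equivalence between the product definition and the $\limsup$ formulation to the remark that an odd $\limsup$ is necessarily a successor and hence attained infinitely often. Your explicit case split on successor versus limit ordinals just spells out what the paper states in one line.
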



The universal graph obtained by unravelling the above proof, using the graphs $\loopC{$c$}$ and $\bullet \caprodPow{c} \kappa$ as starting blocks for the trivially winning and trivially losing objectives, provides a~natural generalisation of Walukiewicz's signatures~\cite{Walukiewicz96} to ordinal priorities. 
We provide an explicit construction of such a graph in Appendix~\ref{app:parity_signatures}.


\paragraph{Completeness in the difference hierarchy.}
We first recall the definition of the difference hierarchy (see also~\cite[Chapter~22.E]{Kechris1995}).
For a sequence of sets $(A_\eta)_{\eta<\alpha}$, $A_\eta\subseteq C^\omega$, we define $D_\alpha\big((A_\eta)_{\eta<\alpha})$ as the set containing the elements $w\in \bigcup_{\eta<\alpha} A_\eta$ where the least $\eta<\alpha$ such that $w\in A_\eta$ has parity opposite to that of $\alpha$.
The class $D_\alpha(\bsigma{2})$ consists of the sets that can be described as $D_\alpha\big((A_\eta)_{\eta<\alpha})$ for an increasing sequence of $\bsigma{2}$-sets. 
The class $D_\alpha(\bpi{2})$ consists of the sets that are complements of $D_\alpha(\bsigma{2})$-sets, or equivalently, those of the form $D_\alpha\big((B_\eta)_{\eta<\alpha})$ for a decreasing\footnote{Note that the union of an increasing sequence of $\bpi{2}$-sets can be $\bsigma{3}$-complete.} sequence of $\bpi{2}$-sets.
(Note that $D_1(\bsigma{2}) = \bsigma{2}$ and $D_1(\bpi{2}) = \bpi{2}$.)

In the remainder of the section, we show that the languages $\MaxParity_\alpha$ are complete (with respect to continuous reductions) for infinite levels of the difference hierarchy.

Recall that a function $h\colon C^\omega \to X^\omega$ is continuous if and only if the value of the $n$th letter of $h(w)$ only depends on a finite prefix of $w$; we refer to~\cite{Kechris1995} for a formal definition. Since our objectives in $X^\omega$ admit neutral letters, we will assume that the functions are \emph{$1$\=/Lipschitz}, i.e.~the $n$th letter of $h(w)$ depends only on the first $n$ letters of $w$. Such functions can be represented by $f\colon C^\ast \to X$ with $h=\tilde{f}\colon C^\omega\to X^\omega$ defined as $\big(\tilde{f}(w)\big)_n = f(w_0w_1\cdots w_{n-1})$.

\begin{thm}
    \label{thm:difference-complete}
    For each even $\alpha<\omega_1$, the language $\MaxParity_{\alpha+1}$ is complete for $D_\alpha(\bsigma{2})$.
    For each odd $\alpha<\omega_1$, the language $\MaxParity_{\alpha+1}$ is complete for $D_\alpha(\bpi{2})$.
    For each limit ordinal $\alpha<\omega_1$, the language $\MaxParity_{\alpha}$ is complete for $D_\alpha(\bsigma{2})$.
\end{thm}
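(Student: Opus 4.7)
The plan is to establish both (i) membership of $\MaxParity$ in the claimed difference-hierarchy class and (ii) hardness via a continuous (indeed $1$-Lipschitz) reduction. The three regimes (even successor $\alpha$, odd successor $\alpha$, limit $\alpha$) are handled in parallel, with the parity and alphabet adjustments tracked case by case.

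For (i), I would consider the sets $B_\eta = \{w : \limsup w \leq \eta\}$ for $\eta < \alpha$, each of which is $\bsigma{2}$ (being $\bigcup_N \bigcap_{n \geq N}\{w : w_n \leq \eta\}$), and the family is increasing in $\eta$. The least $\eta$ with $w \in B_\eta$ is exactly $\limsup w$ whenever $\limsup w < \alpha$. Unfolding the definition of $D_\alpha$ then identifies $D_\alpha((B_\eta)_{\eta<\alpha})$ with $\MaxParity_{\alpha+1}$ for even successor $\alpha$ and with $\MaxParity_\alpha$ for limit $\alpha$; for odd successor $\alpha$ it identifies with the complement of $\MaxParity_{\alpha+1}$, so that $\MaxParity_{\alpha+1} \in D_\alpha(\bpi{2})$ by passing to complements. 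Limit-ordinal evenness ensures that words $w$ with $\limsup w = \alpha$ are excluded correctly.

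For (ii), given an arbitrary $X = D_\alpha((A_\eta)_{\eta<\alpha})$ with $A_\eta$ an increasing family of $\bsigma{2}$ sets in $\Sigma^\omega$, I construct a rank function $\rho : \Sigma^* \to P$ targeting the alphabet $P$ of the relevant $\MaxParity$, and take the reduction $f(u) = \rho(u)$. Writing each $A_\eta$ as the set of words with only finitely many bad prefixes via an indicator $h_\eta: \Sigma^* \to \{0,1\}$ which, because $A_\eta$ is increasing, can be chosen non-increasing in $\eta$, I set $\tau(u) = \min\{\eta \leq \alpha : h_\eta(u) = 0\}$. A short computation yields $\limsup_n \tau(w|_n) = \min\{\eta : w \in A_\eta\}$ (or $\alpha$ if no such $\eta$ exists); this is translated into the rank function $\rho$, and the resulting reduction matches the parity of the least level against the parity of $\alpha$ as required.

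The main obstacle is ensuring that $\rho$ takes values in the correct alphabet while keeping $\limsup \rho(w|_n) = \alpha$ on every $w \notin \bigcup A_\eta$. For successor $\alpha$ this is automatic since the alphabet of $\MaxParity_{\alpha+1}$ includes the top value $\alpha$. For limit $\alpha$, however, the alphabet $\alpha$ excludes $\alpha$ itself, so I would fix a cofinal sequence $(\alpha_n)_{n<\omega}$ in $\alpha$ and set $\rho(u) = \min(\tau(u), \alpha_{|u|})$; this keeps values strictly below $\alpha$ without affecting the $\limsup$ behaviour in either the $w \in \bigcup A_\eta$ case or its negation (since $\alpha_n$ eventually exceeds any fixed $\eta < \alpha$). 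Once this is in place, the parity matches in all three regimes follow mechanically from the computation above.
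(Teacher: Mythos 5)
Your overall architecture matches the paper's: membership via the increasing $\bsigma{2}$ family $\{w : \limsup w \leq \eta\}$, hardness by assembling level-wise co-Büchi representations into an ordinal-valued rank function whose $\limsup$ computes the least level containing $w$, and the limit case by truncating along a cofinal sequence (your $\rho(u)=\min(\tau(u),\alpha_{|u|})$ is equivalent to the paper's post-composition replacing $\alpha$ at position $i$ by $\gamma_i$). However, there is a genuine gap at the single most important step: you assert that because the $A_\eta$ are increasing, the indicator functions $h_\eta\colon \Sigma^\ast \to \{0,1\}$ ``can be chosen non-increasing in $\eta$.'' This is not automatic, and it is exactly the content of the paper's Lemma~\ref{lem:decreasing-reductions}. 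Each $A_\eta$ individually admits some co-Büchi representation, but making an entire transfinite family of representations pointwise monotone is delicate: the naive fix of replacing $h_\eta$ by $\min_{\eta'\leq\eta} h_{\eta'}$ fails, because the pointwise minimum of two representations recognises a set that may strictly contain the union of the two sets (a word outside both $A$ and $B$ can have its ``bad prefixes'' for $A$ and for $B$ interleaved so that no prefix is bad for both).

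The paper handles this with an asymmetric operator $\union(f,g)$ satisfying $\union(f,g)\leq f$ and $\preimCB{\union(f,g)}=\preimCB{f}\cup\preimCB{g}$ (defined by comparing the positions of the last $2$ emitted by $f$ and by $g$), together with a pointwise $\max$ for intersections, and then runs a transfinite induction on $\alpha$ whose statement must be strengthened to carry bracketing functions $\gbig\leq\gsmall$ so that the limit case can be split into countably many intervals each of order type $<\alpha$ and the resulting chains concatenated. Without this (or an equivalent device), your construction of $\tau$ is not justified, and the identity $\limsup_n \tau(w|_n)=\min\{\eta : w\in A_\eta\}$ — which you correctly rely on — has no foundation. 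Everything downstream of that point in your argument is fine, so the proposal is repairable, but as written it assumes precisely the lemma that carries the difficulty of the hardness direction.
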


Before proving Theorem~\ref{thm:difference-complete}, we need an auxiliary result (Lemma~\ref{lem:decreasing-reductions}), that might be of independent interest.

We let  $\coBuchi \subseteq \{1,2\}^\omega$ be the language of words where $2$ appears finitely often ($1$ serves as a~neutral letter in this language). We recall that $\coBuchi$ is complete for $\bsigma{2}$, that is, for every $\bsigma{2}$-set $A\subseteq C^\omega$ there is a continuous function $h\colon C^\omega \to \{1,2\}^\omega$ such that $h^{-1}(\coBuchi)=A$.
As the language $\coBuchi$ admits a~neutral letter, we can assume that such a reduction $h$ is induced by a function $f\colon C^\ast\to\{1,2\}$, such that $h=\tilde{f}$. In this case we say that $f$ \emph{represents} the reduction $h$.

Given a~function $f\colon C^\ast \to \{1,2\}$ we define $\preimCB{f}$ as the set of words $w\in C^\omega$ such that $f(w_0w_1\cdots w_{n-1})=2$ for only finitely many $n$. In other words $\preimCB{f}=\tilde{f}^{-1}(\coBuchi)$.

Let $f,g\colon C^\ast \to \{1,2\}$.
We write $f\leq g$ if for all $x\in C^\ast$, $g(x) = 1$ implies $f(x) = 1$ (i.e.~$f(x)\leq g(x)$).
Note that if $f\leq g$, then $\preimCB{f} \supseteq \preimCB{g}$.
We say that a sequence of functions $(f_\eta)_{\eta<\alpha}$, with $f_\eta\colon C^\ast \to \{1,2\}$, is \emph{pointwise decreasing} if $f_{\eta} \geq f_{\eta'}$ for $\eta \leq \eta'<\alpha$.

\begin{lem}\label{lem:decreasing-reductions}
    Let $\alpha < \omega_1$ and let $(A_\eta)_{\eta< \alpha}$ be an increasing sequence of $\bsigma{2}$-subsets of $C^\omega$.
    Then, there exists a pointwise decreasing sequence of functions $(f_\eta)_{\eta<\alpha}$ such that $f_\eta\colon C^\ast \to \{1,2\}$ is a~representation of a~reduction of $A_\eta$ to $\coBuchi$ (that is, $\preimCB{f_\eta} = A_\eta$). 
\end{lem}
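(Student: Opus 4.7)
The plan is to reduce the construction of the family $(f_\eta)_{\eta<\alpha}$ to finding a single function $\sigma\colon C^\ast \to \alpha+1$, and then to build $\sigma$ from $\bsigma{2}$\=/representations of the $A_\eta$'s via a pigeonhole argument that crucially exploits the countability of~$\alpha$. The reduction goes as follows: a pointwise decreasing family $(f_\eta)_{\eta<\alpha}$ with $f_\eta\colon C^\ast\to\{1,2\}$ is in bijection with a function $\sigma\colon C^\ast\to\alpha+1$ via
\[
f_\eta(x)=2 \iff \sigma(x)>\eta,
\]
since for each $x$ the set $\{\eta<\alpha : f_\eta(x)=2\}$ is downward closed in~$\alpha$, hence an initial segment $[0,\sigma(x))$. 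Under this correspondence, the requirements $\preimCB{f_\eta}=A_\eta$ for all $\eta<\alpha$ collapse into the single condition $\limsup_k \sigma(w_{<k}) = \psi(w)$ for every $w\in C^\omega$, where $\psi(w):=\min\{\eta<\alpha : w\in A_\eta\}$ with the convention $\psi(w)=\alpha$ if this set is empty. It therefore suffices to produce such a~$\sigma$.

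For the construction, I would use $A_\eta\in\bsigma{2}$ to write $A_\eta = \bigcup_{m<\omega} K_{\eta,m}$ as a union of closed sets, and let $T_{\eta,m}\subseteq C^\ast$ denote the pruned tree such that $[T_{\eta,m}]=K_{\eta,m}$. Then define
\[
\sigma(x) = \min\{\eta<\alpha : \exists m\leq |x|,\ x\in T_{\eta,m}\},
\]
setting $\sigma(x)=\alpha$ if the above set is empty.

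The verification of $\limsup_k\sigma(w_{<k})=\psi(w)$ splits into two inequalities. The inequality $\leq$ is direct: if $\psi(w)=\eta^\ast<\alpha$ and $m^\ast$ satisfies $w\in K_{\eta^\ast,m^\ast}$, then for every $k\geq m^\ast$ the pair $(\eta^\ast,m^\ast)$ witnesses $\sigma(w_{<k})\leq\eta^\ast$. The reverse inequality is the main obstacle: assuming $\sigma(w_{<k})\leq\eta$ for all large~$k$, one chooses for each such $k$ witnesses $\eta_k\leq\eta$ and $m_k\leq k$ with $w_{<k}\in T_{\eta_k,m_k}$. The pairs $(\eta_k,m_k)$ lie in the countable set $[0,\eta]\times\omega$---this is where the hypothesis $\alpha<\omega_1$ is used essentially---so by the pigeonhole principle some fixed $(\eta^\ast,m^\ast)$ occurs for infinitely many~$k$. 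Because $T_{\eta^\ast,m^\ast}$ is prefix-closed and $w_{<k}\in T_{\eta^\ast,m^\ast}$ for arbitrarily large~$k$, all prefixes of $w$ lie in $T_{\eta^\ast,m^\ast}$, so $w\in [T_{\eta^\ast,m^\ast}] = K_{\eta^\ast,m^\ast}\subseteq A_\eta$, whence $\psi(w)\leq\eta$. Finally, setting $f_\eta(x)=2$ if $\sigma(x)>\eta$ and $f_\eta(x)=1$ otherwise produces the desired pointwise decreasing family of co-Büchi representations.
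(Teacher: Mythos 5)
Your reduction of the lemma to constructing a single function $\sigma\colon C^\ast\to\alpha+1$ with $\limsup_k\sigma(w_{<k})=\psi(w)$ is correct and is a clean reformulation (it is essentially the function $f$ that the paper later extracts from the family in the proof of Theorem~\ref{thm:difference-complete}). The problem is the construction of $\sigma$ and, specifically, the ``pigeonhole'' step in the verification of the inequality $\geq$. The pairs $(\eta_k,m_k)$ range over $[0,\eta]\times\omega$, which is countably \emph{infinite}, and an infinite sequence in an infinite set need not take any value infinitely often: nothing prevents $m_k\to\infty$ (the constraint $m_k\leq k$ is no obstacle to this). So you cannot extract a fixed $(\eta^\ast,m^\ast)$ witnessing $w\in K_{\eta^\ast,m^\ast}$. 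This is not a repairable presentation issue: the proposed $\sigma$ is genuinely wrong, already for $\alpha=1$. Take $C=\{a,b\}$, $A_0=\{a^mb^\omega : m<\omega\}=\bigcup_m K_{0,m}$ with $K_{0,m}=\{a^mb^\omega\}$, and $w=a^\omega\notin A_0$. Then $w_{<k}=a^k\in T_{0,k}$ with witness $m=k\leq|w_{<k}|$, so $\sigma(w_{<k})=0$ for every $k$, giving $\limsup_k\sigma(w_{<k})=0\neq 1=\psi(w)$; the induced $f_0$ is identically $1$ on the prefixes of $w$, so $w\in\preimCB{f_0}\setminus A_0$. The underlying issue is that a correct reduction of a $\bsigma{2}$ set $\bigcup_m K_m$ to $\coBuchi$ cannot merely test membership of the current prefix in some tree of small index; it must emit a $2$ when the current candidate index changes (e.g.\ track $m(x)=\min\{m : x\in T_m\}$ and output $2$ whenever this value increases), so that stabilization of the guess is what characterizes membership.

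For comparison, the paper does not attempt any closed-form construction of $\sigma$. It proves by transfinite induction on $\alpha$ a strengthened statement~\eqref{eq:property-star} in which the sought family is sandwiched between two given representations $\gbig\leq\gsmall$, and it manufactures the functions one at a time using two combinators on representations: a pointwise $\max$ (realizing intersection of the represented sets) and an asymmetric $\union(f,g)$ operator (realizing union). The successor case inserts one new function between $\gbig$ and $\gsmall$; the limit case fixes a cofinal sequence $\alpha_i\nearrow\alpha$, builds a decreasing chain $f_i$ along it, and fills each interval $(\alpha_i,\alpha_{i+1})$ by the induction hypothesis. If you want to salvage your approach, you would need to replace your $\sigma$ by one whose correctness for a single $\bsigma{2}$ set is established first, and then face the real difficulty of the lemma, which is making the uncountably related representations coherent across all $\eta<\alpha$ simultaneously --- exactly what the paper's sandwiching invariant is designed to do.
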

\begin{proof}
    We prove the following stronger statement for all $\alpha<\omega_1$ by transfinite induction:
    \begin{equation}\tag{$\star$}
    \begin{gathered}
         \text{For every two functions } \gbig\leq \gsmall  \text{ and every sequence } (A_\eta)_{\eta< \alpha} \subseteq C^\omega \text{ such that }\\
          \preimCB{\gsmall} \subseteq A_\eta \subseteq \preimCB{\gbig} \quad \text{ for all } \eta<\alpha, \\
         \text{there is a pointwise decreasing sequence of functions }  (f_\eta)_{\eta<\alpha} \text{ such that:} \\
         \text{for all } \eta\leq \eta'<\alpha: \quad \preimCB{f_\eta} = A_\eta \quad \tand \quad \gbig \leq f_{\eta'} \leq f_{\eta} \leq \gsmall.
    \end{gathered} \label{eq:property-star}        
\end{equation}

    First, we introduce two operators performing union and intersection of representations.
    Let $f,g \colon C^\ast \to \{1,2\}$ be two functions.
    Consider their point-wise maximum $h = \max(f,g)\colon C^\ast\to \{1,2\}$. Then $f,g \leq h$ and $\preimCB{h} = \preimCB{f}\cap \preimCB{g}$ (maximum contains finitely many $2$s if and only if both of them do). In particular, if $\preimCB{f} \subseteq \preimCB{g}$, then $\preimCB{h} = \preimCB{f}$.

    We define a function $\union(f,g)\colon C^\ast \to \{1,2\}$. For $x\in C^\ast$, let $x_g$ be the longest non\=/strict prefix of $x$ such that $g(x_g) = 2$, and let $x_f'$ be the longest strict prefix of $x$ such that $f(x'_f) = 2$ (both can be empty if there is no such prefix). We define: 
    \[ \union(f,g) = \begin{cases}
        1 \tif f(x) = 1,\\
        1 \tif f(x) = 2 \, \tand \, |x_g| \leq |x'_f|,\\
        2 \tif f(x) = 2 \, \tand \,  |x_g| > |x'_f|.
    \end{cases}\]

    Note that the function $\union(f,g)$ is defined in an~asymmetric way and $\union(f,g) \leq f$. We claim that $\preimCB{\union(f,g)} = \preimCB{f}\cup \preimCB{g}$, in particular, $\preimCB{\union(f,g)} = \preimCB{g}$ if $\preimCB{f} \subseteq \preimCB{g}$.
    The inclusion $\preimCB{f} \subseteq \preimCB{\union(f,g)}$ follows from the inequality $\union(f,g) \leq f$.
    Let $w\in \preimCB{g}$. Since $g(w)$ eventually only contains $1$s, there is a constant $k$ such that for every prefix $x$ of $w$, $|x_g|\leq k$. Therefore, for sufficiently long prefixes, we are always on one of the two first cases.
    Finally, let $w\notin \preimCB{f} \cup \preimCB{g}$; we build an infinite sequence of prefixes $(x_i)_{i>0}$ of $w$ such that $\union(f,g)(x) =2$. Assume that $x_i$ has been built and 
    let $x_i'$ be a prefix of $w$ of length $>|x_i|$ such that $g(x') = 2$. Let $x_{i+1}$ be the first extension of $x_i'$ such that $f(x_{i+1}) = 2$ (possibly $x_{i+1} = x_i'$).
    By definition of $\union$, we have $\union(f,g)(x_{i+1}) =2$, as desired.\\
    \newpage
    We show Statement~\eqref{eq:property-star} by induction on $\alpha$.
    Let $\alpha = \alpha' +1$ be a successor ordinal.
    Let  $f'_{\alpha'}$ be a~representation of a~reduction of $A_{\alpha'}$ to $\coBuchi$, and let
        \[ f_{\alpha'} = \max(\union(\gsmall, f'_{\alpha'}),\gbig).\]
    
    By the previous remarks, we have that $\gbig\leq f_{\alpha'} \leq \gsmall$.
Indeed, $\union(\gsmall, f'_{\alpha'}) \leq  \gsmall$ by the property of $\union$, and since $\gbig \leq \gsmall$, their maximum is also smaller than $\gsmall$ so $f_{\alpha'} \leq \gsmall$; the fact that  $\gbig\leq f_{\alpha'}$ is straightforward.

Moreover $\preimCB{f_{\alpha'}} = A_{\alpha'}$ since
\[
    \preimCB{f_{\alpha'}} = ( \preimCB{\gsmall} \cup \preimCB{f'_{\alpha'}} ) \cap \preimCB{\gbig} \quad  \text{ and } \quad   \preimCB{\gsmall}  \subseteq \preimCB{f'_{\alpha'}}  \subseteq \preimCB{\gbig} . 
\]
Now, apply the induction hypothesis~\eqref{eq:property-star} on $f_{\alpha'} \leq \gsmall$ and the sequence $(A_\eta)_{\eta<\alpha'}$. 
The obtained sequence, together with $f_{\alpha'}$, is as desired.

    Let now $\alpha$ be a limit ordinal. Let $\alpha_0 <\alpha_1 < \dots $ a sequence of ordinals $<\alpha$ with $\alpha = \sup \alpha_i$.  
    Let $f'_i$ be a~representation of any reduction from $A_{\alpha_i}$ to $\coBuchi$. We define by induction:
    \[ f_0 = \max(\union(\gsmall, f'_0), \gbig) \quad \tand \quad f_{i+1} = \max(\union(f_i, f'_{i+1}), \gbig). \]
    In this way, we obtain that $\gbig \leq \dots f_2 \leq f_1 \leq f_0 \leq \gsmall$ and $\preimCB{f_i} = A_{\alpha_i}$.
    Now, for each $i$ apply the induction hypothesis with functions $f_{i+1} \leq f_i$ and the sequence $(A_\eta)_{\alpha_i < \eta <\alpha_{i+1}}$ (note that this sequence has order type $<\alpha$).
    The concatenation of the obtained sequences is as desired.    
\end{proof}

We are now ready to prove Theorem~\ref{thm:difference-complete}.

\begin{proof}[Proof of Theorem~\ref{thm:difference-complete}]
    We begin by showing that $\MaxParity_{\alpha+1}$ belongs to $ D_\alpha(\bsigma{2})$ if $\alpha$ is even, and to $ D_\alpha(\bpi{2})$ if $\alpha$ is odd.
    Let $A_\eta$ be the set of sequences $w\in (\alpha+1)^\omega$ where elements strictly larger than $\eta$ appear only finitely many times (equivalently, $\limsup w \leq \eta$).
    These sets are clearly in $\bsigma{2}$.
    Let $w\in \alpha^\omega$. We have:
    \[w\in D_\alpha\big((A_\eta)_{\eta<\alpha}) \quad \Longleftrightarrow \quad \limsup w \text{ is } <\alpha \text{ and has parity opposite to } \alpha.\]
    Therefore, for even $\alpha$, $\MaxParity_{\alpha+1}$ equals $D_\alpha\big((A_\eta)_{\eta<\alpha})$, and for odd $\alpha$, $\MaxParity_{\alpha+1}$ equals the complement of $D_\alpha\big((A_\eta)_{\eta<\alpha})$.

    Now for the opposite direction, we want to show how to reduce an arbitrary set in $D_\alpha(\bsigma{2})$ (resp. in $D_\alpha(\bpi{2})$) to $\MaxParity_{\alpha+1}$, if $\alpha$ even (resp. $\alpha$ odd). We prove the third statement about limit ordinals at the end.

    Assume $\alpha$ even and let $X=D_\alpha\big((A_\eta)_{\eta<\alpha})$, for $(A_{\eta})_{\eta < \alpha}$ an increasing sequence of $\bsigma{2}$-sets in some space $C^\omega$. 
    By Lemma~\ref{lem:decreasing-reductions}, there is a pointwise decreasing sequence of functions $(f_\eta)_{\eta<\alpha}$ such that $f_\eta\colon C^\ast \to \{1,2\}$ and $\preimCB{f_\eta} = A_\eta$.
    We define a representation $f\colon C^\ast \to (\alpha+1)$ of a~reduction of $X$ to $\MaxParity_{\alpha+1}$ by: 
    \[f(x) = \begin{cases}
      \alpha & \tif f_\eta(x) = 2 \text{ for all } \eta,\\
     \inf \{\eta < \alpha \mid f_{\eta}(x) = 1\} & \text{ if not}.
    \end{cases}\]   
       
    It remains to see that for $w\in C^\omega$, $\tilde{f}(w)\in \MaxParity_{\alpha+1}$ if and only if $w\in X$.
    Let $\tilde{f}(w) = \eta_1\eta_2\eta_3\dots$, and let $\eta_{\sup} = \limsup_{i<\alpha} \eta_i$. 
    First, if $\eta_{\sup} = \alpha$ (even), then $w\notin \preimCB{f_\eta} =A_\eta$ for any $\eta$, so $w\notin X$.
    If $\eta_{\sup}<\alpha$, then $\tilde{f_{\eta_{\sup}}}(w)$ eventually only contains $1$s, since the sequence $(f_\eta)_{\eta<\alpha}$ is decreasing.
    Therefore, $w\in \preimCB{f_{\eta_{\sup}}} = A_{\eta_{\sup}}$.
    Also, for $\eta<\eta_{\sup}$, the sequence $\tilde{f_\eta}(w)$ contains infinitely many $2$s (in all positions where $f(w)$ takes the value $\eta_{\sup}$.
    We conclude that $\eta_{\sup}$ is the smallest ordinal such that $w\in A_{\eta_{\sup}}$, so:
    \[ w\in X \quad \Longleftrightarrow \quad \eta_{\sup} \text{ is odd } \quad \Longleftrightarrow \quad \tilde{f}(w)\in \MaxParity_{\alpha+1},\]
    as desired.

    For the case $\alpha$ odd, let $X$ be a set such that its complement equals some $D_\alpha\big((A_\eta)_{\eta<\alpha})$ for a sequence of $\bsigma{2}$-sets. Defining $f$ and $\eta_{\sup}$ as before, we obtain:
    \[ w\notin X \quad \Longleftrightarrow \quad \eta_{\sup} \text{ is even } \quad \Longleftrightarrow \quad \tilde{f}(w)\notin \MaxParity_{\alpha+1},\]
    showing that $\MaxParity_{\alpha+1}$ is complete in $D_\alpha(\bpi{2})$.

    Finally, we prove that $\MaxParity_{\alpha}$ is complete for $D_\alpha(\bsigma{2})$ when $\alpha$ is a limit ordinal.
    It suffices to show that $\MaxParity_{\alpha +1}$ reduces to $\MaxParity_{\alpha}$, as we have already proven $D_\alpha(\bsigma{2})$-hardness of the former.
    Let $(\gamma_i)_{i<\omega}$ be an increasing sequence of ordinals $<\alpha$ such that $\alpha = \sup \gamma_i$.
    It suffices to consider the function $f\colon (\alpha+1)^\omega \to \alpha$ that replaces the occurrences of $\alpha$ in a position $i$ by $\gamma_i$.
    
    This concludes the proof of Theorem~\ref{thm:difference-complete}.
    \end{proof}

As far as we are aware, this constitutes the first positionality proof for complete languages for infinite levels of the difference hierarchy.
This sets a first stone in the systematic study of positional objectives within $\Delta_3$, the natural topological generalisation of $\omega$-regular objectives.

\section{Infinite min-lexicographic products}\label{sec:min-lex}

We introduce infinite min-lexicographic products of a sequence of winning objectives.
Intuitively, the objectives at the beginning of the sequence have priority over those that appear later. 
The sequence can be indexed by any ordinal.
We show that if winning conditions in the product are positional then the
min-lexicographic product objective is positional too (Theorem~\ref{thm:universality_min_lexico}). 
For this we provide an adequate construction of an universal graph from universal
graphs for the components. 

As we shall see, min-lexicographic products turn out to be more complex than max\=/lexicographic ones, for different aspects listed below.
\begin{itemize}
\item Finding the natural definition of infinite min-lexicographic products indexed over ordinals~$>\omega$ is not obvious (see also Remark~\ref{rem:def_min_prod} below for more explanations). 
\item Topologically, min-lexicographic products generally lie beyond $\Delta_3$ (for instance, the product of $\omega$-many trivially winning conditions is in fact $\Sigma_3$-complete).
\item Constructions (and universality proofs) establishing their positionality turn out to be substantially more involved. 
\end{itemize}

\subsection{Definitions and statement of the result}

\paragraph{Setting.} In this section, we fix a cardinal $\kappa \geq 2$, an
ordinal $\alpha$, a family of pairwise disjoint sets of colours
$(C_\lambda)_{\lambda < \alpha}$, and a family of prefix-independent objectives
$(W_\lambda)_{\lambda < \alpha}$ with $W_\lambda \subseteq C_\lambda^\omega$ for
all $\lambda$.
We assume that each $W_\lambda$ has a $\k$-universal well-founded monotone graph
$(U_\lambda,\geq_\lambda)_{\lambda < \a}$.
We will use $C = \bigcup_{\lambda<\alpha} C_\lambda$, as well as $C_{<\lambda},C_{\leq
\lambda},C_{>\lambda},C_{\geq \lambda}$ defined as expected.
For a word $w \in C^\omega$, and an ordinal $\lambda<\alpha$, we let
$\pi_\lambda(w) \in C_\lambda^* \cup C_\lambda^\omega$ denote the (finite or
infinite) projection of $w$ to colours in $C_\lambda$.
Likewise, we let $\pi_{<\lambda}(w)$ denote the projection of $w$ to colours in $C_{<\lambda}$.

\paragraph{Min-lexicographic products.} We say that a word $w$ is
\emph{$\lambda$-supported}\footnote{Formally, being $\lambda$-supported depends
on the sequence $(C_\lambda)_{\lambda < \alpha}$. We do not
explicitly include this dependence in the notation for simplicity.} if
$\pi_\lambda(w)$ is infinite and
$\pi_{<\lambda}(w)$ is finite.
A word is \emph{supported} if it is $\lambda$-supported for some $\lambda$. 
In other words, a word is $\lambda$-supported if (i) after a finite prefix, $\lambda$ is the
smallest index of colours that appears, and (ii) a colour from $C_\lambda$ appears infinitely often. 
In particular, $\lambda$ is uniquely determined by $w$. 
For example, if $\alpha=\omega+1$ and $C_{\lambda}=\{\lambda\}$, then the word
$0\omega1\omega2\omega \dots \in C^\omega$ is not supported, and the word
$12131415\dots$ is $1$-supported. 

We define the \emph{min-lexicographic product} of $(W_\lambda)_{\lambda < \alpha}$ to be
\[
    \minLexProduct{\lambda<\a} W_\lambda = \{w \in C^\omega \mid w \text{ is $\lambda$-supported and } \pi_\lambda(w) \in W_\lambda\}.
\]
Note that for $\alpha < \omega$, every word is supported, and thus in
this case our definition coincides with finite min-lexicographic products.

\begin{lem}
	The min-lexicographic product is associative. Formally, let $(\m_i)_{i<\beta}$ be a strictly increasing sequence of ordinals that is cofinal in $\a$, that is, $\mu_i< \alpha$ and for all $\lambda<\a$ there is $i$ such that $\mu_i > \lambda$. Then
	\[ \minLexProduct{\lambda<\a} W_\lambda = \minLexProduct{i<\beta} \left( \minLexProduct{\mu_i\leq \lambda < \mu_{i+1}} W_\lambda \right). \]
\end{lem}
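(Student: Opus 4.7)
The plan is to set up a bijection between the two ways of witnessing membership, based on the observation that supportedness decomposes cleanly along the chosen cofinal sequence. Write $D_i = \bigcup_{\mu_i \leq \lambda < \mu_{i+1}} C_\lambda$ for the alphabet of the inner product $V_i := \minLexProduct{\mu_i \leq \lambda < \mu_{i+1}} W_\lambda$, and assume (as the statement implicitly requires for coverage of $\alpha$) that $\mu_0 = 0$; then the family $(D_i)_{i<\beta}$ partitions $C$. For any $\lambda < \alpha$ there is a unique $i = i(\lambda) < \beta$ with $\mu_i \leq \lambda < \mu_{i+1}$, and the decomposition $C_{<\lambda} = C_{<\mu_i} \sqcup C_{[\mu_i, \lambda)}$ lifts to the identity
\[
	\pi_{<\lambda}(w) \text{ is finite} \quad \iff \quad \pi_{<\mu_i}(w) \text{ and } \pi_{[\mu_i,\lambda)}(w) \text{ are both finite}.
\]

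The key lemma I would prove is: for every $w \in C^\omega$ and every $\lambda < \alpha$ with $i = i(\lambda)$, the word $w$ is $\lambda$-supported (in the global product indexed by $\alpha$) if and only if $w$ is $i$-supported in the outer product and $\pi_{D_i}(w)$ is $\lambda$-supported in the inner product~$V_i$. The forward direction is immediate: if $\pi_\lambda(w)$ is infinite and $\pi_{<\lambda}(w)$ finite, then a fortiori $\pi_{D_i}(w)$ is infinite and $\pi_{<\mu_i}(w) = \pi_{\bigcup_{i'<i} D_{i'}}(w)$ is finite (giving $i$-support outside), while $\pi_{[\mu_i,\lambda)}(\pi_{D_i}(w)) = \pi_{[\mu_i,\lambda)}(w)$ is finite (giving $\lambda$-support inside $V_i$). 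Conversely, $i$-support yields that $\pi_{<\mu_i}(w)$ is finite and $\lambda$-support inside $V_i$ yields that $\pi_{[\mu_i,\lambda)}(w)$ is finite, so $\pi_{<\lambda}(w)$ is finite; and $\pi_\lambda(w) = \pi_\lambda(\pi_{D_i}(w))$ is infinite.

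Once this is in place, the equality of the two products is a direct rewrite: since $\pi_\lambda(w) = \pi_\lambda(\pi_{D_i}(w))$, the condition $\pi_\lambda(w) \in W_\lambda$ in the global product matches exactly the inner condition used to witness $\pi_{D_i}(w) \in V_i$ via $\lambda$-support, while the outer supportedness witness is supplied by the first half of the key lemma. There is no real obstacle; the only point requiring a little care is the bookkeeping ensuring that the index $\lambda$ witnessing support on the left coincides with the pair $(i,\lambda)$ witnessing support on the right, which is automatic from the uniqueness of $i(\lambda)$ and the fact that supportedness (when it holds) determines its witnessing index uniquely.
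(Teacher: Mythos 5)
Your proof is correct and follows essentially the same route as the paper's: both hinge on the observation that $w$ is $\lambda$-supported for the $\alpha$-indexed family exactly when $w$ is $i(\lambda)$-supported for the outer family and its restriction to the block of colours $D_{i(\lambda)}$ is $\lambda$-supported for the inner one, combined with the identity $\pi_\lambda(w)=\pi_\lambda(\pi_{D_{i(\lambda)}}(w))$. Packaging this as a single biconditional key lemma is a mild presentational improvement, as it treats unsupported words uniformly instead of as a separate case the way the paper does.
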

\begin{proof}
	Let $W$ be the objective on the left of the equality and $\widetilde{W}$ the one on the right.
	Assume that $w\in W$. Then, $w$ is $\lambda$-supported for some $\lambda<\a$ (for the $\a$-partition of the set of colours) and $\pi_\lambda(w)\in W_\lambda$. Let $i$ be the unique ordinal such that $\m_i\leq \lambda < \m_{i+1}$. Then, $w$ is $i$-supported (for the $\beta$-partition of the set of colours), and $\pi_i(w)$ is, in turn, $\lambda$-supported and $\pi_\lambda(\pi_i(w)) = \pi_l(w)\in W_\lambda$, so $\pi_i(w) \in \minLexProduct{\mu_i\leq \lambda < \mu_{i+1}} W_\lambda$ and $w\in \widetilde{W}$.
	
	Conversely, assume $w\notin W$. If $w$ is supported, we conclude using the same argument as above. Assume $w$ is not supported (for the $\a$-partition of the set of colours). Then, for all $i<\beta$, $\pi_i(w)$ is not $\lambda$-supported for any $\mu_i\leq \lambda < \mu_{i+1}$, so $\pi_i(w)\notin \minLexProduct{\mu_i\leq \lambda < \mu_{i+1}}$, so $w\notin \widetilde{W}$.
\end{proof}

\begin{rem}\label{rem:def_min_prod}
Another possible definition of min-lexicographic product could be
\[
    W' = \{w \in C^\omega \mid \lambda_0= \mininf (w) \text{ is defined and }  \pi_{\lambda_0}(w) \in W_{\lambda_0}\},
\]
where $\mininf(w)$ is the minimal $\lambda<\alpha$ such that there are
infinitely many occurrences of colours from $C_\lambda$ in $w$.
The two definitions coincide for $\alpha \leq \omega$, but they are different
for $\a=\w+1$.
Indeed, take $C_\lambda=\{\lambda\}$, $W_i
= \TL_i$  for $i< \omega$ and $W_\omega=\TW_{\{\omega\}}$ (we write $\TW_{\{\omega\}}$ instead of
$\TW_{\omega}$ to avoid any ambiguities here).
Observe that $\mininf(0\w1\w2\w\dots) = \w$ while
this word is not supported. 
So this word is not in the min-lexicographic product, but it is in $W'$, showing that the two definitions
are different.

However, this modified definition has several disadvantages, that already appear
for the example above.
Firstly, the modified operation is not associative. 
Indeed, the product of the $W_i$'s for $i<\omega$ is exactly
$\TL_{\omega}$, the trivially losing objective over $\omega$ (for both definitions). 
Hence, $w\notin \TL_\omega \ltimes \TW_{\{\omega\}}$, so $\TL_\omega \ltimes \TW_{\{\omega\}} \neq W'$.

Moreover, $W'$ is even not positional: to win in the game from
Figure~\ref{fig:counter_example_naive_def}, Eve cannot use a positional
strategy.  
So the modified definition does not preserve positionality. 
As we will show, our definition does preserve positionality.

\begin{figure}[h]
\begin{center}
\includegraphics[width=0.26\linewidth]{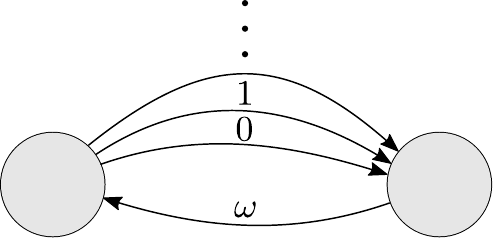}
\end{center}
\caption{A game in which Eve requires memory to ensure objective $W'$, for instance by playing a path labelled $0\omega 1 \omega \dots$.}\label{fig:counter_example_naive_def}
\end{figure}
\end{rem}

\paragraph{Main result.} We can now state the main result of the section: the closure of  positional objectives under infinite min-lexicographic products.

\begin{thm}\label{th:main-pos-min-lex}
	Prefix-independent  positional objectives, as well as prefix-independent objectives having wpo-monotone graphs, are closed under arbitrary min-lexicographic products.
\end{thm}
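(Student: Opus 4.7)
The plan is to construct, for every cardinal $\kappa$, a well-founded (respectively wpo) monotone $\kappa$\=/universal graph $U$ for $W = \minLexProduct{\lambda<\alpha}W_\lambda$; positionality then follows from Theorem~\ref{th:positional=UnivGraphs}. I proceed by transfinite induction on $\alpha$. The base case $\alpha=0$ is trivial. For the successor case $\alpha = \alpha'+1$, a direct unfolding of definitions gives
\[
    \minLexProduct{\lambda<\alpha}W_\lambda = \Bigl(\minLexProduct{\lambda<\alpha'}W_\lambda\Bigr)\ltimes W_{\alpha'},
\]
so the induction hypothesis together with Theorem~\ref{thm:universality_finite_lexico} provides the required universal graph.

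The substantive work is the limit case. Using the associativity lemma above applied to a cofinal sequence $(\mu_i)_{i<\mathrm{cf}(\alpha)}$, one reduces to the situation where the index is the regular cardinal $\beta = \mathrm{cf}(\alpha)$ and each block $V_i = \minLexProduct{\mu_i\leq\lambda<\mu_{i+1}} W_\lambda$ is already positional with a well-founded monotone $\kappa$\=/universal graph $U_i$ (by the induction hypothesis, since $V_i$ is a min\=/lex product over an ordinal strictly smaller than $\alpha$). The candidate universal graph $U$ is then built over the vertex set $\{(i,v) : i<\beta,\ v\in V(U_i)\}$, equipped with the order $(i,v)\geq(i',v')$ iff $i>i'$ or ($i=i'$ and $v\geq_{U_i} v'$); this is well-founded because any infinite decreasing sequence must stabilise the level $i$ and then force an infinite decrease in the stabilised $U_i$. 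Its edges fall in three families: same-level edges inherited from each $U_i$ for colours in block $i$; ``descending'' edges to lower levels, lifted from the directed-sum construction and allowed for arbitrary colours; and carefully restricted ``ascending'' edges to higher levels. One then checks monotonicity and shows that any infinite path in $U$ either eventually stabilises at some level $i^*$ --- so its labels are supported within $V_{i^*}$ by universality of $U_{i^*}$ --- or performs infinitely many ascending moves, in which case the restrictions on ascending edges force some bounded-index block to contribute infinitely often, yielding again a supported word in $W$.

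Universality is then established via Lemma~\ref{lemma:almost-universal}: for any $G$ satisfying $W$ of size $<\kappa$, I would exhibit a vertex $v_0$ with $G[v_0]\to U$ by assigning a level $\ell(v)<\beta$ to each vertex of $G[v_0]$ together with a position in $V(U_{\ell(v)})$, invoking the inductive universality of $U_{\ell(v)}$ on the level\=/$\ell(v)$ slice of $G[v_0]$. The main obstacle is the design of the ascending edges. Unlike the max\=/lex case, the min\=/lex order on infinite tuples is not well\=/founded, so ascending edges cannot be obtained by lifting a lexicographic order on tuples; and the ladder graph $u_n \re{n} u_n$, $u_n \re{0} u_{n+1}$ (which satisfies, e.g., the $\omega$\=/Büchi objective) shows that $U$ must admit an infinite chain of level\=/increasing edges, so one cannot constrain them by any naive well-founded rank that strictly decreases along ascending moves. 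Balancing these constraints --- likely through a careful restriction of which colours may label ascending edges, combined with extra bookkeeping in the vertex set --- is the central technical challenge of the construction, and is what the paper refers to when remarking that min\=/lex products are ``substantially more involved'' than max\=/lex.
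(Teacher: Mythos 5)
There is a genuine gap: you correctly identify the central difficulty --- that the graph must admit infinite chains of ``ascending'' (level-increasing) edges, as the ladder example for $\omegaBuchi$ shows, while remaining well-founded and satisfying $W$ --- but you stop exactly there, deferring the design of the ascending edges and the ``extra bookkeeping'' to future work. That design \emph{is} the proof; without it the limit case is not established. The paper's resolution is the power graph $U^\beta$, whose vertices are pairs $(f,S)$ with $f\colon\alpha\to\beta$ a non-increasing ordinal-valued budget function and $S(\lambda)\in V(U_\lambda^\top)$ a position in the $\lambda$-th universal graph extended with a top element, linked by condition~\eqref{eq:vertex-condition}: $S(\lambda)\neq\top_\lambda$ forces $f(\lambda)>f(\lambda+1)$, so only finitely many coordinates are active at a time. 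An edge coloured in $C_\lambda$ either strictly decreases the prefix $(f,S)_{\leb\lambda}$ (this is how a return to a lower block is ``paid for'' out of the budget) or keeps it fixed and moves inside $U_\lambda^\top$. This is strictly more state than your proposed vertex set $\{(i,v): i<\beta,\ v\in V(U_i)\}$: a single current level plus a position cannot certify that low-indexed blocks are revisited only finitely often, whereas the interleaved lexicographic order on $(f,S)$ makes well-foundedness and satisfaction of $W$ provable (Lemma~\ref{lem:prop_U}).

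A second, independent gap is the universality argument. Your sketch (``assign a level $\ell(v)$ to each vertex of $G[v_0]$ and invoke universality of $U_{\ell(v)}$ on the slice'') cannot work as stated, because the eventual support level of a path in $G$ is not determined by any vertex-local data. The paper instead argues by contradiction: assuming $G$ satisfies $W$ but cannot be mapped, it combines closure of mappability under directed sums (Lemma~\ref{lem:map_sums}), a minimal-morphism argument together with the composition Lemma~\ref{lem:composition} to show that every colour-restriction $G^{\geq\lambda}$ also fails to be mappable (Claim~\ref{cl:10}, where the induction on $\alpha$ enters), and Lemma~\ref{lem:sublem1} to iteratively extract a path whose label is not supported --- contradicting that $G$ satisfies $W$. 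Your successor-case reduction to Theorem~\ref{thm:universality_finite_lexico} is fine but only covers the easy part; the limit case, which you flag but do not carry out, is where all the content lies.
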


To prove Theorem~\label{thm:main-pos-min-lex}, we will construct a universal graph for the min-lexicographic product.

\subsection{Applications: \texorpdfstring{$\omega$}{w}-Büchi and Min-Parity.} Before moving on to the proof of Theorem~\ref{th:main-pos-min-lex} in Subsection~\ref{subsec:universal-graph-for-min-lex}, which spans most of the remaining section, we present two applications.

\paragraph{$\omega$-B\"uchi.}
For $\alpha=\omega$, $C_i=\{i\}$ and $W_i=\TW_i$ for $i<\alpha$, the min-lexicographic product yields:
\[
\omegaBuchi = \{w \in \omega^\omega \mid \exists i, |w|_i \text{ is infinite}\},
\]
which one can see as an infinite union of Büchi objectives.
Grädel and Walukiewicz~\cite{GW06} proved bi-positionality of $\omegaBuchi$ over vertex-labelled game graphs.
Theorem~\ref{th:main-pos-min-lex} implies positionality\footnote{We recall here that in this paper, positionality means ``positionality in the presence of a neutral letter''. For this objective, the fact that adding a neutral letter retains positionality is non-trivial.} over
edge-labelled game graphs; it is easy to see that positionality for the opponent
fails for edge-labelled graphs. 

We note that the language $\omegaBuchi$ is complete for the class $\bsigma{3}$~\cite[Exercise~23.2]{Kechris1995}.

\paragraph{Min-Parity.}
We now discuss the case of the Min-Parity languages.
We let $W'_\lambda$ be the language over $C_\lambda=\{\lambda\}$ that equals $\TW_\lambda$ if $\lambda$ is even and $\TL_\lambda$ if $\lambda$ is odd.
We define the Min-Parity objective $\MinParity_\alpha$ as the lexicographic product of the $W_\lambda$'s for $\lambda<\alpha$.
Equivalently, it can be written as:
\[
    \MinParity_\alpha = \{w \in \alpha ^\omega \mid  w \text{ is $\lambda$-supported for an even $\lambda$}\}.
\]

\begin{cor}
    For every countable ordinal $\alpha$, $\MinParity_\alpha$ is positional.
\end{cor}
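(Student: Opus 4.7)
The plan is to obtain this as an essentially immediate application of Theorem~\ref{th:main-pos-min-lex}. The definition of $\MinParity_\alpha$ given just above the corollary already exhibits it as the min-lexicographic product $\minLexProduct{\lambda<\alpha} W'_\lambda$ over the pairwise disjoint singleton colour sets $C_\lambda = \{\lambda\}$, where $W'_\lambda$ is either $\TW_\lambda$ (if $\lambda$ is even) or $\TL_\lambda$ (if $\lambda$ is odd). Since $\alpha$ is countable, this is in particular a legitimate (countable) ordinal\=/indexed family of objectives to which the theorem applies.

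The first step is to verify that each component $W'_\lambda$ is a prefix-independent positional objective. Prefix-independence is immediate since $W'_\lambda$ is a trivial objective over a singleton alphabet. For positionality, the preliminaries already provide explicit well\=/ordered monotone $\kappa$\=/universal graphs for both trivial objectives and every cardinal $\kappa$: the single-vertex loop graph $\loopC{$\lambda$}$ for $\TW_\lambda$, and the ordinal chain $\bullet\caprodPow{\lambda}\kappa$ for $\TL_\lambda$. Hence by Theorem~\ref{th:positional=UnivGraphs} each $W'_\lambda$ is positional.

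The second step is to observe that $\MinParity_\alpha$ is itself prefix-independent (being $\lambda$-supported is invariant under adding or removing a finite prefix, and $\pi_\lambda(w) \in W'_\lambda$ is prefix-independent as well), so Theorem~\ref{th:main-pos-min-lex} applies directly to the family $(W'_\lambda)_{\lambda<\alpha}$ and yields positionality of the product, which is $\MinParity_\alpha$. I do not foresee any obstacle specific to this corollary; the entire substantive content is concentrated in Theorem~\ref{th:main-pos-min-lex}, whose proof, namely the construction of a well\=/ordered (or wpo) monotone universal graph for a min-lexicographic product from universal graphs for the components, is the main technical work of Section~\ref{sec:min-lex}.
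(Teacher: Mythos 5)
Your proposal is correct and follows exactly the paper's (implicit) argument: the corollary is stated as an immediate consequence of Theorem~\ref{th:main-pos-min-lex} applied to the family of trivial objectives $W'_\lambda$, whose positionality is witnessed by the universal graphs $\loopC{$\lambda$}$ and $\bullet\caprodPow{\lambda}\kappa$ from the preliminaries. All the substantive work is indeed in the proof of Theorem~\ref{thm:universality_min_lexico}, and your verification of the hypotheses (prefix-independence and positionality of each component) is exactly what is needed.
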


For finite $\alpha$, $\MinParity_\alpha$ is complete for the finite levels of the difference hierarchy over $\bsigma{2}$, as it is interreducible with $\MaxParity_\alpha$,\footnote{More precisely, for finite $\alpha$, $\MinParity_\alpha$ interreduces to $\MaxParity_\alpha$ for $\alpha$ even and with the complement of $\MaxParity_\alpha$ for $\alpha$ odd.}
whose completeness was shown in Theorem~\ref{thm:difference-complete} (see also~\cite{mskrzypczak_colorings}).
For infinite $\alpha$, however, $\MinParity_\alpha$ lies beyond $\bdelta{3}$.

\begin{thm}
    \label{thm:sigma3-complete}
    For each infinite countable $\alpha$, the language $\MinParity_{\alpha}$ is complete for $\bsigma{3}$.
\end{thm}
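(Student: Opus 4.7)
The plan is to prove membership in $\bsigma{3}$ and $\bsigma{3}$-hardness separately.

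For the upper bound, I would decompose
\[
\MinParity_\alpha = \bigcup_{\lambda < \alpha,\ \lambda \text{ even}} S_\lambda, \quad \text{where } S_\lambda = \{w \in \alpha^\omega : \pi_\lambda(w) \text{ is infinite and } \pi_{<\lambda}(w) \text{ is finite}\}.
\]
The condition ``$\pi_\lambda(w)$ is infinite'' is $\bpi{2}$ (``for all $n$, there exists $m \geq n$ with $w_m = \lambda$''), and ``$\pi_{<\lambda}(w)$ is finite'' is $\bsigma{2}$ (``there exists $n$ such that $w_m \geq \lambda$ for all $m \geq n$''). Hence each $S_\lambda$ lies in $\bsigma{3}$, and since $\alpha$ is countable, so does the union.

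For $\bsigma{3}$-hardness, I would reduce the language $\omegaBuchi$, already recalled in the paragraph on $\omega$\=/Büchi to be $\bsigma{3}$-complete, to $\MinParity_\alpha$. Since $\omega \leq \alpha$, I take the $1$-Lipschitz map $f \colon \omega^\omega \to \alpha^\omega$ that doubles every letter, i.e.~$f(w)_n = 2 w_n$. If $w \in \omegaBuchi$ and $i$ is the smallest natural number appearing infinitely often in $w$, then $f(w)$ has $2i$ appearing infinitely often whereas the only possibly smaller letters $0, 2, \ldots, 2(i-1)$ all appear finitely often; hence $f(w)$ is $(2i)$-supported with $2i$ even, so $f(w) \in \MinParity_\alpha$. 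Conversely, if $f(w)$ is $\lambda$-supported for some even $\lambda$, then since $f(w)$ takes values only in even natural numbers, $\lambda = 2i$ for some $i < \omega$, and then $i$ appears infinitely often in $w$, so $w \in \omegaBuchi$.

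Once the classical $\bsigma{3}$-completeness of $\omegaBuchi$ is taken as given, neither direction poses a substantive obstacle. The only mild subtlety is to check that words $w$ in which no letter appears infinitely often must be sent to words with no support at all, which is immediate from the construction since doubling preserves whether a given letter is seen infinitely often. The whole proof is essentially bookkeeping, and the real content of this section is the positionality statement established through the universal graph construction rather than the topological classification itself.
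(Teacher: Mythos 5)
Your proposal is correct and follows essentially the same route as the paper: the same decomposition of $\MinParity_\alpha$ into the countable union of the sets $S_\lambda$ of $\lambda$-supported words for even $\lambda$, each written as the intersection of a $\bsigma{2}$ and a $\bpi{2}$ condition, and the same hardness reduction from $\omegaBuchi$ via the letter-doubling map. The only difference is that you spell out the verification of the reduction in slightly more detail than the paper does.
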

\begin{proof}
    To show that $\MinParity_{\alpha}$ belongs to $\bsigma{3}$, we can write it as:
    \[ \MinParity_\alpha = \bigcup_{\substack{\lambda<\alpha,\\ \lambda \text{ even}}} S_\lambda \quad \text{ , where } \;\; S_\lambda = \{ w\in \alpha^\omega \mid w \text{ is $\lambda$-supported}\}. \]
    Now, we can write each $S_\lambda$ as an intersection of a $\bsigma{2}$ and a $\bpi{2}$-set:
    \[ S_\lambda =  \{w\in \alpha^\omega \mid |w|_{<\lambda} \text{ is finite}\} \cap \{w\in \alpha^\omega \mid |w|_\lambda \text{ is infinite}\}. \] 
    In particular, $S_\lambda$ is a $\bsigma{3}$-set. Since $\bsigma{3}$ is closed under countable unions, we conclude that $\MinParity_{\alpha}$ belongs to $\bsigma{3}$.

    To show $\bsigma{3}$-hardness, we reduce $\omegaBuchi$ to $\MinParity_\alpha$, for $\alpha\geq \omega$. If suffices to take the function $f\colon \omega^\omega \to \alpha^\omega$ that replaces a letter $i$ by $2i$. 
    In this way, $f(w)\in \MinParity_\alpha$ if and only if there is a number $i$ that appears infinitely often in $w$.
\end{proof}

\subsection{Universal graph for min-lexicographic products}\label{subsec:universal-graph-for-min-lex}
In the rest of the section, we let  $W = \minLexProduct{\lambda<\a} W_\lambda$.
To show positionality of $W$, we define for every ordinal $\b$ the \emph{power graph} $U^{\b}$, using the universal graphs $(U_\lambda,\geq_\lambda)_{\lambda < \a}$.
We show that $U^\b$ is $\k$-universal for $W$ if $\b$ is chosen large
enough (Theorem~\ref{thm:universality_min_lexico}).
In all the section $\b$ is an arbitrary but fixed ordinal.

\subsubsection{Construction of the universal graph}
For each $\lambda$, we consider the ordered graph $U_\lambda^\top$ obtained from
$U_\lambda$ by adding a fresh maximal vertex $\top_\lambda$ with no incoming
edge and all possible outgoing edges except towards itself; formally,
$E(U_\lambda^\top) = E(U_\lambda) \cup (\{\top_\lambda\} \times C_\lambda \times
V(U_\lambda))$.
Note that $U_\lambda^\top$ is well-founded, monotone, and $\kappa$-universal for $W_\lambda$.

\paragraph{Vertices of $U^{\beta}$.} The vertices of $U^{\beta}$ are the
pairs $(f,S)$, where $f:\alpha \to \beta$ is a non-increasing function and
$S:\alpha\to \bigcup_{\lambda<\alpha} V(U_\lambda^\top)$ is such that $S(\lambda)\in
V(U^\top_\lambda)$ for all $\lambda<\a$.
Moreover, the two functions are linked by the condition: For all $\lambda<\alpha$,
\begin{equation}\label{eq:vertex-condition}
     S(\lambda) \neq \top_\lambda \quad \implies \quad f(\lambda) > f(\lambda+1).
\end{equation}
This condition implies that there may be only finitely many $\lambda$'s for which $S(\lambda)\neq \top_\lambda$.

\paragraph{Order over $U^\beta$.} 
A vertex $(f,S)$ is a pair of sequences. 
Vertices are ordered by lexicographic order over the interleaving of these two
sequences:  $f(0), S(0), f(1), S(1) \dots$ where lesser coordinates matter the most.
To define this formally we introduce a piece of notation.
Given $(f,S) \in V(U^{\beta})$ and
$\lambda \leq \alpha$, we let $(f,S)_{<\lambda}$ be obtained by restricting the domains of $f$ and $S$ to
$\lambda$.
We let $(f,S) > (f',S')$ if and only if
\[
    \exists \lambda < \alpha, \quad (f,S)_{<\lambda} = (f',S')_{<\lambda} \tand [f(\lambda)>f'(\lambda) \tor 
    (f(\lambda)=f(\lambda) \tand S(\lambda) >_\lambda S'(\lambda))].
\]
Clearly, the above order is total assuming each $\geq_\lambda$ is.

It is also convenient to define $(f,S)_{\leb \lambda}$ that is  obtained from
$(f,S)$ by restricting $f$ to $\lambda+1$ and $S$ to $\lambda$ (equivalently,
$(f,S)_{\leb \lambda}$ is obtained by extending the map from $(f,S)_{<\lambda}$
by $\lambda \mapsto f(\lambda)$). 

Using this notation, we get that $(f,S)>(f',S')$ if and only if there exists $\lambda < \alpha$ such that
\[
  [(f,S)_{<\lambda} = (f',S')_{< \lambda} \tand f(\lambda) > f'(\lambda)] \quad \tor \quad [(f,S)_{\leb \lambda} = (f',S')_{\leb \lambda} \tand S(\lambda) >_\lambda S'(\lambda)].
\]

\paragraph{Edges of $U^\beta$.} For a colour $c_\lambda \in C_\lambda$ and
vertices $(f,S),(f',S') \in V(U^\beta)$, we let $(f,S) \re{c_\lambda} (f',S')
\in E(U^\beta)$ if and only if 
\[
    (f,S)_{\leb \lambda} > (f',S')_{\leb \lambda} \quad \tor \quad [(f,S)_{\leb \lambda} = (f',S')_{\leb \lambda} \tand S(\lambda) \re {c_\lambda} S'(\lambda) \in E(U_\lambda^\top)].
\]
This definition ensures a property we will often use in proofs:
\begin{equation}\label{eq:edges-keep-order}
    \text{if $(f,S) \re {c_\lambda} (f',S')$ and $c_\lambda\in C_\lambda$ then $(f,S)_{\leb \lambda} \geq (f',S')_{\leb \lambda}$.}
\end{equation}
meaning that a transition on a colour $c_\lambda$ does not increase the part of
the state before coordinate $\lambda$, nor the $f$ component of the coordinate $\lambda$.

\subsubsection{Monotonicity and compositionality}\label{subsec:universal-graph-monotonicity}

\subparagraph{Monotonicity and satisfiability of $W$.} We show that the power graph $U^\b$ is monotone and satisfies $W$.

\begin{lem}\label{lem:prop_U}
    The graph $U^\b$ defined above is:
\begin{enumerate}[1.]
    \item well-founded,
    \item monotone,
    \item satisfies $W$,
    \item is a wqo if  all $(U_\lambda,\geq_\lambda)$ are wqo's.
\end{enumerate}
\end{lem}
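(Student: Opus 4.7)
The plan is to verify the four items in turn; parts~(2) and~(4) are less involved, while~(1) and especially~(3) require more work. For~(2), left- and right-monotonicity follow from the observation that $(f,S) \geq (g,T)$ implies $(f,S)_{\leb \lambda} \geq (g,T)_{\leb \lambda}$ (the lex order restricts to prefixes), and that when these restrictions coincide one has $S(\lambda) \geq_\lambda T(\lambda)$ in $U_\lambda^\top$; the edge definition then splits cleanly into a strict-prefix case and an equal-prefix case, and monotonicity of $U_\lambda^\top$ lifts to $U^\b$. For~(4), the argument adapts those of~(1) and~(3) by replacing well-foundedness with the wpo property, invoking Higman-style reasoning on the finite signatures $\{\mu : S(\mu) \neq \top_\mu\}$ of vertices (these sets are finite because $\b$ is well-founded and condition~(\ref{eq:vertex-condition}) forces each such coordinate to witness a strict descent of $f$).

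For~(1), I argue by contradiction. Suppose $((f_n, S_n))_n$ is a strictly descending chain in $V(U^\b)$, and let $\mu_n < \alpha$ be the least coordinate where $(f_n, S_n)$ and $(f_{n+1}, S_{n+1})$ differ. Set $\mu^* = \liminf_n \mu_n$. A case analysis on whether $\mu^*$ is $0$, a successor, or a limit shows that $\mu_n = \mu^*$ must occur for infinitely many $n$: for $\mu^*$ a limit, the only alternative is $\mu_n > \mu^*$ eventually, which forces $\liminf_n \mu_n \geq \mu^* + 1$, a contradiction. Whenever $\mu_n > \mu^*$, the pair $(f_n(\mu^*), S_n(\mu^*))$ is unchanged (coordinates below $\mu_n$ agree between consecutive vertices), and whenever $\mu_n = \mu^*$ it strictly decreases in the lex order on the well-founded product $\b \times (U_{\mu^*}^\top, \geq_{\mu^*})$. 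This yields an infinite descending chain in a well-founded order, a contradiction.

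For~(3), consider an infinite path $((f_n, S_n))_n$ with label $w$ and index sequence $(\lambda_n)_n$, and set $\lambda^* = \liminf_n \lambda_n$. The key step is to show $\lambda^*$ is $0$ or a successor ordinal with $\lambda^* < \alpha$. For each $\nu < \lambda^*$, eventually $\lambda_n > \nu$ and hence, by~(\ref{eq:edges-keep-order}), $(f_n, S_n)_{\leb \nu}$ is eventually non-increasing; running the argument of~(1) inside this restricted order shows that it stabilizes, producing pointwise limits $\bar f$ (non-increasing, into $\b$) and $\bar S$ on $[0, \lambda^*)$. Since $\bar f$ is non-increasing into a well-order, it is eventually constant at some $\bar c \in \b$, and~(\ref{eq:vertex-condition}) then forces $\bar S(\nu) = \top_\nu$ on a final segment $[\nu_0, \lambda^*)$. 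If $\lambda^*$ were a limit (or equal to $\alpha$), one could pick $n$ large with $\lambda_n \in [\nu_0, \lambda^*)$ and $(f_n, S_n)$ already agreeing with $(\bar f, \bar S)$ on $[0, \nu_0]$; a case analysis on the two alternatives of the edge definition, combined with~(\ref{eq:vertex-condition}) applied at coordinate $\lambda_n$, shows that no valid outgoing transition exists, contradicting the existence of the path. Having established that $\lambda^*$ is a successor or $0$ with $\lambda^* < \alpha$, stabilization of $(f_n, S_n)_{\leb \lambda^*}$ places the transitions along the subsequence $(n_k)_k$ with $\lambda_{n_k} = \lambda^*$ into the equal-prefix case of the edge definition, giving edges $S_{n_k}(\lambda^*) \re{c_{n_k}} S_{n_k+1}(\lambda^*)$ in $U_{\lambda^*}^\top$; right-monotonicity of $U_{\lambda^*}^\top$ then absorbs the intermediate steps (where $\lambda_n > \lambda^*$), yielding an infinite path in $U_{\lambda^*}^\top$ labelled by a suffix of $\pi_{\lambda^*}(w)$. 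Since $U_{\lambda^*}^\top$ satisfies $W_{\lambda^*}$ and this objective is prefix-independent, $\pi_{\lambda^*}(w) \in W_{\lambda^*}$, and $w$ is $\lambda^*$-supported because $\lambda_n \geq \lambda^*$ eventually and $\lambda_n = \lambda^*$ infinitely often; hence $w \in W$. The main obstacle I expect is precisely the step ruling out $\lambda^*$ being a limit (or $\alpha$), where the delicate interplay between~(\ref{eq:vertex-condition}) and the possible edges of $U_{\lambda_n}^\top$ has to be analysed carefully at the ``frontier'' coordinate $\lambda_n$.
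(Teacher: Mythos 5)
Your overall plan (coordinate-wise lexicographic analysis, with items 2 and 4 sketched in essentially the paper's direction) is sound, but item 1 contains a genuine gap. The case analysis you invoke for $\mu^* = \liminf_n \mu_n$ is incomplete: when $\mu^*$ is a limit ordinal, besides ``$\mu_n = \mu^*$ infinitely often'' and ``$\mu_n > \mu^*$ eventually'' there is a third possibility, namely that $\mu_n < \mu^*$ for infinitely many $n$, with these values cofinal in $\mu^*$ from below. Your argument never accounts for steps with $\mu_n < \mu^*$: at such a step the coordinate $\mu^*$ lies \emph{above} the first point of difference, so it may increase, and in the purely cofinal scenario each coordinate $\mu_n$ decreases exactly once and is then frozen, so no single coordinate decreases infinitely often and the descent inside $\beta \times V(U_{\mu^*}^\top)$ never gets started. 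This scenario is not vacuous --- it is precisely why the plain lexicographic order on an infinite product is not well-founded (compare $1000\cdots > 0100\cdots > 0010\cdots > \cdots$), and ruling it out is exactly where condition~\eqref{eq:vertex-condition} must enter; your proof of item 1 never uses it. The paper's proof instead converts each strict decrease at a coordinate $\lambda_k$ into a strict descent $f^{i_k}(\lambda_k) > f^{i_k+1}(\lambda_k+1) \geq f^{i_{k+1}}(\lambda_{k+1})$ via~\eqref{eq:vertex-condition} and monotonicity of $f$, and chains these into an infinite decreasing sequence of ordinals below $\beta$.

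The same mechanism is missing from item 3, in the very case you flag as the obstacle. First, the claimed key step ``$\lambda^*$ is $0$ or a successor'' is not the right dichotomy: $\lambda^*$ can be a limit ordinal attained infinitely often, in which case $w$ is $\lambda^*$-supported and there is nothing to contradict; the relevant split is whether the minimal index occurring in the tail is attained infinitely often. More seriously, your contradiction in the unsupported case --- that ``no valid outgoing transition exists'' at a single well-chosen step $n$ --- needs all coordinates in $[0,\lambda_n]$ to have reached their limiting values by time $n$. Pointwise stabilization does not give this uniformly over the possibly infinitely many coordinates $\nu \leq \lambda_n$: coordinate $\nu_k$ may stabilize only at time $k$ with $\nu_k$ increasing, so at every finite time some coordinate strictly between $\nu_0$ and $\lambda_n$ is still moving and the strict-decrease alternative $(f_n,S_n)_{\leb \lambda_n} > (f_{n+1},S_{n+1})_{\leb \lambda_n}$ of the edge definition remains available. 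A valid transition therefore does exist at each step; the contradiction is global, not local. The paper obtains it by recording, each time the running minimal index $\lambda_k$ occurs for the last time, either a strict decrease of $(f,S)_{\leb \lambda_k}$ (which well-foundedness forbids happening infinitely often) or, via~\eqref{eq:vertex-condition}, a strict descent $f^{i_k}(\lambda_k) > f^{i_k+1}(\lambda_k+1)$, and chaining the latter into an infinite decreasing sequence of ordinals. Without this chaining argument, both item 1 and the unsupported case of item 3 remain unproved.
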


\begin{proof} We prove the four items in order.
\begin{enumerate}[1.]
    \item Towards a contradiction, consider an infinite decreasing sequence $(f^i,S^i)_{i \in \omega}$ of vertices of $U^\beta$.
    Let $\lambda_0$ be the minimal $\lambda < \alpha$ such that $(f^i(\lambda),S^i(\lambda))$ is not constant.
    Since $(f^i,S^i)_{<\lambda_0}$ is constant, it must be that
    $(f^i(\lambda_0),S^i(\lambda_0))_{i\in\w}$ is non-increasing. 
    Now since $\beta \times V(U_{\lambda_0}^\top)$ is well-founded, the above
    sequence is ultimately constant.
    Let $i_0$ be the last index of strict decrease: $(f^{i_0}(\lambda_0),S^{i_0}(\lambda_0)) >
    (f^{i_0+1}(\lambda_0),S^{i_0+1}(\lambda_0)) =
    (f^{i}(\lambda_0),S^{i}(\lambda_0))$, for all $i > i_0$. 
    
    We show that $f^{i_0}(\lambda_0) > f^{i_0+1}(\lambda_0+1)$.
    By the definition of order we have two cases.
    If $f^{i_0}(\lambda_0) > f^{i_0+1}(\lambda_0)$ then the property holds as
    $f^{i_0+1}(\lambda_0)\geq f^{i_0+1}(\lambda_0+1)$.
    The second case is when $f^{i_0}(\lambda_0) = f^{i_0+1}(\lambda_0)$ and
    $S^{i_0}(\lambda_0)>S^{i_0+1}(\lambda_0)$.
    But then $S^{i_0+1}(\lambda_0) \neq \top$ and therefore
    $f^{i_0+1}(\lambda_0) > f(i_0+1)(\lambda_0+1)$ by the
    condition~\eqref{eq:vertex-condition} on vertices of $U^\b$.
    So the property holds in this case too. 

    Now, repeating the same argument on the suffix $(f^{i},S^{i})_{i > i_0}$ we
    find $\lambda_1>\lambda_0$ and $i_1>i_0$ such that $f^{i_0}(\lambda_0) >
    f^{i_0+1}(\lambda_0+1) \geq f^{i_1}(\lambda_1)> f^{i_1+1}(\lambda_1+1)$.
    Iterating this construction, we obtain an infinite decreasing sequence of
    ordinals: a contradiction.
    
    \item Let $(f,S),(f',S'),(f'',S'')$ be vertices of $U^\beta$ and let
    $c_\lambda \in C_\lambda$. 
    We consider only the left monotonicity, right monotonicity being similar. 
    Assume $(f,S) \re{c_\lambda} (f',S') > (f'',S'')$.
    Using~\eqref{eq:edges-keep-order}, we have the following chain of non-strict inequalities
    \[
            (f,S)_{\leb \lambda} \geq (f',S')_{\leb \lambda} \geq (f'',S'')_{\leb \lambda}
    \]
    and conclude that $(f,S) \re{c_\lambda} (f'',S'')$ if any of them is strict.
    Otherwise, the above are equalities, so the definition of transitions and
    order gives us $S(\lambda) \re {c_\lambda} S'(\lambda) \geq
    S''(\lambda)$ in $U_\lambda^\top$.
    By monotonicity of $U_\lambda^{\top}$ we have $S(\lambda) \re{c_\lambda}
    S''(\lambda)$ giving us the desired $(f,S) \re{c_\lambda} (f'',S'')$.
    
    \item Consider an infinite path $(f^0,S^0) \re{c^0_{\lambda^0}} (f^1,S^1)
    \re{c^1_{\lambda^1}} \dots $ in $U^\beta$ where for all $i$,
    $c^i_{\lambda^i} \in C_{\lambda^i}$.
    Let $w=c^0_{\lambda^0} c^1_{\lambda^1}\dots$.
    The aim is to prove that $w \in W$.
    Let $\lambda_0$ be minimal among the $\lambda^i$'s and distinguish two cases.
    \begin{itemize}
        \item If $\lambda_0$ appears infinitely often among the $\lambda^i$'s, then $w$ is $\lambda_0$-supported, so we must prove that $\pi_{\lambda_0}(w) \in W_\lambda$.
        Since all $\lambda^i$'s are $\geq \lambda_0$,
        property~\eqref{eq:edges-keep-order} gives $(f^i,S^i)_{\leb \lambda_0}
        \geq (f^{i+1},S^{i+1})_{\leb \lambda_0}$ for all $i$.
        Therefore, thanks to well\=/foundedness, $(f^i,S^i)_{\leb \lambda_0}$
        is eventually constant, say starting from index $i_0$.
        Consider any $i\geq i_0$.
        If $\lambda^i=\lambda_0$ we must have $S^i(\lambda_0)\re{c^i_{\lambda_0}}S^{i+1}(\lambda_0)$. 
        Otherwise, $\lambda^i > \lambda_0$, so we have both $(f^i,S^i)_{\leb \lambda^i} \geq (f^{i+1},S^{i+1})_{\leb \lambda^i}$ and $(f^i,S^i)_{\leb \lambda_0} = (f^{i+1},S^{i+1})_{\leb \lambda_0}$, which implies that $S^{i}(\lambda_0) \geq_{\lambda_0} S^{i+1}(\lambda_0)$.
        Therefore for $i \geq i_0$, we have $S^{i}(\lambda_0) \re{c^i_{\lambda^i}} S^{i_0+1}(\lambda_0) \in E(U_\lambda^\top)$ if $\lambda^{i}=\lambda$, and $S^{i}(\lambda_0) \geq S^{i+1}(\lambda_0)$ otherwise.
        Thanks to monotonicity of transitions in $U^\top_{\lambda_0}$ we conclude
        that $\pi_{\lambda_0}(w_{\geq i_0})$ labels a path of
        $U_{\lambda_0}^\top$. 
        By universality of $U_{\lambda_0}^\top$, this path satisfies
        $W_{\lambda_0}$.
        By prefix-independence $\pi_{\lambda_0}(w)$ also satisfies $W_{\lambda_0}$.
        
        \item Assume now that $\lambda_0$ appears only
        finitely often among the $\lambda^i$'s.
        Let $i_0$ be the maximal $i$ such that $\lambda^{i}=\lambda_0$. 
        We show that 
        \begin{align*}\label{eq:every-path-winning}
          \text{either}\quad& (f^0,S^0)_{\leb\lambda_0}> (f^{i_0+1},S^{i_0+1})_{\leb\lambda_0} \quad \text{or}\\
          & (f^0,S^0)_{\leb\lambda_0}=(f^{i_0+1},S^{i_0+1})_{\leb\lambda_0} \text{ and } f^{i_0}(\lambda_0)>f^{i_0+1}(\lambda_0+1)
        \end{align*}
        Thanks to~\eqref{eq:edges-keep-order}, for all $i$ we have $(f^i,S^i)_{\leb \lambda_0} \geq
        (f^{i+1},S^{i+1})_{\leb \lambda_0}$. 
        If $(f^0,S^0)_{\leb\lambda_0} > (f^{i_0+1},S^{i_0+1})_{\leb
        \lambda_0}$ we are done.
        Otherwise, we must have $(f^{i_0},S^{i_0})_{\leb \lambda_0} =(f^{i_0+1},S^{i_0+1})_{\leb
        \lambda_0}$ and $S^{i_0}(\lambda_0) \re{c^{i_0}_{\lambda^{i_0}}} S^{i_0+1}(\lambda_0) \in
        E(U_{\lambda_0}^\top)$.
        Thus $S^{i_0+1}(\lambda_0) \neq \top_{\lambda_0}$ hence $f^{i_0}(\lambda_0) >
        f^{i_0+1}(\lambda_0+1)$ by the condition~\eqref{eq:vertex-condition} on vertices.

        In the next step we let $\lambda_1>\lambda_0$ be the minimum $\lambda^i$
        for $i>i_0$, and if it appears only finitely often, define $i_1$ to be
        maximal such that $\lambda^{i_1}=\lambda_1$. 
        Just like above, we obtain:
        \begin{align*}
            \text{either}\quad& (f^{i_0+1},S^{i_0+1})_{\leb\lambda_1}> (f^{i_1+1},S^{i_1+1})_{\leb\lambda_1} \quad \text{or}\\
            & (f^{i_0+1},S^{i_0+1})_{\leb\lambda_1}=(f^{i_1+1},S^{i_1+1})_{\leb\lambda_1} \text{ and } f^{i_1}(\lambda_1)>f^{i_1+1}(\lambda_1+1)
        \end{align*}
        Observe that if the second case occurs, and we have
        $f^{i_0}(\lambda_0)>f^{i_0+1}(\lambda_0+1)$ then we can combine these
        inequalities to $f^{i_0}(\lambda_0)>f^{i_0+1}(\lambda_0+1)\geq
        f^{i_0+1}(\lambda_1)=f^{i_1}(\lambda_1)$ (here the second
        inequality is monotonicity of $f$ as $\lambda_1\geq \lambda_0+1$).

        To finish we observe that this process cannot continue forever. 
        Indeed, the first case cannot occur infinitely often due to
        well-foundedness proved in the first item of the lemma.
        If eventually only the second case occurs then this also leads to a 
        contradiction as we can combine the inequalities we have observed
        above to obtain an infinite strictly decreasing chain
        $f^{i_0}(\lambda_0)>f^{i_1}(\lambda_1)>f^{i_2}(\lambda_2)>\dots$.
    \end{itemize}
    \item
    Well-foundedness was established in the first item, so we should show that antichains in $U^\beta$ are finite.
    Consider a non-empty antichain $A\subseteq V(U^\beta)$.
    Towards a contradiction suppose $A$ is infinite.
    Let $\lambda_0$ be the smallest among $\lambda$'s such that there is a difference among elements of
    $A$ on position $\lambda$, namely, there are $(f,S),(f',S')\in A$ with
    $S(\lambda)\neq S'(\lambda)$.
    Observe that the smallest difference cannot appear between $f$ and $f'$ components, as
    all elements of $A$ are incomparable.
    Consider the set $\set{S(\lambda_0) : (f,S)\in A}$. 
    It must be an antichain, because $A$ is an antichain and all elements of $A$
    are the same up to $\lambda_0$.
    Hence, this set is finite because all antichains in $(U_\lambda,\geq_\lambda)$ are finite.
    Since it is an antichain and has more than one element, $\top$ is not in
    this set.
    As we have assumed that $A$ is infinite there must be $(f_0,S_0)\in A$ for which
    the set $A_{(f_0,S_0),\lambda_0}=\set{(f,S)\in A : (f,S)_{<
    \lambda_0+1}=(f_0,S_0)_{< \lambda_0+1}}$ is infinite. Observe that $S_0(\lambda_0)\neq\top$.

    We can repeat the reasoning starting from $A_{(f_0,S_0),\lambda_0}$ instead of
    $A$. 
    This gives us $\lambda_1>\lambda_0$ and $(f_1,S_1)$.
    Continuing like this we obtain an infinite sequence $A_{(f_i,S_i),\lambda_i}$
    such that: $S_i(\lambda_i)\not=\top$ and $(f_i,S_i)_{\leq
    \lambda_i+1}=(f_{i+1},S_{i+1})_{\leq \lambda_i+1}$.
    This gives us $f_0(\lambda_0)=f_1(\lambda_0)>f_1(\lambda_1)=f_2(\lambda_1)>f_2(\lambda_2)$ the strict
    inequalities following from $S_i(\lambda_j)\not=\top$ for $i\geq j$.
    A contradiction, as the ordinals are well-founded. 
    \qedhere
\end{enumerate}
\end{proof}

\subparagraph{Compositionality properties.}
For every $\lambda<\a$ we can define the graph $U^\b_{<\lambda}$ in the same way as
$U^\b$, but considering the sequence up to $\lambda$ instead of up to $\a$.
We can also define the graph $U^\b_{\geq\lambda}$ by considering the subsequence
starting from $\lambda$.
In this second case, it will be convenient to assume the vertices of
$U^\b_{\geq\lambda}$ are of the form 
$f:[\lambda,\alpha) \to \beta$ and $S :[\lambda,\alpha) \to \bigcup_{\lambda \leq \lambda'< \alpha} V(U_{\lambda'}^\top)$.\\

The lemma below proves a useful compositionality property; in some sense it states that our construction extends finite lexicographic products.

\begin{lem}\label{lem:composition}
For all ordinals $\beta,\beta'$ and for $\lambda<\alpha$ it holds that $U_{< \lambda}^\beta \ltimes U_{\geq \lambda}^{\beta'} \to U^{\beta + \beta'}$.
\end{lem}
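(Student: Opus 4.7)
The plan is to exhibit an explicit morphism $\phi \colon V(U^\beta_{<\lambda} \ltimes U^{\beta'}_{\geq\lambda}) \to V(U^{\beta+\beta'})$ that concatenates the two components of each vertex. Given a pair $\bigl((f_1,S_1),(f_2,S_2)\bigr)$, I would set $S(\lambda') = S_1(\lambda')$ for $\lambda' < \lambda$ and $S(\lambda') = S_2(\lambda')$ for $\lambda' \geq \lambda$, and define $f$ via an ordinal-arithmetic combination that places the first-half values strictly above the second-half ones in the codomain $\beta + \beta'$. A concrete candidate is $f(\lambda') = f_2(\lambda) + 1 + f_1(\lambda')$ for $\lambda' < \lambda$ and $f(\lambda') = f_2(\lambda')$ for $\lambda' \geq \lambda$: the value $f_2(\lambda)$ (the maximum of $f_2$) acts as the baseline of the shift, and the $+1$ enforces a strict drop at the boundary coordinate.

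The verification then has two parts. First, I would check that $(f,S)$ is a valid vertex of $U^{\beta+\beta'}$: the non-increasingness of $f$ follows from that of $f_1$ and $f_2$ together with the boundary inequality $f(\lambda'') \geq f_2(\lambda)+1 > f_2(\lambda) = f(\lambda)$ for any $\lambda''<\lambda$ (and the analogous infimum argument when $\lambda$ is a limit), and condition~\eqref{eq:vertex-condition} propagates from the corresponding conditions on $(f_1,S_1)$ and $(f_2,S_2)$ via strict right-monotonicity of the map $\gamma \mapsto \delta + \gamma$, with the boundary case at $\lambda'=\lambda-1$ handled automatically by the $+1$. Second, I would verify that $\phi$ preserves edges: given an edge $(u_1,u_2) \re{c} (v_1,v_2)$ with $c \in C_{\lambda''}$, a case analysis on whether $\lambda'' < \lambda$ (an edge from $U^\beta_{<\lambda}$ with $u_2 = v_2$) or $\lambda'' \geq \lambda$ (either a strict decrease of $u_1$ in $U^\beta_{<\lambda}$, or $u_1 = v_1$ with $u_2 \re{c} v_2$ in $U^{\beta'}_{\geq\lambda}$) translates the source edge condition into the target one, using property~\eqref{eq:edges-keep-order} and the strict monotonicity of the shift.

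The main obstacle I anticipate is the ordinal arithmetic, specifically showing that $f_2(\lambda) + 1 + f_1(\lambda')$ remains strictly below $\beta + \beta'$. Because ordinal addition is not commutative, the naive bound $(f_2(\lambda)+1) + \beta \leq \beta' + \beta$ can exceed $\beta + \beta'$, so one has to exploit absorption identities such as $1 + \gamma = \gamma$ for $\gamma$ infinite, and possibly adjust the shift to depend more subtly on the Cantor normal forms of $\beta$ and $\beta'$ when $\beta > \beta'$. Once the codomain bound and the boundary conditions are secured, the remainder of the argument is a routine case analysis on colours and edge types.
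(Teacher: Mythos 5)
Your overall strategy is the paper's: concatenate the two $S$-components and shift the $f$-values of the $U^{\beta}_{<\lambda}$-half strictly above those of the $U^{\beta'}_{\geq\lambda}$-half. The genuine gap is that your shift is \emph{vertex-dependent}, and this breaks edge preservation. Recall that in $U^{\beta}_{<\lambda}\ltimes U^{\beta'}_{\geq\lambda}$ an edge with colour $c\in C_{\lambda''}$, $\lambda''\geq\lambda$, is present whenever the $U^{\beta}_{<\lambda}$-component strictly decreases, with \emph{no constraint at all} on the target's $U^{\beta'}_{\geq\lambda}$-component. Take such an edge $((f_1,S_1),(f_2,S_2))\re{c}((f_1',S_1'),(f_2',S_2'))$ with, say, $f_1(0)=5$, $f_1'(0)=4$ (so $(f_1,S_1)>(f_1',S_1')$), $f_2(\lambda)=0$ and $f_2'(\lambda)=100$. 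Your map sends these to $g(0)=0+1+5=6$ and $g'(0)=100+1+4=105$, so $(g,R)_{\leb \lambda''}<(g',R')_{\leb \lambda''}$ and the required edge condition in the target graph fails. (On the other two kinds of edges the second component cannot increase --- it is unchanged on colours in $C_{<\lambda}$, and by~\eqref{eq:edges-keep-order} its value at coordinate $\lambda$ does not increase otherwise --- but the case above is unavoidable.)

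The fix, which is what the paper does, is to make the shift uniform: set $g(\lambda')=\beta'+f_1(\lambda')$ for $\lambda'<\lambda$. Left addition of the \emph{fixed} ordinal $\beta'$ is strictly order-preserving, so a strict lexicographic decrease of $(f_1,S_1)$ transfers to the image regardless of how the second component moves, while $\beta'+f_1(\lambda')\geq\beta'>f_2'(\lambda'')$ still places the first half above the second, and condition~\eqref{eq:vertex-condition} is inherited coordinatewise. This also dissolves your codomain worry: the image lies in $\beta'+\beta$ rather than $\beta+\beta'$ (these can indeed differ, exactly as you observe, and the same slip is present in the paper's statement), but since the lemma is only ever invoked to conclude that a graph can be mapped into $U^{\gamma}$ for \emph{some} ordinal $\gamma$, one should simply read the exponent as $\beta'+\beta$; no absorption identities or Cantor-normal-form adjustments are needed.
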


\begin{proof}
    For $v=((f,S),(f',S')) \in V(U_{\lambda}^\beta \ltimes
    U_{[\lambda,\alpha)}^{\beta'})$, we define $\phi(v)=(g,R)$ by
    \begin{equation*}
        g(\lambda')=\begin{cases}
            \beta' + f(\lambda')& \tif \lambda' < \lambda\\
            f'(\lambda')& \tow
        \end{cases}\qquad \tand  \qquad 
        R(\lambda')=\begin{cases}
            S(\lambda') &\tif \lambda' < \lambda\\
            S'(\lambda') &\tow.
        \end{cases}
    \end{equation*}
    It is direct to check that $(g,R) \in V(U^{\beta + \beta'})$; in particular
    $g$ is non-increasing since both $f$ and $f'$ are and values of $f'$
    are $< \beta'$.
    To show that $\phi$ defines a morphism from $U_{\lambda}^\beta \ltimes
    U_{[\lambda,\alpha)}^{\beta'}$ to $U^{\beta + \beta'}$, we pick an edge
    $((f_0,S_0),(f'_0,S'_0)) \re{c_{\lambda'}} ((f_1,S_1),(f'_1,S'_1))$ with
    $c_{\lambda'} \in C_{\lambda'}$.
    By the definition of $\ltimes$ this edge comes from one of the three cases. 
    \begin{itemize}
    \item If $\lambda'<\lambda$, then $(f_0,S_0) \re{c_{\lambda'}} (f_1,S_1) \in E(U_{\lambda}^\beta)$, that is,
    \[
        (f_0,S_0)_{\leb \lambda'} > (f_1,S_1)_{\leb \lambda'} \tor [(f_0,S_0)_{\leb \lambda'} = (f_1,S_1)_{\leb \lambda'} \tand S_0(\lambda') \re{c_{\lambda'}} S_1(\lambda') \in E(U_{\lambda'}^\top)].
    \]
    We have $(g_0,R_0)_{\leb \lambda'} = (f_0 + \beta',S_0)_{\leb \lambda'}$, and likewise $(g_1,R_1)_{\leb \lambda'} = (f_1 + \beta',S_1)_{\leb \lambda'}$, so the result follows.
    \item If $\lambda' \geq \lambda$ and $(f_0,S_0) > (f_1,S_1)$. Then we have $(g_0,R_0)_{\leq \lambda} > (g_1,R_1)_{\leq \lambda}$ which implies $(g_0,R_0)_{\leb \lambda'} > (g_1,R_1)_{\leb \lambda'}$, thus $(g_0,R_0) \re{c_{\lambda'}} (g_1,R_1)$.
    \item Otherwise, $\lambda' \geq \lambda$, $(f_0,S_0) = (f_1,S_1)$ and $(f_0',S_0') \re{c_{\lambda'}} (f_1',S_1') \in E(U_{[\lambda,\alpha)}^{\beta'})$, which rewrites as
    \[
        (f_0',S_0')_{\leb \lambda'} > (f_1',S_1')_{\leb \lambda'} \tor [(f_0',S_0')_{\leb \lambda'} = (f_1',S_1')_{\leb \lambda'} \tand S_0'(\lambda') \re{c_{\lambda'}} S_1'(\lambda') \in E(U_{\lambda'}^\top)].
    \]
    (In the line above, notation $(f'_0,S'_0)_{\leb \lambda}$ refers to maps of the form 
    $[\lambda,\a] \to \beta'$ and $[\lambda,\alpha) \to \bigcup_{\lambda \leq \lambda'\leq \alpha} V(U_{\lambda'}^\top)$.)
    Then since $(g_0,R_0)_{< \lambda}=(f_0+ \beta',S_0) = (f_1+\beta',S_1) =(g_1,R_1)_{<\lambda}$, it follows that
    \[
        (g_0,R_0)_{\leb \lambda'} > (g_1,R_1')_{\leb \lambda'} \tor [(g_0,R_0)_{\leb \lambda'} = (g_1,R_1)_{\leb \lambda'} \tand R_0(\lambda') \re{c_{\lambda'}} R_1(\lambda') \in E(U_{\lambda'}^\top)],
    \]
    the wanted result. \qedhere
    \end{itemize}
\end{proof}

\subsubsection{Universality}

We are now ready to prove our main result.

\begin{thm}\label{thm:universality_min_lexico}
Suppose $(C_\lambda)_{\lambda<\a}$ is a sequence of pairwise disjoint sets of colours, and
$(W_\lambda)_{\lambda<\a}$ is a sequence of prefix-independent objectives with $W_\lambda\incl
C^\w_\lambda$ for all $\lambda$.
Let $\k$ be some cardinal and assume that for every $\lambda<\a$ there is a $\k$-universal graph 
$(U_\lambda,\geq_\lambda)$ for $W_\lambda$.
Then there is $\b$ such that the power graph $(U^\b,\geq)$ is
$\k$-universal for the min-lexicographic product of $(W_\lambda)_{\lambda<\a}$.
\end{thm}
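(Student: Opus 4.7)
The plan is to identify $\beta$ for which $U^\beta$ is $\kappa$-universal for $W$. By Lemma~\ref{lem:prop_U}, $U^\beta$ already satisfies $W$ and is well-founded and monotone, so the remaining task is to construct a morphism $G \to U^\beta$ for every $G$ of size $<\kappa$ satisfying $W$. First I would reduce to almost-universality via Lemma~\ref{lemma:almost-universal}: it suffices to exhibit some $\beta_0$ such that for every such $G$ there is a vertex $v \in V(G)$ with $G[v] \to U^{\beta_0}$. Given this, $U^{\beta_0}\caprod\kappa$ is $\kappa$-universal for $W$, and the map $((f,S),\mu)\mapsto(g,S)$ with $g(\lambda)=\beta_0\cdot\mu+f(\lambda)$ is easily checked to be a morphism $U^{\beta_0}\caprod\kappa \to U^{\beta_0\cdot\kappa}$ (well-definedness and both the vertex condition~\eqref{eq:vertex-condition} and the edge condition reduce to the corresponding statements on the $f$-component), yielding $\kappa$-universality of $U^{\beta_0\cdot\kappa}$.

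The structural observation supporting almost universality is as follows. For $u \in V(G)$, let $\lambda^\flat(u)\in\alpha\cup\{\infty\}$ denote the least $\lambda$ such that some infinite path from $u$ in $G$ is $\lambda$-supported. Prepending an edge to a $\lambda$-supported infinite path still yields one, so $\lambda^\flat$ is non-decreasing along edges of $G$. Pick a vertex $v\in V(G)$ minimizing $\lambda^\flat$ (the case $\lambda^\flat(v)=\infty$, meaning $G[v]$ has no infinite path, is immediate), and set $\lambda^*=\lambda^\flat(v)<\alpha$. By the non-decreasing property, $\lambda^\flat(u)\geq\lambda^*$ for every $u\in V(G[v])$, so the restriction $G^{<\lambda^*}$ of $G[v]$ to $C_{<\lambda^*}$-coloured edges is well-founded---otherwise an infinite $C_{<\lambda^*}$-path would have support strictly below $\lambda^*$, violating this bound---while the restriction $G^{\geq\lambda^*}$ to $C_{\geq\lambda^*}$-coloured edges satisfies the tail min-lexicographic product $\minLexProduct{\lambda\geq\lambda^*} W_\lambda$, since any infinite path in it uses only $C_{\geq\lambda^*}$-colours and must be winning in $G$.

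The morphism $G[v]\to U^{\beta_0}$ is then built by gluing two pieces: on the one hand, the well-founded rank in $G^{<\lambda^*}$ drives ordinal coordinates $f_u(\lambda)$ for $\lambda<\lambda^*$ together with $S_u(\lambda)=\top_\lambda$ there, producing a morphism $G[v]\to U^{\beta_1}_{<\lambda^*}$; on the other, applying the theorem recursively to the shorter family $(W_\lambda)_{\lambda\geq\lambda^*}$ yields a morphism $G^{\geq\lambda^*}\to U^{\beta_2}_{\geq\lambda^*}$. Lemma~\ref{lem:composition} then composes these into a morphism $G[v]\to U^{\beta_1+\beta_2}$, so $\beta_0=\beta_1+\beta_2$ suffices.

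The main obstacle is the bookkeeping imposed by condition~\eqref{eq:vertex-condition}: since $f_u$ can strictly decrease only finitely often, $S_u(\lambda)\neq\top_\lambda$ must hold at only finitely many coordinates, so the interleaving between the ``ranked'' $C_{<\lambda^*}$ part and the ``active'' coordinates of the recursive tail must be handled carefully to keep the resulting $(f_u,S_u)$ a legitimate vertex of $U^{\beta_0}$. A further subtlety is justifying the termination of the recursive call when $\alpha$ is a limit ordinal, since the tail $[\lambda^*,\alpha)$ may have order type $\alpha$; the recursion must therefore be controlled by a joint measure combining the well-founded rank of $G^{<\lambda^*}$ with the underlying structural size of $G$, rather than by a naive induction on $\alpha$ alone.
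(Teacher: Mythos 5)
Your high-level decomposition --- split $G[v]$ into a well-founded low part $G^{<\lambda^*}$ and a high part $G^{\geq\lambda^*}$ satisfying the tail product, then glue via Lemma~\ref{lem:composition} --- does match ingredients of the paper's argument, and your claims that $\lambda^\flat$ is non-decreasing along edges, that $G^{<\lambda^*}$ is well-founded, and that $G^{\geq\lambda^*}$ satisfies the tail product are all correct. But the two points you defer as ``subtleties'' are not bookkeeping; they are the entire content of the proof, and as stated your plan fails on both. First, the gluing: by the definition of edges in $U^\beta$ (see property~\eqref{eq:edges-keep-order}) and of the product in Lemma~\ref{lem:composition}, the low-coordinate map $u\mapsto (f_u,S_u)_{<\lambda^*}$ must be \emph{non-increasing along edges coloured in $C_{\geq\lambda^*}$}, not merely a morphism on low-coloured edges. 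The rank of the well-founded graph $G^{<\lambda^*}$ does not have this property: a high-coloured edge can lead from a vertex of small rank to one of large rank, and then the required inequality $(f_u,S_u)_{\leb\lambda}\geq(f_{u'},S_{u'})_{\leb\lambda}$ fails. The paper repairs this (proof of Claim~\ref{cl:10}) by replacing the naive low part with the collapse graph $G'$ whose low edges are transported across high-coloured paths ($v\re{c}v'$ in $G'$ when $v\rp{C_{\geq\lambda}^*}u\re{c}u'\rp{C_{\geq\lambda}^*}v'$ in $G$), and by choosing a \emph{minimal} morphism $\phi'\colon G'\to U_{<\lambda}^{\beta'}$, whose minimality forces $\phi'(u)\geq\phi'(u')$ along high-coloured edges because then $\out_{G'}(u)\supseteq\out_{G'}(u')$. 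Nothing in your sketch supplies this monotonicity.

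Second, termination. Your recursion is on the tail family $(W_\lambda)_{\lambda\geq\lambda^*}$, which is not shorter than the original: if some infinite path from the minimizing vertex is $0$-supported then $\lambda^*=0$, so $G^{<\lambda^*}$ is empty, $G^{\geq\lambda^*}=G[v]$, and the step makes no progress at all; for limit $\alpha$ the tail can have order type $\alpha$ even when $\lambda^*>0$. The ``joint measure'' you invoke cannot be the structural size of the graph either, since $G^{\geq\lambda^*}$ has the same vertex set as $G[v]$. The paper's induction on $\alpha$ runs the other way: the induction hypothesis is applied to the \emph{low} family $(W_\mu)_{\mu<\lambda}$, which genuinely has smaller order type, and the non-termination of the process you describe is exactly what the paper converts into a contradiction --- iterating Lemma~\ref{lem:sublem1} and Claim~\ref{cl:10} produces vertices $v_0,v_1,\dots$ joined by non-empty paths whose colour indices increase past every previously used index, so the concatenated infinite path is unsupported and violates $W$. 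Without this contrapositive (or some equivalent well-foundedness argument for your recursion), your proof does not close.
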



We say that a $C$-graph $G$ \emph{can be mapped} if for some ordinal $\beta$ it
holds that $G \to U^\beta$; otherwise we say that $G$ cannot be mapped. 
Since $U^\beta$ satisfies $W$ (Lemma~\ref{lem:prop_U}), any graph that can be mapped satisfies $W$.
Our goal is to prove the converse: graphs of size $<\k$ that satisfy $W$ can be
mapped.
This implies Theorem~\ref{thm:universality_min_lexico}, by taking $\beta$ large enough so that any graph
smaller than $\kappa$ satisfying $W$ can be mapped into $U^\beta$. 

Our first step is to show that if every graph in a sequence can be mapped then the
directed sum of the sequence can be mapped.

\begin{lem}\label{lem:map_sums}
Let $(G_\mu)_{\mu < M}$ be a family of graphs such that for all $\mu < M$, $G_\mu$ can be mapped.
Then $\ASum_{\mu<M} G_\m$ can be mapped.
\end{lem}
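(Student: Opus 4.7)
The plan is to concatenate the given morphisms $\phi_\mu \colon G_\mu \to U^{\beta_\mu}$ into a single morphism $\tilde\phi \colon \ASum_{\mu<M} G_\mu \to U^\beta$ for an appropriate ordinal $\beta$, by shifting the $f$-component of each target by an offset $\delta_\mu$ chosen so that a higher index lands strictly above any lower one already at coordinate $0$. Since in the directed sum all cross-component edges carry arbitrary colours, they will be realized for free in $U^\beta$ as soon as the image strictly decreases at position $0$.

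Concretely, setting $\gamma = \sup_\mu \beta_\mu$, the inclusion of non-increasing functions gives $U^{\beta_\mu} \to U^\gamma$, so one may assume $\phi_\mu \colon G_\mu \to U^\gamma$ for every $\mu$. Let $\delta_\mu = \gamma \cdot \mu$ (ordinal product), let $\beta = \gamma \cdot M$, and for $(v,\mu)$ with $\phi_\mu(v) = (f,S)$ define $\tilde\phi(v,\mu) = (\delta_\mu + f,\, S)$, where $(\delta_\mu+f)(\lambda) = \delta_\mu + f(\lambda)$. The shifted function $\delta_\mu + f$ remains non-increasing, and condition~\eqref{eq:vertex-condition} is preserved because left-addition by $\delta_\mu$ is a strict order embedding on ordinals. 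For an edge $(v,\mu)\re{c_\lambda}(v',\mu')$ of the sum, one splits two cases. If $\mu = \mu'$, the same-offset shift preserves both the prefix comparison of $(f,S)_{\leb\lambda}$ against $(f',S')_{\leb\lambda}$ and the underlying edge $S(\lambda) \re{c_\lambda} S'(\lambda)$ in $U_\lambda^\top$, so the image edge follows from $\phi_\mu$ being a morphism. If $\mu>\mu'$, then at coordinate $0$ we have $\delta_\mu + f(0) \geq \delta_\mu \geq \delta_{\mu'} + \gamma > \delta_{\mu'} + f'(0)$ using $f'(0) < \gamma$, giving $(\delta_\mu + f,S)_{\leb\lambda} > (\delta_{\mu'} + f',S')_{\leb\lambda}$ for every $\lambda < \alpha$, and hence the required edge regardless of $c_\lambda$.

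The delicate part is the ordinal arithmetic: the offset $\delta_\mu = \gamma\cdot\mu$ must be large enough that $\delta_\mu > \delta_{\mu'} + f'(0)$ for every $\mu'<\mu$ and every admissible $f'(0)$, while being uniform in $\lambda$ so that the non-increasing property and condition~\eqref{eq:vertex-condition} are inherited from $f$ to $\delta_\mu + f$. Once these ordinal manipulations are secured, the cross-component edges are granted for free by the way edges in $U^\beta$ are defined whenever the target strictly drops at an early coordinate, which is precisely what the directed-sum construction provides.
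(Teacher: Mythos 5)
Your proof is correct and follows essentially the same route as the paper's: shift the $f$-component of each $\phi_\mu$ by a $\mu$-dependent ordinal offset, so that intra-component edges are preserved (left addition preserves the lexicographic comparisons and condition~\eqref{eq:vertex-condition}) while cross-component edges are realized by a strict drop at coordinate $0$. The only difference is cosmetic --- the paper uses offsets $\sum_{\mu'<\mu}\beta_{\mu'}$ and target $\sum_{\mu<M}\beta_\mu$ where you use $\gamma\cdot\mu$ and $\gamma\cdot M$ with $\gamma=\sup_\mu\beta_\mu$ --- and your explicit insistence on adding the offset on the \emph{left} is precisely what makes the ordinal arithmetic (strict monotonicity and cancellation) go through.
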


\begin{proof}
    For each $\mu < M$, let $\beta_\mu$ be such that $G_\mu \to U^{\beta_\mu}$.
Let $\f_\m(v)=(f^v_\m,S^v_\m)$ be a morphism $\f_\m:G_\mu \to
U^{\beta_\mu}$.
Recall that $f^v_\m:\a\to\beta_\mu$ and $S^v_\m:\a\to\bigcup_{\lambda<\a}V(U^\top_\lambda)$.
Let $G=\ASum _{\mu<M} G_\m$ and let $\beta = \sum_{\m< M}\beta_\m$.
We define a map $\p: V(G) \to V(U^{\beta})$ by 
\begin{equation*}
    \f(v)=\big(f^v_\m+\sum_{\m'<\m}\b_{\m'},S^v_\m\big),\qquad \text{if $v\in V(G_\m)$ and $\f_\m(v)=(f^v_\m,S^v_\m)$}
\end{equation*}
It is direct to check that $\f(v)$ is an~element of $V(U^\b)$. 
In particular for the condition~\eqref{eq:vertex-condition} we check that if
$S^v_\m(\lambda)\not=\top$ then
$(f^v_\m+\sum_{\m'<\m}\b_{\m'})(\lambda)>(f^v_\m+\sum_{\m'<\m}\b_{\m'})(\lambda+1)$.
This follows directly from the fact that $f^v_\m$ satisfies this condition.

To show that $\f$ defines a morphism $G \to U^\beta$, we take an edge $v \re {c_\lambda} v' \in E(G)$ with $c \in C_\lambda$; by definition of $G=\ASum _{\mu<M} G_\m$ there are two cases.
\begin{itemize}
\item The first case is when $v \re {c_\lambda} v' \in E(G_\mu)$ for some $\mu < M$.
Since $\phi_\mu$ is a morphism $G_\mu \to U^{\beta_\mu}$, we have $\phi_\mu(v)
\re {c_\lambda} \phi_\mu(v')$, which rewrites as 
\[
    \begin{aligned}
    & (f_\mu^v,S_\mu^v)_{\leb \lambda} > (f_\mu^{v'},S_\mu^{v'})_{\leb \lambda},\ \tor\\ 
    & (f_\mu^v,S_\mu^v)_{\leb \lambda} = (f_\mu^{v'},S_\mu^{v'})_{\leb \lambda},\ %
    \tand\ [S_\mu^v]_\lambda \re {c_\lambda} [S_\mu^{v'}]_\lambda \in E(U_\lambda^\top). 
    \end{aligned}
\]
Since $f^v$ and $f^{v'}$ are obtained by shifting respectively $f_\mu^{v}$ and
$f_\mu^{v'}$ by $\sum_{\mu'<\mu}\beta_{\mu'}$, and $S^{v}=S_\mu^{v}$ and
$S^{v'}=S_\mu^{v'}$, we get that $(f^v,S^v) \re {c_\lambda} (f^{v'},S^{v'})$, as
required.

\item Otherwise, $v \in V(G_{\mu})$ and $v' \in V(G_{\mu'})$ for $\mu>\mu'$. Then
\[
    f^v(0)=f_\mu^v(0)+\sum_{\mu''<\mu} \beta_{\mu''}\ >\ f_{\mu'}^{v'}(0)+\sum_{\mu'' < \mu'} \beta_{\mu''}  = f^{v'}(0),
\]
where the inequality holds since $\beta_{\mu'} > f_{\mu'}^{v'}(0)$.
Therefore, $(f^v,S^v) \re {c_\lambda} (f^{v'},S^{v'})$. \qedhere
\end{itemize}
\end{proof}

We now prove that any $C_{\geq \lambda}$-graph that can be mapped into $U^\beta$ can
be also mapped into $U_{\geq \lambda}^{\beta'}$, for some bigger $\beta'$.

\begin{lem}\label{lem:reduction_to_large_lambda}
Let $\lambda<\alpha$ and let $G$ be a $C_{\geq \lambda}$-graph which can be mapped.
Then there is an ordinal $\beta'$ such that $G \to U_{\geq \lambda}^{\beta'}$.
\end{lem}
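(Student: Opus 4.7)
The plan is to partition $V(G)$ according to the ``prefix'' of $\phi(v)$ below coordinate $\lambda$, map each class into $U_{\geq \lambda}^{\beta}$, and then assemble the classes via a directed sum, concluding with the subfamily analog of Lemma~\ref{lem:map_sums}.

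Fix a morphism $\phi : G \to U^{\beta}$, write $\phi(v) = (f^v, S^v)$ and $p_v = (f^v, S^v)_{< \lambda}$, and let $P = \{p_v : v \in V(G)\}$, inheriting the order from $V(U^\beta)$. Because $G$ is a $C_{\geq \lambda}$-graph, property~\eqref{eq:edges-keep-order} forces $p_v \geq p_{v'}$ along every edge of $G$. The set $P$ sits inside the vertex set of the power graph built from the subfamily $(U_{\lambda'})_{\lambda' < \lambda}$, hence by (the application of Lemma~\ref{lem:prop_U} to that subfamily) it is well-ordered in the positional setting and a wpo in the wpo-monotone setting; in either case $P$ embeds order-preservingly into an ordinal $M$ (in the wpo case via a well-ordered linear extension, which always exists for a wpo). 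For each $p \in P$, set $V_p = \{v : p_v = p\}$, let $G_p$ be the subgraph of $G$ induced by $V_p$, and define $\phi_p(v) = (f^v|_{[\lambda, \alpha)}, S^v|_{[\lambda, \alpha)})$; condition~\eqref{eq:vertex-condition} being pointwise, this is a well-defined map into $V(U_{\geq \lambda}^{\beta})$. Checking that $\phi_p$ is a graph morphism reduces to noting that the equality $p_v = p_{v'}$ forces any strict decrease or single-step transition appearing in the definition of $\phi(v) \re{c_{\lambda'}} \phi(v')$ to occur at a coordinate $\geq \lambda$, which is exactly what is required for the corresponding edge in $U_{\geq \lambda}^{\beta}$.

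Enumerating the prefixes as $(p_\mu)_{\mu < M}$, the identity on vertices realises $G$ as a subgraph of $\ASum_{\mu < M} G_{p_\mu}$: intra-class edges sit inside the corresponding summand, while inter-class edges between $V_{p_\mu}$ and $V_{p_{\mu'}}$ necessarily satisfy $p_\mu > p_{\mu'}$, hence $\mu > \mu'$ in the enumeration, matching exactly the edges freely added by the directed sum. Since each $G_{p_\mu}$ maps into $U_{\geq \lambda}^{\beta}$, the analog of Lemma~\ref{lem:map_sums} for the subfamily $(W_{\lambda'})_{\lambda \leq \lambda' < \alpha}$ — whose power graphs are the $U_{\geq \lambda}^{\gamma}$'s and whose proof is verbatim the same — yields $\ASum_{\mu < M} G_{p_\mu} \to U_{\geq \lambda}^{\beta'}$ for some ordinal $\beta'$, and composing gives the desired $G \to U_{\geq \lambda}^{\beta'}$. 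The main point requiring care is the linearisation step in the wpo case (making sure that $P$ really admits a well-ordered linear extension so that the directed sum is ordinal-indexed), together with the bookkeeping that the construction of Lemma~\ref{lem:map_sums} transfers cleanly to the tail subfamily.
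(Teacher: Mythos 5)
Your proof is correct and follows essentially the same route as the paper: partition $V(G)$ by the prefix $\phi(v)_{<\lambda}$, map each class into $U_{\geq\lambda}^{\beta}$ by projecting $\phi$ to the tail coordinates, realise $G$ as a subgraph of the directed sum of the classes using property~\eqref{eq:edges-keep-order}, and conclude with (the tail-family analog of) Lemma~\ref{lem:map_sums}. The details you flag — the well-ordered linear extension of the prefix order in the wpo case and the morphism check for the projected map — are exactly the points the paper leaves implicit, and you handle them correctly.
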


\begin{proof}
Let $\phi: G \to U^\beta$.
For each $(f',S') \in V(U_{<\lambda}^\beta)$, we let $G_{(f',S')}$ be the restriction of $G$ to vertices in
\[
    \phi^{-1}\{(f,S)\in V(U^\beta) \mid (f,S)_{< \lambda} =(f',S')\}.
\]
(It may be that some of the $(G_{(f',S')})$ are empty; this is not an issue in the proof below.)
We now make two claims which are proved below.

\begin{claim}\label{cl:6}
For each $(f',S')$, it holds that $G_{(f',S')} \to U_{\geq \lambda}^{\beta}$.
\end{claim}

\begin{claim}\label{cl:7}
It holds that $G \to \overset{\leftarrow}{\sum}_{(f',S')} G_{(f',S')}$, where the $(f',S')$'s are ordered as in $V(U_{<\lambda}^\beta)$.
\end{claim}

Putting the two claims together with Lemma~\ref{lem:map_sums} yields the desired
result.
For Claim~\ref{cl:6}, it suffices to consider the restriction of $\phi$ to $G_{(f',S')}$ (restrictions of morphisms are morphisms).
For Claim~\ref{cl:7}, is suffices to recall
property~\eqref{eq:edges-keep-order} saying  that for every edge $(f_0,S_0) \re 
{c} (f_1,S_1) \in E(U^\beta)$ such that $c \in C_{\geq \lambda}$, we have
$(f_0,S_0)_{< \lambda} \geq (f_1,S_1)_{<\lambda}$.
\end{proof}





We now prove a crucial ingredient to the proof of Theorem~\ref{thm:universality_min_lexico}.
Recall that $G[v]$ is the restriction of $G$ to vertices reachable from $v$.

\begin{lem}\label{lem:sublem1}
    If $G$ satisfies $W$ and cannot be mapped then there is $v \in V(G)$ such that $G[v]$ cannot be mapped.
    Moreover, $v$ can be picked so that it has a predecessor in $G$.
\end{lem}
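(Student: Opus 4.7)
The plan is to prove the two claims of the lemma in turn. For the first, I would argue the contrapositive: assuming every vertex $v \in V(G)$ is such that $G[v]$ can be mapped, I would show that $G$ itself can be mapped. The idea is to mimic the iterated construction behind the proof of Lemma~\ref{lemma:almost-universal}: pick $v_0 \in V(G)$, set $G_1 = G \setminus G[v_0]$, pick $v_1 \in V(G_1)$, and iterate transfinitely, so that the sequence exhausts $V(G)$ at some ordinal $\lambda$. Each $G_\mu[v_\mu]$ is an induced subgraph of $G[v_\mu]$ and therefore mappable by restriction. A direct check shows that $G$ is a subgraph of the directed sum $\ASum_{\mu<\lambda} G_\mu[v_\mu]$: every edge of $G$ either stays inside a component or else crosses from a later-indexed to an earlier-indexed component, matching the structure of the sum. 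Lemma~\ref{lem:map_sums} then maps the sum, and hence $G$.

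For the second claim, let $V_{\mathrm{bad}} \subseteq V(G)$ be the set of $v$ with $G[v]$ unmappable, which is nonempty by the first claim. I would assume toward contradiction that no $v \in V_{\mathrm{bad}}$ has a predecessor in $G$, and pick any $v \in V_{\mathrm{bad}}$. Applying the first claim to $G[v]$ produces some $v' \in V(G[v])$ with $G[v'] = (G[v])[v']$ unmappable, so $v' \in V_{\mathrm{bad}}$; since $v'$ is reachable from $v$ but has no predecessor in $G$, any path of length $\geq 1$ from $v$ to $v'$ would provide such a predecessor, forcing $v' = v$. Hence $v$ is the unique bad vertex in $G[v]$. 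Setting $H = G[v] \setminus \{v\}$, I would first note that $v \notin G[u]$ for any $u \in V(H)$: otherwise $u$ and $v$ would be mutually reachable, giving $G[u] = G[v]$ unmappable and contradicting $u \notin V_{\mathrm{bad}}$. Therefore $H[u] = G[u]$ is mappable, and applying the first claim to $H$ yields a morphism $\phi \colon H \to U^{\beta}$ for some ordinal $\beta$.

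The crux of the second claim, and the step I expect to be the main technical point, is extending $\phi$ to all of $G[v]$, which will contradict $v \in V_{\mathrm{bad}}$ and close the argument. I would set $\phi(v) = (f_\top, S_\top) \in V(U^{\beta+1})$, where $f_\top$ is constantly equal to $\beta$ and $S_\top(\lambda) = \top_\lambda$ for all $\lambda$; condition~\eqref{eq:vertex-condition} holds vacuously, so this is a valid vertex. Since $v$ has no predecessor in $G$, it has no self-loop and no incoming edge in $G[v]$, so only outgoing edges $v \re{c_\lambda} u$ with $u \in V(H)$ must be checked. For each such edge, $\phi(u) = (f, S) \in V(U^\beta)$, so its $f$-component satisfies $f(0) < \beta = f_\top(0)$; this gives the strict inequality $(f_\top, S_\top)_{\leb \lambda} > (f, S)_{\leb \lambda}$, which is exactly the condition for the required edge in $U^{\beta+1}$. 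This provides a morphism $G[v] \to U^{\beta+1}$, the desired contradiction.
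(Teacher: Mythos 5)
Your proof is correct. The first claim is established essentially as in the paper: decompose $G$ into the layers $G_\mu[v_\mu]$ (which coincide with the paper's $G_\mu = G[v_\mu] - \bigcup_{\mu'<\mu} V(G[v_{\mu'}])$), note that each is a subgraph of a mappable $G[v_\mu]$, and invoke Lemma~\ref{lem:map_sums}. For the second claim you take a genuinely different route. The paper keeps the bad vertex $v$ without predecessors and hunts for a bad \emph{successor} $v'$ (which then has $v$ as a predecessor): either $v$ is reachable from one of its successors, or else $G[v]$ decomposes as a directed sum of the successors' reachability sets with the edgeless singleton $\{v\}$ on top, and Lemma~\ref{lem:map_sums} yields a contradiction. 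You instead assume no bad vertex has a predecessor, deduce that $v$ is the unique bad vertex of $G[v]$ (any other vertex of $G[v]$ is reached by a nonempty path, hence has a predecessor, hence is not bad), map $H = G[v]\setminus\{v\}$ into some $U^\beta$ by the contrapositive of the first claim, and extend the morphism by hand, sending $v$ to the all-$\top$ vertex with constant $f\equiv\beta$ in $U^{\beta+1}$. The extension is valid: condition~\eqref{eq:vertex-condition} is vacuous, $v$ has no incoming edges and no loop, $U^\beta$ embeds into $U^{\beta+1}$, and every outgoing edge is realised because $f_\top(0)=\beta>f(0)$ already forces the strict inequality $(f_\top,S_\top)_{\leb\lambda}>(f,S)_{\leb\lambda}$ for every $\lambda$. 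What your route costs is that it opens up the internal $(f,S)$-structure of the power graph, whereas the paper's argument stays entirely at the level of the abstract directed-sum machinery --- your explicit construction is in effect a hand-made instance of $G[v]\to H \ap \{v\}$ followed by Lemma~\ref{lem:map_sums}. What it buys is a slightly cleaner logical shape: you refute the non-existence of \emph{any} bad vertex with a predecessor directly, rather than producing a bad successor, and you avoid the case split on whether $v$ is reachable from one of its successors.
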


\begin{proof}[Proof of Lemma~\ref{lem:sublem1}]
Assume for contradiction that for all $v \in V[G]$, $G[v]$ can be mapped.
Take a well-ordering $(v_{\mu})_{\mu <M}$ of all vertices of $G$, and define $G_\mu$ to be
\[
    G_\mu = G[v_\mu] - \bigcup_{\mu'<\mu} V(G[v_{\mu'}]).
\]
Then for each $\mu <M$, it holds that $G_\mu \to G[v_\mu]$ therefore $G_\mu$ can be mapped.
Hence, by Lemma~\ref{lem:map_sums}, $\ASum_{\m<M}G_\m$ can be mapped.
But $G \to \ASum_{\m<M}G_\m$, so $G$ can also be mapped: a contradiction proving
that there is $v \in V(G)$ such that $G[v]$ cannot be mapped. 

We now show that $v$ can be taken to have a predecessor, so assume that the $v$
constructed above does not have a predecessor (in particular, there is no loop
around $v$). 
We show that for some successor $v'$ of $v$, $G[v']$ cannot be mapped.
Observe that if $v$ is reachable from some of its successors $v'$ then
$G[v]=G[v']$.
So we can take $v'$ in this case.
Otherwise, we adapt the argument from the previous paragraph. 
Assume that for all successors $v'$ of $v$,
$G[v']$ can be mapped, well-order them into $(v'_{\mu})_{\mu<M'}$, and let
\[
    G'_\mu = G[v'_\mu] - \bigcup_{\mu'<\mu} V(G[v'_{\mu'}]).
\]
Then the $G'_\mu$'s can be mapped.
Let $G_{M'}$ be the restriction of $G$ to $\{v\}$.
Since $v$ is not reachable from any of its successors, there is no loop around
$v$. 
So $G_{M'}$ is edgeless, and therefore it can be mapped.
Now observe that
\[
    G[v] \to \ASum_{\m<M'}G'_\m \ap G_{M'},
\]
which can be mapped thanks to Lemma~\ref{lem:map_sums}. 
A contradiction showing that this case is impossible.\qedhere

\end{proof}

We are now ready to present an inductive proof of Theorem~\ref{thm:universality_min_lexico}.

\begin{proof}[Proof of Theorem~\ref{thm:universality_min_lexico}]
The proof goes by induction over the ordinal $\alpha$ therefore we assume the result known for ordinals $< \alpha$:
\begin{center}
For any $\lambda < \alpha$ and for any $C_{< \lambda}$-graph $G$ satisfying $W$, there is $\beta$ such that $G \to U_{< \lambda}^\beta$. 
\end{center}
Let $G$ be a $C$-graph satisfying $W$ and assume towards contradiction that $G$ cannot be mapped.

\begin{claim}\label{cl:10}
For any $\lambda<\alpha$, the restriction $G_{\geq \lambda}$ of $G$ to edges with colours in $C_{\geq \lambda}$ cannot be mapped.
\end{claim}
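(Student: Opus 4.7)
My plan is to prove the contrapositive: if $G_{\geq\lambda}$ can be mapped, then so can $G$, contradicting the standing assumption.

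First, I observe that the restriction $G_{<\lambda}$ of $G$ to $C_{<\lambda}$-colored edges satisfies $W$: every infinite path of $G_{<\lambda}$ is an infinite path of $G$ whose label lies in $W$ and uses only colors in $C_{<\lambda}$, which coincides with $W_{<\lambda}$. By the inductive hypothesis of the main induction, applied at the ordinal $\lambda < \alpha$, there is a morphism $\phi_{<\lambda} \colon G_{<\lambda} \to U^{\beta_0}_{<\lambda}$ for some ordinal $\beta_0$. Assuming for contradiction that $G_{\geq\lambda}$ can be mapped, Lemma~\ref{lem:reduction_to_large_lambda} yields a morphism $\phi_{\geq\lambda} \colon G_{\geq\lambda} \to U^{\beta_1}_{\geq\lambda}$ for some ordinal $\beta_1$.

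The goal is to assemble $\phi_{<\lambda}$ and $\phi_{\geq\lambda}$ into a morphism $\phi \colon G \to U^{\beta_0}_{<\lambda} \ltimes U^{\beta_1}_{\geq\lambda}$; applying Lemma~\ref{lem:composition} would then produce $G \to U^{\beta_0+\beta_1}$, contradicting that $G$ cannot be mapped. The naive assignment $\phi(v) = (\phi_{<\lambda}(v), \phi_{\geq\lambda}(v))$ is not a morphism, because the lexicographic product requires the first coordinate to remain fixed along $C_{\geq\lambda}$-edges, whereas $\phi_{<\lambda}$ is unconstrained on such edges. To correct this, I will refine the first coordinate fiber-wise: for each $u \in V(U^{\beta_1}_{\geq\lambda})$, the subgraph $G^u$ of $G_{<\lambda}$ induced on the fiber $\phi_{\geq\lambda}^{-1}(u)$ again satisfies $W_{<\lambda}$, yielding a local morphism $\psi_u \colon G^u \to U^{\beta_0}_{<\lambda}$ (with a uniform $\beta_0$ since $|V(G)|<\kappa$), and I set $\phi(v) = (\psi_{\phi_{\geq\lambda}(v)}(v), \phi_{\geq\lambda}(v))$.

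The main obstacle will be verifying the lexicographic edge conditions. Within a single fiber, $C_{<\lambda}$-edges are handled directly by $\psi_u$. The critical case is a $C_{\geq\lambda}$-edge $v \re{c} v'$ in $G$ with $\phi_{\geq\lambda}(v) \neq \phi_{\geq\lambda}(v')$: the product then demands $\psi_{\phi_{\geq\lambda}(v)}(v) = \psi_{\phi_{\geq\lambda}(v')}(v')$, which is not automatic for arbitrary choices of the local morphisms. I expect to resolve this by coherently stitching the family $(\psi_u)_u$ together via the directed-sum construction of Lemma~\ref{lem:map_sums}, using the well-founded ordering on $V(U^{\beta_1}_{\geq\lambda})$ to organize the fibers so that the composite morphism respects all edges of $G$; this coherent combination across fibers is the technical heart of the argument.
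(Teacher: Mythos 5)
Your top-level plan --- assemble a morphism into $U^{\beta_0}_{<\lambda} \ltimes U^{\beta_1}_{\geq\lambda}$ and finish with Lemma~\ref{lem:composition} --- is the same as the paper's, and you correctly identify where the difficulty sits: one needs a single map $\psi\colon V(G) \to V(U^{\beta_0}_{<\lambda})$ for the first coordinate that is simultaneously (a) a morphism on $C_{<\lambda}$-edges and (b) non-increasing along $C_{\geq\lambda}$-edges. But the fiber-wise construction you propose cannot deliver such a $\psi$. First, the fibers $\phi_{\geq\lambda}^{-1}(u)$ are defined purely in terms of the $C_{\geq\lambda}$-structure, so they are not closed under $C_{<\lambda}$-edges: a generic $C_{<\lambda}$-edge of $G$ joins two distinct fibers, and the local morphisms $\psi_u$ impose no constraint whatsoever on such edges, so requirement (a) is simply not addressed by the local data. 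Second, the stitching you suggest --- ordering the fibers and combining via a directed sum --- cannot enforce (b): $C_{\geq\lambda}$-edges of $G$ routinely form cycles that cross fibers (and edges of the monotone graph $U^{\beta_1}_{\geq\lambda}$ need not descend in its order), whereas a directed sum places one fiber strictly above another and hence breaks non-increase for any edge going the "wrong way". So the "technical heart" you defer is not a routine combination step; it is exactly the part of the proof that is missing, and the tools you name (Lemma~\ref{lem:map_sums}, well-foundedness of $V(U^{\beta_1}_{\geq\lambda})$) do not suffice to supply it.

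The paper's resolution is different in a way your sketch does not anticipate. Instead of the plain restriction $G_{<\lambda}$, one forms the \emph{contracted} graph $G'$ on all of $V(G)$, with an edge $v \re{c} v'$ for $c \in C_{<\lambda}$ whenever $v \rp{C_{\geq\lambda}^*} u \re{c} u' \rp{C_{\geq\lambda}^*} v'$ in $G$; one checks that $G'$ still satisfies $W$ (infinite paths of $G'$ lift to infinite paths of $G$ with infinitely many $C_{<\lambda}$-letters, hence supported below $\lambda$), and applies the outer induction hypothesis to $G'$ --- note that applying it only to $G_{<\lambda}$, as you do, is strictly weaker and gives no leverage over $C_{\geq\lambda}$-edges. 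Then one takes a \emph{minimal} morphism $\phi'\colon G' \to U^{\beta'}_{<\lambda}$; since a $C_{\geq\lambda}$-edge $v \re{c} v'$ of $G$ forces $\out_{G'}(v) \supseteq \out_{G'}(v')$ by construction of $G'$, minimality yields $\phi'(v) \geq \phi'(v')$, which is precisely condition (b), while (a) holds because $C_{<\lambda}$-edges of $G$ are edges of $G'$. Both ingredients --- the saturation of $C_{<\lambda}$-edges under $C_{\geq\lambda}$-travel and the minimality argument --- are absent from your proposal, and without them the assembly step fails.
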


 \begin{proof}
  Suppose for a contradiction that there is $\lambda$ such that $G_{\geq \lambda}$ can be
  mapped.
  We construct a $C_{< \lambda}$-graph $G'$ that can be mapped and use $G'\ltimes G_{\geq \lambda}$ to show that $G$
  can be mapped.

  The $C_{< \lambda}$-graph $G'$ has the same vertices as $G$,  $V(G')=V(G)$.
  The edges have colours  $c \in C_{<\lambda}$, and are given by:
  \[
      v \re {c} v' \in E(G')\quad\text{when}\quad \exists u,u' \in V(G), \quad v \rp{C_{\geq\lambda}^*} u \re{c} u' \rp{C_{\geq \lambda}^*} v \tin G, 
  \]
  where the notation $v \rp{C^*_{\geq \lambda}} u$ means that there is a path in
  $G$ from $v$ to $u$ using only edges with colours in $C_{\geq \lambda}$
  (stated differently, a path in $G_{\geq \lambda}$). 
  Note that these paths may be empty, and in particular, edges of $G$ with
  colour in $C_{<\lambda}$ also belong to $G'$. 

Let us prove that $G'$ satisfies $W$.
Consider an infinite path $\pi' = v_0 \re{c_0} v_1 \re{c_1} \dots$ in $G'$; it is labelled by the word $w=c_0 c_1 \dots \in C_{<\lambda}^\omega$.
Then there is a path of the form
\[
    \pi = v_0 \rp{} u_0 \re{c_0} u'_0 \rp{} v_1 \rp{} u_1 \re{c_1} u'_1 \rp{} \dots,
\]
in $G$, where paths $v_i \rp{} u_i$ and $u'_i \rp v_{i+1}v_{i+1}$ are labelled by colours in $C_{\geq \lambda}$.
Since $G$ satisfies $W$, the label $w$ of $\pi$ belongs to $W$, in particular it
is $\lambda'$-supported for some $\lambda'$, and since $w$ has
infinitely-many occurrences of letters from $C_{<\lambda}$, it must be that
$\lambda'<\lambda$.
Thus $w'$ is also $\lambda'$-supported and $\pi_{\lambda'}(w') = \pi_{\lambda'}(w) \in W_{\lambda'}$ and thus $w' \in W$.
Therefore, $G'$ satisfies $W$ hence we obtain by induction that $G' \to U_{<\lambda}^{\beta'}$ for some $\beta'$.

Since $G'\to U_{<\lambda}^{\beta'}$ we can find a minimal morphism 
$\phi':G' \to U_{< \lambda}^{\beta'}$.
This means, it is a morphism not pointwise bigger than any other morphism $G'\to
U_{<\lambda}^{\beta'}$.
Such a morphism has a property that for any pair $(v,v')$ of vertices, if for all colour $c$, all $c$-successors of $v'$ are also $c$-successors of $v$, then $\phi'(v)\geq\phi'(v')$ (otherwise we could obtain a smaller morphism by mapping $v'$ to $\phi(v)$). 


We now show that $G \to U_{<\lambda}^{\beta'} \ltimes G^{\geq \lambda}$.
Consider the map $\phi$ between these graphs given by $\phi(v) = (\phi'(v),v)$, which we show to be a morphism.
Take an edge $v \re c v' \in E(G)$.
\begin{itemize}
\item If $c \in C_{< \lambda}$ then $v \re c v' \in E(G')$  thus $\phi'(v) \re c \phi'(v') \in E(U_{<\lambda}^{\beta'})$ which implies the result.
\item Otherwise, $c \in C_{\geq \lambda}$. 
Then in $G'$, $\out(v)\supseteq\out(v')$.
By the above-mentioned property of minimal morphisms, this implies that $\phi'(v) \geq
\phi'(v')$. 
Together with the fact that $v \re c v' \in E(G^{\geq \lambda})$, this implies that $v \re c v' \in E(G' \ltimes G^{\geq \lambda})$, as required.
\end{itemize}
Thus $G \to U_{<\lambda}^{\beta'} \ltimes G^{\geq \lambda}$.
Now if $G^{\geq \lambda}$ could be mapped, then by Lemma~\ref{lem:reduction_to_large_lambda} we get $G \to U_{\geq \lambda}^\beta$, therefore it follows from Lemma~\ref{lem:composition} that $G$ can be mapped, a contradiction.
 \end{proof}

Let $G_0=G$ and let $v_0$ be such that $G_0[v_0]$ cannot be mapped, obtained from Lemma~\ref{lem:sublem1} (here, the fact that $v_0$ has a predecessor in $G_0$ is not used).
We will construct a decreasing sequence of subgraphs $G_0,G_1,\dots$ of $G$ and
vertices $v_0,v_1, \dots$ with non-empty paths 
$\pi_i$ from $v_i$ to $v_{i+1}$ in $G_i$, with the property that for all $i$, all edges in $G_{i+1}$ (and therefore also in subsequent graphs) have colours in $C^{>\lambda_i}$, where $\lambda_i$ is the maximal colour of an edge in $\pi_{<i} = \pi_0 \dots \pi_{i-1}$.
This implies the desired contradiction as the label of $\pi$ is not supported, and thus does not satisfy $W$.
The crucial invariant in the construction is that the $G_i[v_i]$'s cannot be mapped.

\begin{figure}[ht]
\begin{center}
\includegraphics[width=0.55\linewidth]{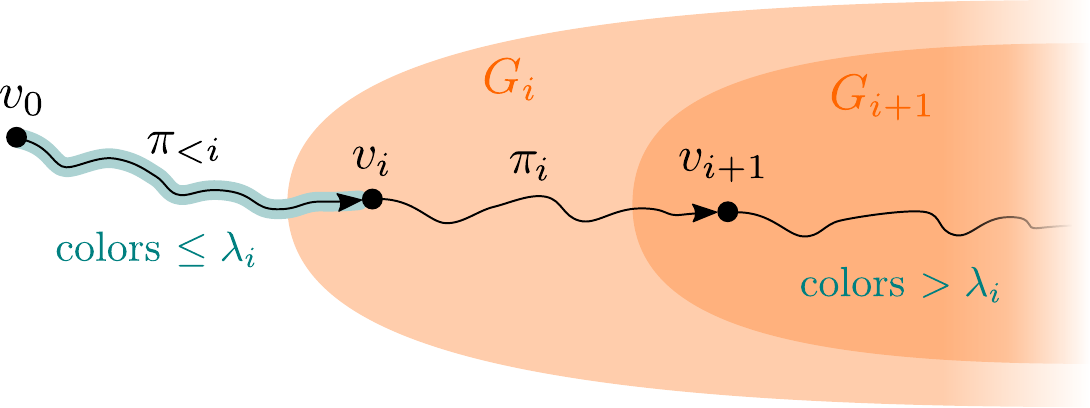}
\end{center}
\caption{Constructing a path violating $W$.}\label{fig:final_construction}
\end{figure}

Assume constructed the path up to $v_i$ (see also Figure~\ref{fig:final_construction}), and let $\lambda_{i}$ be as above (or $\lambda_i=0$ if $i=0$).
Since $G_i[v_i]$ cannot be mapped, Claim~\ref{cl:10} says that $G_i[v_i]^{\geq \lambda_i +1}$ cannot be mapped.
So we let $G_{i+1}=G_i[v_i]^{\geq \lambda_i +1}$, and then apply Lemma~\ref{lem:sublem1} to $G_{i+1}$ to obtain $v_{i+1} \in V(G_{i+1})$ such that $G_{i+1}[v_{i+1}]$ cannot be mapped and $v_{i+1}$ has a predecessor $u_{i+1}$ in $G_{i+1}$.
Since $G_{i+1}$ is a subgraph of $G_i[v_i]$, there is a path $\pi_i$ in
$G_i$ from $v_i$ to $v_{i+1}$, which we can take to go through $u_i$. This
ensures the path is non-empty.
\end{proof}

Our main result, Theorem~\ref{th:main-pos-min-lex}, follows from Theorem~\ref{thm:universality_min_lexico} and Lemma~\ref{lem:prop_U}.

\section{Conclusions}\label{sec:conclusions}

In this work, we have introduced two positionality-preserving operations of objectives  generalising lexicographic products to arbitrary ordinals: max- and min-lexicographic products. These two operations extend our understanding of positionality in two orthogonal manners.

Max-lexicographic products yield a natural generalisation of the $\Parity$ languages, providing
a family of positional languages that are complete for infinite levels of the difference hierarchy over $\bsigma{2}$ (Theorem~\ref{thm:difference-complete}).
This sets a first stone in the systematic study of positional objectives within $\bdelta{3}$, the natural topological generalisation of $\omega$-regular objectives.

Min-lexicographic products, on the other hand, easily go beyond $\bdelta{3}$.
They provide a tool to show positionality of objectives in $\bsigma{3}$ (as, for instance, $\omegaBuchi$), the higher level in the Borel hierarchy in which positional objectives have been found.\footnote{During the preparation of this manuscript, a positional $\bpi{3}$-complete objective has also been proposed~\cite{COV24Pi3}.}
An interesting question is whether there are positional objectives in all the levels of the Borel hierarchy.



Furthermore, we have proved a special case of Kopczyński's conjecture, namely, closure of positionality under colour-increasing unions of objectives (Theorem~\ref{thm:weak_kopczynski}).
The lexicographic product of a family of objectives provides a sort of underapproximation to their union. Whether the positionality of lexicographic products can help to resolve the general case of Kopczyński's conjecture is an exciting open problem.

\bibliographystyle{alphaurl}
\bibliography{bib}

\appendix

\section{Signatures for parity games with infinitely many priorities}\label{app:parity_signatures}

Recall the Max-parity objective
\[
    \MaxParity_\alpha=\{w \in \alpha^\omega \mid \limsup w \text{ is odd}\},
\]
which is the max-lexicographic product of the objectives
\[
    W_\lambda=\begin{cases} \TL_\lambda \tif \lambda \text{ is even}, \\ \TW_\lambda \tow. \end{cases}
\]
Fix a cardinal $\kappa$.
Let $U_{<\alpha}$ be given by $V(U_{<\alpha})=\kappa^{\alpha_\even}$ (with ordinal exponentiation),\footnote{Stated differently, $v \in V(U_{<\alpha})$ is given by $(v_\lambda)_{\lambda \leq \alpha, \lambda \  \even}$ such that $v_\lambda < \kappa$ and finitely many of the $v_\lambda$'s are nonzero.} where $\alpha_\even$ denotes the set of even ordinals $<\alpha$, and
\[
    E(U_{<\alpha}) = \{v \re \lambda v' \mid [\lambda \text{ even and } v_{\geq \lambda} > v'_{\geq \lambda}] \tor [\lambda \text{ odd, }v_{\geq \lambda+1}>0 \text{ and } v_{\geq \lambda+1} \geq v'_{\geq \lambda+1}]\}.
\]
It is a direct check that $U_\alpha$ is well-ordered and monotone, when ordered lexicographically.
This appendix is devoted to the proof of the following theorem.

\begin{thm}
    The graph $U_{<\alpha} \caprodPow{\alpha} \kappa$ is $\kappa$-universal for $\MaxParity_\alpha$.
\end{thm}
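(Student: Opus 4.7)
My approach is to show that $U_{<\alpha}$ is almost $(\kappa, \MaxParity_\alpha)$\=/universal in the sense of Lemma~\ref{lemma:almost-universal}; the $\kappa$\=/universality of $U_{<\alpha} \caprodPow{\alpha} \kappa$ then follows. First I would dispatch the routine checks: $U_{<\alpha}$ is well\=/ordered under the lexicographic order (with largest coordinate most significant), because the finite\=/support condition reduces any strictly decreasing chain to one living in some $\kappa^n$; and monotonicity of the edge relation is a straightforward case analysis on the parity of the label.

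Next I would show that $U_{<\alpha}$ satisfies $\MaxParity_\alpha$. Given an infinite path $v_0 \re{\lambda_0} v_1 \re{\lambda_1} \cdots$, I would first observe that the maximal nonzero coordinate $N_i$ of $v_i$ is weakly decreasing along the path, which forces the labels to be bounded, so that $\ell := \limsup_i \lambda_i < \alpha$. Toward a contradiction, assume $\ell$ is even. A case analysis on the parity of $\lambda_i$ shows that $v_i[{\geq \ell}]$ is lexicographically non-increasing from some index $i_0$ on, and strictly decreases whenever $\lambda_i = \ell$; the subtle case when $\ell$ is a non\=/attained limit ordinal is handled using the clause ``$v_{\geq \lambda+1} > 0$'' on odd edges, which prevents an unbounded sequence of odd labels without a compensating decrease in a strictly higher even coordinate.

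The crux is almost\=/universality, which I would establish by transfinite induction on $\alpha$, mirroring the inductive reasoning of Theorem~\ref{thm:main_max_lex}. For the successor step $\alpha = \alpha' + 1$, I would exhibit a morphism $\varphi : U_{<\alpha'} \rtimes U_{\alpha'} \to U_{<\alpha}$, where $U_{\alpha'}$ is $\loopC{\alpha'}$ if $\alpha'$ is odd and $\bullet \caprodPow{\alpha'} \kappa$ if $\alpha'$ is even; combining $\varphi$ with Theorem~\ref{thm:universality_finite_lexico} and Lemma~\ref{lem:max_prod_non_limit} transfers almost\=/universality from $\alpha'$ to $\alpha$. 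For the limit step, Lemma~\ref{lem:max_prod_limit} rewrites $\MaxParity_\alpha$ as the colour\=/increasing union $\bigcup_{\lambda < \alpha} \MaxParity_\lambda$; the argument of Theorem~\ref{thm:weak_kopczynski} then produces, for any $G$ of size $<\kappa$ satisfying $\MaxParity_\alpha$, a vertex $v$ such that $G[v]$ uses only colours in $C_{<\lambda}$ for some $\lambda < \alpha$. The induction hypothesis gives $G[v] \to U_{<\lambda}$, and the zero\=/padding embedding $U_{<\lambda} \hookrightarrow U_{<\alpha}$ closes the step.

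The main obstacle is the successor step: one must verify that the coordinate\=/wise map interleaving the existing coordinates of $U_{<\alpha'}$ with the new coordinate arising from $U_{\alpha'}$ correctly translates the abstract edge conditions of the lex product $\rtimes$ into the two edge conditions of $U_{<\alpha}$, particularly the clause ``$v_{\geq \lambda+1} > 0$'' for odd edges, which must be traced back to the non\=/triviality inherited from the $\bullet \caprodPow{\alpha'} \kappa$ factor of the product.
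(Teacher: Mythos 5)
Your overall architecture is the same as the paper's: show that $U_{<\alpha}$ is almost $(\kappa,\MaxParity_\alpha)$\-/universal and invoke Lemma~\ref{lemma:almost-universal}, with a transfinite induction whose successor steps go through Theorem~\ref{thm:universality_finite_lexico} and Lemma~\ref{lem:max_prod_non_limit} (using $\loopC{$\alpha'$}$ and $\bullet \caprodPow{\alpha'} \kappa$ as factors) and whose limit steps go through Lemma~\ref{lem:max_prod_limit} and the colour\-/increasing\-/union argument of Theorem~\ref{thm:weak_kopczynski}. Your boundedness argument ($N_i$ weakly decreasing, hence $\limsup_i \lambda_i < \alpha$) also matches the paper's treatment of the sub\-/case $\limsup = \alpha$. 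The one place where you genuinely depart from the paper is that you prove satisfiability of $\MaxParity_\alpha$ by $U_{<\alpha}$ directly, by a path analysis valid for all $\alpha$ at once, rather than inductively via the identification of $U_{<\alpha'+1}$ with a finite lexicographic product; and this is exactly where your argument has a gap.

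The mechanism you invoke for the case where $\ell = \limsup_i \lambda_i$ is a limit ordinal not attained by the labels does not work. Take $\alpha = \omega+1$ and let $v \in \kappa^{(\omega+1)_\even}$ have $v_\omega = 1$ and all other coordinates $0$. For every odd $\lambda < \omega$ we have $v_{\geq \lambda+1} > 0$ (witnessed by the coordinate $\omega$) and trivially $v_{\geq \lambda+1} \geq v_{\geq \lambda+1}$, so $v \re{\lambda} v$ is an edge of $U_{<\omega+1}$; hence $v \re{1} v \re{3} v \re{5} \cdots$ is an infinite path whose label has $\limsup$ equal to $\omega$, which is even. So the clause ``$v_{\geq\lambda+1}>0$'' does \emph{not} force the ``compensating decrease in a strictly higher even coordinate'' that you rely on: the nonzero witness may sit at a coordinate $\geq \ell$ inside the eventually\-/constant part $v_{\geq\ell}$ and never move. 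For the argument to close, one needs the witness to lie in $[\lambda+1,\ell)$, which the edge definition does not guarantee. You should be aware that this is a genuine difficulty with the graph as written and not merely a presentational slip on your side --- the same example shows that $U_{<\omega+1}$ has strictly more odd\-/coloured edges than $U_{<\omega} \rtimes (\bullet \caprodPow{\omega} \kappa)$, so the successor\-/step identification must also be handled with care; defining the graph by transfinite iteration of $\rtimes$ (so that satisfiability comes for free from Theorem~\ref{thm:universality_finite_lexico} at successors) is the safer route. As it stands, the step ``$U_{<\alpha}$ satisfies $\MaxParity_\alpha$'' is not established by your sketch for $\alpha > \omega$.
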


\begin{proof}
    We proceed by induction over $\alpha$, call $P(\alpha)$ the assertion ``$U_{<\alpha}$ is almost $(\kappa,\MaxParity_\alpha)$-universal.''
    From there, Lemma~\ref{lemma:almost-universal} concludes.

\paragraph{Zero case.} The graph $U_0$ has a single vertex with no edge; therefore it satisfies $\MaxParity_0=\varnothing$.
Now, the only graphs satisfying $\MaxParity_0$ are graphs with no infinite paths, and such graphs have sinks; in other words, in any graph $G$ satisfying $\MaxParity_0$, there is a vertex $v$ such that $G[v] \mapsto U_0$.

\paragraph{Even successor case.} Assume $\alpha$ is even and $P(\alpha)$ holds.
We aim to prove $P(\alpha+1)$.
Let $U_\alpha = \bullet \caprodPow{\alpha} \kappa$, so that $U_\alpha$ is $(\kappa, W_\alpha)$-universal.
Let us prove that 
\[
    U_{<\alpha+1} = U_{<\alpha} \rtimes U_\alpha.
\]
Since $\alpha$ is even, it is an easy check that the two vertex sets coincide with $\kappa^{\alpha_\even +1}$.
Now $v \re \lambda v' \in E(U_{<\alpha +1})$ if and only if $v_{\alpha}>v'_{\alpha}$ or $[v_{\alpha}=v'_{\alpha} \text{ and } v_{<\alpha} \re \lambda v'_{<\alpha}]$ if and only if $v \re \lambda v' \in E(U_{<\alpha} \rtimes U_\alpha)$.
Now $\MaxParity_{\alpha+1} = \MaxParity_\alpha \rtimes W_\alpha$, therefore we conclude by Theorem~\ref{thm:universality_finite_lexico}.
        
\paragraph{Odd successor case.} Assume $\alpha$ is odd and $P(\alpha)$ holds.
We aim to prove $P(\alpha+1)$.
Let $U_\alpha = \loopC{$\alpha$}$ so that $U_\alpha$ is $(\kappa,W_\alpha)$-universal.
Let $U'_{<\alpha+1} = U_{<\alpha} \rtimes U_\alpha$; again thanks to Theorem~\ref{thm:universality_finite_lexico} we know that $U'_{<\alpha+1}$ is $\kappa$-almost universal for $\MaxParity_{\alpha +1}$.
Now observe that $V(U'_{<\alpha+1}) = V(U_{<\alpha}) = \kappa^{\alpha_\even} = \kappa^{(\alpha+1)_\even} = V(U_{\alpha+1})$, and
\[
    E(U'_{\alpha}) = \{v \re \lambda v' \mid [\lambda \text{ even and } v_{\geq \lambda} > v'_{\geq \lambda}] \tor [\lambda \text{ odd} \text{ and } v_{\geq \lambda+1} \geq v'_{\geq \lambda+1}]\}.
\]
Therefore the identity over $\kappa^{\alpha_\even}$ defines a morphism $U_{<\alpha+1} \to U'_{<\alpha+1}$ and in particular, $U_{<\alpha+1}$ satisfies $\MaxParity_\alpha$.
Conversely, the map assigning $v'$ to $v$ where $v'_{\alpha}=v_{\alpha}+1$ and $v'_\lambda=v_{\lambda}$ for $\lambda < \alpha$ defines a morphism $U'_{<\alpha+1} \to U_{<\alpha+1}$, which concludes.

\paragraph{Limit case.} Assume $\alpha$ is a limit and $P(\lambda)$ holds for all $\lambda<\alpha$; we aim to prove $P(\alpha)$.
We first prove that $U_{<\alpha}$ satisfies $\MaxParity_\alpha$.
Take an infinite path $v^0 \re{\lambda_0} v^1 \re{\lambda_1} \dots$ in $U_{<\alpha}$, and assume towards a contradiction that $\lambda=\limsup(\lambda_0\lambda_1 \dots)$ is even.
There are two cases.
\begin{itemize}
    \item If $\lambda < \alpha$. Let $i_0$ be large enough so that all $\lambda_i$'s are $\leq \lambda$ for $i \geq i_0$.
    Then for $i \geq i_0$ we have $v^i_{> \lambda} \geq v^{i+1}_{> \lambda}$, so by well-foundedness there is $i_1$ so that $v^{i}_{> \lambda}$ is the same for all $i \geq i_1$.
    Then for $i \geq i_1$ we have $v^{i}_{\leq \lambda} \re{\lambda_i} v^{i+1}_{\leq \lambda}$ therefore $v^{i_1}_{\leq \lambda} \re{\lambda_{i_1}} v^{i_1+1}_{\leq \lambda} \re{\lambda_{i_1+1}} \dots$ defines a path in $U_{< \lambda+1}$.
    But by induction, $U_{<\lambda+1}$ satisfies $\MaxParity_{\lambda+1} \subseteq \MaxParity_{\alpha}$, so we conclude thanks to prefix-independence.
    \item If $\lambda=\alpha$.
    Let $\mu$ be the maximal element of the support of $v^0$; in particular, $v^0_{\geq \mu+2} =0$.
    By induction we get that for all $i$, $v^i_{\geq \mu+2}=0$ and $\lambda_i < \mu+2$.
    But then $\lambda=\limsup_i \lambda_i < \mu+2 < \alpha$, a contradiction.
\end{itemize}
We now let $G$ be a graph $<\kappa$ satisfying $\MaxParity_\alpha$ and aim to prove that there is $v \in V(G)$ such that $G[v] \to U_{<\alpha}$.
Note that for any $\lambda<\alpha$, we have $U_{<\lambda} \to U_{<\alpha}$ and therefore it suffices to find $v \in V(G)$ such that all priorities in $G[v]$ are $<\lambda$.
Assume that there is no such $v$: for any $v$ and any $\lambda<\alpha$ there is a path in $G$ towards an edge with priority $>\lambda$.
Then (just as in the proof of Theorem~\ref{thm:weak_kopczynski}) we construct a path whose $\limsup$ is $\alpha$ which is a limit (and therefore even), violating the fact that $G$ satisfies $\MaxParity_\alpha$.
\end{proof}

\end{document}